\newcommand{\blnk}{ 
\kern-1bp
\setlength{\unitlength}{10bp}
\begin{picture}(1,1)
\put(0.15,0.05){$\bigcirc$}
\end{picture}
\kern+3bp
}
\newcommand{\lmove}{ 
\setlength{\unitlength}{10bp}
\kern-1bp
\begin{picture}(1,1)
\put(0.27,0.05){$\shortleftarrow$}
\put(0.15,0.05){$\bigcirc$}
\end{picture}
\kern+3bp
}
\newcommand{\turn}{ 
\setlength{\unitlength}{10bp}
\kern-1bp
\begin{picture}(1,1)
\put(0.27,0.05){$\circlearrowleft$}
\put(0.15,0.05){$\bigcirc$}
\end{picture}
\kern+3bp
}
\newcommand{\insi}{ 
\setlength{\unitlength}{10bp}
\kern-1bp
\begin{picture}(1,1)
\put(0.4,0.05){$\circ$}
\put(0.15,0.05){$\bigcirc$}
\end{picture}
\kern+3bp
}
\newcommand{\dead}{ 
\kern-1bp
\setlength{\unitlength}{10bp}
\begin{picture}(1,1)
\put(0.27,0.05){$\times$}
\put(0.15,0.05){$\bigcirc$}
\end{picture}
\kern+3bp
}
\newcommand{\qubit}{ 
\setlength{\unitlength}{10bp}
\kern+1bp
\begin{picture}(1,1)
\linethickness{1bp}
\put(0,-0.2){\line(0,1){1}}
\put(0,-0.2){\line(1,0){1}}
\put(1,0.8){\line(0,-1){1}}
\put(1,0.8){\line(-1,0){1}}
\end{picture}
\kern+1bp
}
\newcommand{\gate}{ 
\setlength{\unitlength}{10bp}
\kern+1bp
\begin{picture}(1,1)
\put(0.1,0){$\blacktriangleright$}
\linethickness{1bp}
\put(0,-0.2){\line(0,1){1}}
\put(0,-0.2){\line(1,0){1}}
\put(1,0.8){\line(0,-1){1}}
\put(1,0.8){\line(-1,0){1}}
\end{picture}
\kern+1bp
}
\newcommand{\rmove}{ 
\setlength{\unitlength}{10bp}
\kern+1bp
\begin{picture}(1,1)
\put(0.1,0){$\vartriangleright$}
\linethickness{1bp}
\put(0,-0.2){\line(0,1){1}}
\put(0,-0.2){\line(1,0){1}}
\put(1,0.8){\line(0,-1){1}}
\put(1,0.8){\line(-1,0){1}}
\end{picture}
\kern+1bp
}
\newcommand{\abet}{ 
\kern+1bp
\setlength{\unitlength}{10bp}
\begin{picture}(1,1)
\put(0.15,-0.05){$\beta$}
\linethickness{1bp}
\put(0,-0.2){\line(0,1){1}}
\put(0,-0.2){\line(1,0){1}}
\put(1,0.8){\line(0,-1){1}}
\put(1,0.8){\line(-1,0){1}}
\end{picture}
\kern+1bp
}
\newcommand{\alpa}{ 
\kern+1bp
\setlength{\unitlength}{10bp}
\begin{picture}(1,1)
\put(0.15,0.05){$\alpha$}
\linethickness{1bp}
\put(0,-0.2){\line(0,1){1}}
\put(0,-0.2){\line(1,0){1}}
\put(1,0.8){\line(0,-1){1}}
\put(1,0.8){\line(-1,0){1}}
\end{picture}
\kern+1bp
}
\newcommand{\lqubit}{ 
\kern+1bp
\setlength{\unitlength}{10bp}
\begin{picture}(1,1)
\put(0.15,0.05){$\scriptstyle{L}$}
\linethickness{1bp}
\put(0,-0.2){\line(0,1){1}}
\put(0,-0.2){\line(1,0){1}}
\put(1,0.8){\line(0,-1){1}}
\put(1,0.8){\line(-1,0){1}}
\end{picture}
\kern+1bp
}
\newcommand{\rqubit}{ 
\kern+1bp
\setlength{\unitlength}{10bp}
\begin{picture}(1,1)
\put(0.15,0.05){$\scriptstyle{R}$}
\linethickness{1bp}
\put(0,-0.2){\line(0,1){1}}
\put(0,-0.2){\line(1,0){1}}
\put(1,0.8){\line(0,-1){1}}
\put(1,0.8){\line(-1,0){1}}
\end{picture}
\kern+1bp
}
\newcommand{\parity}{ 
\setlength{\unitlength}{10bp}
\begin{picture}(0.3,1)
\put(0.15,-0.3){\line(0,1){1.2}}
\end{picture}
}
\newcommand{\bdry}{ 
\setlength{\unitlength}{10bp}
\begin{picture}(0.3,1)
\put(0.07,-0.3){\line(0,1){1.2}}
\put(0.23,-0.3){\line(0,1){1.2}}
\end{picture}
}
\newcommand{\Null}{\operatorname{Null}}
\newtheorem{theorem}{Theorem}[section]
\newtheorem{lemma}[theorem]{Lemma}
\newtheorem{corollary}[theorem]{Corollary}
\newtheorem{definition}[theorem]{Definition}
\newtheorem*{rep@theorem}{\rep@title}
\newcommand{\newreptheorem}[2]{%
	\newenvironment{rep#1}[1]{%
		\def\rep@title{#2 \ref{##1}}%
		\begin{rep@theorem}}%
		{\end{rep@theorem}}}
\newcommand{\comment}[1]{}
\newcommand{\tinyspace}{\mspace{1mu}}
\newcommand{\abs}[1]{\left\lvert\tinyspace #1 \tinyspace\right\rvert}
\newcommand{\norm}[1]{\left\lVert\tinyspace #1 \tinyspace\right\rVert}
\newcommand{\tr}{\operatorname{Tr}}
\newcommand{\trace}{\tr}
\newcommand{\class}[1]{\textup{#1}}
\newcommand{\prob}[1]{\textsc{#1}}
\newcommand{\Span}{\mathrm{Span}}
\newcommand{\ketbra}[2]{\ket{#1}\!\bra{#2}}
\newcommand{\brakett}[2]{\mbox{$\langle #1  | #2 \rangle$}}
\newcommand{\proj}[1]{| #1 \rangle \langle #1 |}
\DeclareMathOperator{\diag}{diag}
\newcommand{\trnorm}[1]{\norm{#1}_{\mathrm {tr}}}  
\newcommand{\snorm}[1]{\norm{#1}_{\mathrm {\infty}}}    
\def\({\left(}
\def\){\right)}
\def\X{\mathcal{X}}
\def\Y{\mathcal{Y}}
\def\W{\mathcal{W}}
\newcommand{\complex}{{\mathbb C}}
\newcommand{\reals}{{\mathbb R}}
\newcommand{\R}{\mathbb{R}}
\newcommand{\cH}{\mathcal{H}}
\newcommand{\Vt}{\widetilde{V}}
\newcommand{\cE}{\mathcal{E}}
\newcommand{\cF}{\mathcal{F}}
\newcommand{\cS}{\mathcal{S}}
\newcommand{\spa}[1]{\mathcal{#1}}
\newcommand{\poly}{\textup{poly}}
\newcommand{\klh}[1][k]{{ \ensuremath{#1}\prob{-LH} }} 
\newcommand{\psihist}{\psi_{\rm hist}}
\newcommand{\hin}{H_{\rm in}}
\newcommand{\hprop}{H_{\rm prop}}
\newcommand{\hout}{H_{\rm out}}
\newcommand{\hstab}{H_{\rm stab}}
\newcommand{\QMA}{\class{QMA}}
\newcommand{\coQMA}{\class{co-QMA}}
\newcommand{\StoqMA}{\class{StoqMA}}
\newcommand{\Poly}{\class{P}}
\newcommand{\NP}{\class{NP}}
\newcommand{\PQMA}{\class{P}^{\class{QMA}[\class{log}]}}
\newcommand{\PNPlog}{\class{P}^{\class{NP}[\class{log}]}}
\newcommand{\PQMApar}{\class{P}^{||\class{QMA}}}
\newcommand{\PNPpar}{\class{P}^{||\class{NP}}}
\newcommand{\BPP}{\class{BPP}}
\newcommand{\Plog}[1]{\class{P}^{\class{#1}[\class{log}]}}
\newcommand{\Ppar}[1]{\class{P}^{||\class{#1}}}
\newcommand{\app}{\prob{APX-SIM}}
\newcommand{\apptwo}{\prob{$\forall$-APX-SIM}}
\newcommand{\hpen}{H_{\rm pen}}
\newcommand{\din}{\Delta_{\rm in}}
\newcommand{\dprop}{\Delta_{\rm prop}}
\newcommand{\dpen}{\Delta_{\rm pen}}
\newcommand{\houti}[1][i]{H_{{\rm out},{#1}}}
\newcommand{\Fapptwo}[1][F]{\mathcal{#1}\prob{-$\forall$-APXSIM}} 
\begin{document}
	
\title{Oracle complexity classes and local measurements on physical Hamiltonians}

\author{
	Sevag Gharibian\footnote{University of Paderborn, Germany. Email: \href{mailto:sevag.gharibian@upb.de}{sevag.gharibian@upb.de}.}
	\and Stephen Piddock\footnote{University of Bristol, UK. Email: \href{mailto:stephen.piddock@bristol.ac.uk}{stephen.piddock@bristol.ac.uk}.}
	\and Justin Yirka\footnote{The University of Texas at Austin, USA. Email: \href{mailto:yirka@utexas.edu}{yirka@utexas.edu}.}
}

\date{September 12, 2019}

\maketitle

\begin{abstract}
	
The canonical problem for the class Quantum Merlin-Arthur (QMA) is that of estimating ground state energies of local Hamiltonians. Perhaps surprisingly, [Ambainis, CCC 2014] showed that the related, but arguably more natural, problem of simulating local measurements on ground states of local Hamiltonians ($\app$) is likely harder than QMA. Indeed, [Ambainis, CCC 2014] showed that $\app$ is $\PQMA$-complete, for $\PQMA$ the class of languages decidable by a $\class{P}$ machine making a logarithmic number of {adaptive} queries to a $\QMA$ oracle. In this work, we show that $\app$ is $\PQMA$-complete even when restricted to more physical Hamiltonians, obtaining as intermediate steps a variety of related complexity-theoretic results.

Specifically, we first give a sequence of results which together yield $\PQMA$-hardness for $\app$ on well-motivated Hamiltonians such as the 2D Heisenberg model:
\begin{itemize}
    \item We show that for $\NP, \StoqMA$, and $\QMA$ oracles, a logarithmic number of adaptive queries is equivalent to polynomially many parallel queries.
    Formally, $\Plog{NP}=\Ppar{NP}$, $\Plog{StoqMA}=\Ppar{StoqMA}$, and $\Plog{QMA}=\Ppar{QMA}$. (The result for $\NP$ was previously shown using a different proof technique.) These equalities simplify the proofs of our subsequent results.
    \item Next, we show that the hardness of $\app$ is preserved under Hamiltonian simulations (\`{a} la [Cubitt, Montanaro, Piddock, 2017]) by studying a seemingly weaker problem, $\apptwo$. As a byproduct, we obtain a full complexity classification of $\app$, showing it is complete for $\Poly, \PNPpar,\Ppar{StoqMA},$ or $\PQMApar$ depending on the Hamiltonians employed.
    \item Leveraging the above, we show that $\app$ is $\PQMA$-complete for any family of Hamiltonians which can efficiently simulate spatially sparse Hamiltonians. This implies $\app$ is $\PQMA$-complete even on physically motivated models such as the 2D Heisenberg model.
\end{itemize}

Our second focus considers 1D systems: We show that $\app$ remains $\PQMA$-complete even for local Hamiltonians on a 1D line of 8-dimensional qudits. This uses a number of ideas from above, along with replacing the ``query Hamiltonian'' of [Ambainis, CCC 2014] with a new ``sifter'' construction.

\end{abstract}

\section{Introduction}

The study of the low-energy states of quantum many-body systems is of fundamental physical interest. Of central focus has been the problem of estimating the ground state energy of a $k$-local Hamiltonian, known as the Local Hamiltonian problem ($\klh$). Shown by Kitaev \cite{KSV02} to be complete for the class Quantum Merlin Arthur (\QMA) (a quantum analogue of $\NP$), $\klh$ has played the role of the canonical QMA-complete problem, just as $k$-SAT is the canonical NP-complete problem. This, in turn, has given rise to the field of Quantum Hamiltonian Complexity (QHC) (see, e.g., \cite{O12,Boo14,GHLS15}), which has since explored the complexity theoretic characterization of computing properties of ground spaces \emph{beyond} estimating ground state energies. Examples have included computing ground state degeneracies~\cite{BFS11,SZ}, minimizing interaction terms yielding frustrated ground spaces~\cite{GK12}, detecting ``energy barriers'' in ground spaces~\cite{GS18,GMV17}, deciding if tensor networks represent physical quantum states~\cite{GLSW15,SMGGSPI18}, Hamiltonian sparsification~\cite{AZ18}, estimating spectral gaps of local Hamiltonians~\cite{A14,CPW15,GY16}, estimating the free energy of 1D systems~\cite{K17}, and the study of ``universal'' Hamiltonian models which can replicate the physics of any other quantum many-body system~\cite{BH17,CMP18}.

\paragraph{Approximate Simulation.} Despite the role of $k$-LH as a ``posterchild'' for Quantum Hamiltonian Complexity, in 2014 Ambainis formalized the arguably even more natural physical problem of simulating local measurements on low-energy states of a local Hamiltonian, denoting it Approximate Simulation ($\app$).
\begin{definition}[$\app(H,A,k,\ell,a,b,\delta)$~\cite{A14}]\label{dfn:apx}
	Given a $k$-local Hamiltonian $H$, an $\ell$-local observable $A$, and real numbers $a$, $b$, and $\delta$ such that $b-a\geq n^{-c}$ and $\delta\geq n^{-c'}$, for $n$ the number of qubits $H$ acts on and $c,c'>0$ some constants, decide:
	\begin{itemize}
		\item If $H$ has a ground state $\ket{\psi}$ satisfying $\bra{\psi}A\ket{\psi}\leq a$, output YES.
		\item If for all $\ket{\psi}$ satisfying $\bra{\psi}H\ket{\psi}\leq \lambda(H)+\delta$, it holds that $\bra{\psi}A\ket{\psi}\geq b$, output NO.
	\end{itemize}
\end{definition}
\noindent \emph{Motivation.} The motivation for $\app$ is as follows: Given a naturally occurring many-body quantum system with time evolution Hamiltonian $H$ (which is typically $k$-local for $k\in O(1)$), we would like to learn something about the quantum state $\ket{\psi}$ the system settles into when cooled to near absolute zero. This setting is where phenomena such as superconductivity and superfluidity manifest themselves; learning something about $\ket{\psi}$ hence potentially allows one to predict and harness such phenomena for, say, materials design. The most ``basic'' experimental approach to learning something about $\ket{\psi}$ is to attempt to prepare a physical copy of $\ket{\psi}$, and then apply a local measurement to extract information from $\ket{\psi}$.
However, given that preparing the ground state $\ket{\psi}$ of an arbitrary Hamiltonian is hard --- it would allow one to solve $\QMA$ problems --- we must wonder whether there is an easier approach. Formally, how hard is $\app$?

Perhaps surprisingly, it turns out that simulating a measurement on the ground state $\ket{\psi}$ is strictly \emph{harder than $\QMA$}. To show this, \cite{A14} proved that $\app$ is $\PQMA$-complete, for $\PQMA$ the class of languages decidable by a $\class{P}$ machine making a logarithmic number of {adaptive} queries to a $\QMA$ oracle. Why $\PQMA$ instead of $\QMA$? Intuitively, this is because $\app$ does \emph{not} include thresholds for the ground state energy as part of the input (in contrast to $\klh$). This specification of $\app$ is well-motivated; typically one does not have an estimate of the ground state energy of $H$, since such an estimate is QMA-hard to compute to begin with. (Note that if the ground state energy thresholds were included in the definition of $\app$, then the complexity of $\app$ would drop to $\QMA$.)\\

\noindent \emph{Brief background on $\PQMA$.} The class $\PQMA$ is likely strictly harder than $\QMA$, since both $\QMA$ and\footnote{To put $\coQMA$ in $\PQMA$, simply use the QMA oracle once and flip its answer using the P machine.} $\coQMA$ are contained in $\PQMA$. Thus, $\QMA\neq\PQMA$ unless $\coQMA\subseteq\QMA$ (which appears unlikely).
Just how much more difficult than QMA is $\PQMA$? Intuitively, the answer is ``slightly more difficult''. Formally, $\QMA\subseteq\PQMA\subseteq\class{PP}$~\cite{GY16} (where $\QMA\subseteq \class{A}_0\class{PP}\subseteq\class{PP}$ was known~\cite{KW00,Vy03,MW05} prior to~\cite{GY16}; note the latter containment is strict unless the Polynomial-Time Hierarchy collapses~\cite{Vy03}).

From a computer science perspective, there is an interesting relationship between $\app$ and classical constraint satisfaction problems (CSPs). The $\QMA$-complete problem $\klh$ is a quantum analogue of the $\NP$-complete problem MAX-$k$-SAT, in that the energy of a state is minimized by simultaneously satisfying as many of the $k$-local terms as possible. Classically, one might be asked whether the solution to a MAX-$k$-SAT instance satisfies some easily verifiable property, such as whether the solution has even Hamming weight; such a problem is $\PNPlog$-complete (see, e.g.,~\cite{Wag88} for a survey). $\app$ is a quantum analogue to these problems, in which we ask whether an optimal solution (the ground state) satisfies some property (expectation bounds for a specified measurement), and $\app$ is analogously $\PQMA$-complete.\\

\noindent \emph{High level direction in this work.} That $\app$ is such a natural problem arguably demands that we study its hardness given natural settings. In this regard, the original $\PQMA$-completeness result~\cite{A14} was for simulating $O(\log n)$-local observables and $O(\log n)$-local Hamiltonians, where $n$ is the number of qubits the Hamiltonian acts on. From a physical perspective, one wishes to reduce the necessary complexity, such as to $O(1)$-local observables and Hamiltonians. Hardness under this restriction was subsequently achieved~\cite{GY16}, for single-qubit observables and $5$-local Hamiltonians, by combining the ``query Hamiltonian'' construction of Ambainis~\cite{A14} with the circuit-to-Hamiltonian construction of Kitaev~\cite{KSV02}. Even arbitrary $O(1)$-local Hamiltonians, however, may be considered rather artificial in contrast to naturally occurring systems. Ideally, one wishes to make  statements along the lines of ``simulating measurements on a physical model such as the quantum Heisenberg model on a 2D lattice is harder than QMA'', or ``simulating measurements on a 1D local Hamiltonian is harder than QMA''. This is what we achieve in the current paper. Interestingly, to attain this goal, we first take a complexity theoretic turn into the world of parallel versus adaptive oracle queries.

\subsection{Parallel versus adaptive queries}\label{ssec:parVsAdap}

A natural question for oracle complexity classes is how the power of the class changes as access to the oracle is varied. In the early 1990's, it was shown~\cite{BH91,H89,Bei91} that a polynomial number of \emph{parallel} or \emph{non-adaptive} queries to an NP oracle are equivalent in power to a logarithmic number of \emph{adaptive} queries. Formally, letting $\PNPpar$ be the class of languages decidable by a $\Poly$ machine with access to polynomially many parallel queries to an $\NP$ oracle, it holds that $\PNPpar=\PNPlog$.

The direction $\Plog{C}\subseteq\Ppar{C}$ was in fact shown by \cite{Bei91} for all classes $\class{C}$. Briefly, a $\class{P}$ machine making a logarithmic number of adaptive queries to a $\class{C}$ oracle has the potential to make only polynomially many different queries, each of which can be computed beforehand in polynomial time by simulating the machine's action given each possible sequence of query answers. The values for all such queries can simply be queried in parallel by the $\Poly^{||\class{C}}$ machine.
To show the reverse direction, that $\PNPpar\subseteq\PNPlog$, one first performs binary search to determine the total number of YES queries. Then, ask whether
there exists at least that number of (provably) YES queries such that setting the corresponding query answers to YES causes the original $\Poly$ machine to accept.

We begin by considering an analogue of this question for $\PQMA$ versus $\PQMApar$ (defined as $\PNPpar$ but with a QMA oracle). The direction $\PQMA\subseteq \PQMApar$ proceeds as described above, but, in contrast, the classical technique for showing the reverse direction does not appear to carry over to the quantum setting, specifically to the setting of promise problems. As explored in~\cite{GY16}, oracles corresponding to classes of promise problems like $\QMA$ may receive queries which violate their promise (such as an instance of $\klh$ with the ground state energy within the promise gap). By definition~\cite{G06}, in such cases the oracle can respond arbitrarily, even changing its answer given repeated queries. Because of the possibility of invalid queries by the $\PQMApar$ machine, the technique of binary search fails.
To show $\PQMApar\subseteq \PQMA$, we take a different approach by instead showing a \emph{hardness} result. Specifically, we use a modification of the $\PQMA$-hardness construction of \cite{A14}, for which we require the locality improvements by~\cite{GY16}, to show that $\app$ is $\PQMApar$-hard. Combining with the known fact that $\app\in\PQMA$~\cite{A14} then yields the desired containment.

This approach includes two benefits:
\begin{itemize}
    \item First, the use of parallel, rather than adaptive, queries simplifies the ``query Hamiltonian'' construction of~\cite{A14} significantly, which we later exploit to prove hardness results about physical Hamiltonians (Theorem~\ref{thm:simofsparsehard}) and 1D Hamiltonians (Theorem~\ref{thm:apxsim-1dline-complete}).
This can also give a simpler proof of Ambainis's original claim that $\app$ is $\PQMA$-hard. Indeed, we generalize this idea to give the statement:
\begin{theorem}
	\label{thm:APXcompleteness}
	Let $\class{C}$ be a class of languages or promise problems. Let $\cF$ be a family of Hamiltonians for which \klh is $\class{C}$-complete under poly-time many-one reductions for all $k \ge 2$. Suppose $\cF$ is closed under positive linear combination of Hamiltonians, and that if $\{H_i\}_{i=1}^m\subset \cF$, then $H_{\text{cl}}+\sum_{i=1}^m \proj{1}_i\otimes H_i \in \cF$, where $H_{\text{cl}}$ is any classical Hamiltonian (i.e. diagonal in the standard basis). Then,
\[
    \Plog{C}=\Ppar{C},
\]
and $\app$ is $\Plog{C}$-complete when restricted to $k$-local Hamiltonians and observables from $\cF$.
\end{theorem}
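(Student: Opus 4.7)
The plan is to prove both parts simultaneously by showing that $\app$ restricted to $\cF$ is (i) contained in $\Plog{C}$ and (ii) hard for $\Ppar{C}$. Combined with the generic inclusion $\Plog{C}\subseteq\Ppar{C}$ from Beigel's adaptive-to-parallel simulation (recalled in Section~\ref{ssec:parVsAdap}), the chain
\[
\Ppar{C}\;\leq_p\;\app\;\in\;\Plog{C}\;\subseteq\;\Ppar{C}
\]
closes, simultaneously giving $\Plog{C}=\Ppar{C}$ and $\Plog{C}$-completeness of $\app$ on $\cF$.

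For (i), I would adapt the standard Ambainis argument using that $\klh$ on $\cF$ is $\class{C}$-complete: $O(\log(1/\delta))$ adaptive binary-search queries of the form ``is $\lambda(H)\le x$?'' pin down $\lambda(H)$ to additive precision $\delta/2$, and a final $\klh$-query on $H+\eta A$ (which lies in $\cF$ by closure under positive linear combinations, since both $H$ and $A$ are from $\cF$) for a suitable small $\eta>0$ decides whether some low-energy state has $\langle A\rangle$ close to $a$ or to $b$. The total is $O(\log n)$ adaptive queries to a $\class{C}$-oracle, so $\app\in\Plog{C}$.

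The real content is (ii). Let $L\in\Ppar{C}$ be decided by a $\Poly$ machine $M$ which on input $x$ emits parallel queries $y_1,\dots,y_m$ and then deterministically computes $M(x,b_1,\dots,b_m)$ from the answers $b_i\in\{0,1\}$. By $\class{C}$-completeness of $2$-$\klh$ on $\cF$, reduce each $y_i$ to a $2$-local $H_i\in\cF$; after absorbing affine rescalings into a classical Hamiltonian I may assume $\lambda(H_i)\le 0$ if $b_i=1$ and $\lambda(H_i)\ge 1$ if $b_i=0$. Introducing register qubits $r_1,\dots,r_m$, classical ``tableau'' ancillas, and a designated output qubit $o$, I form the query Hamiltonian
\[
H_{\text{query}}\;=\;H_{\text{cl}}\;+\;\sum_{i=1}^m \proj{1}_{r_i}\otimes H_i,
\]
where $H_{\text{cl}}$ is a classical Hamiltonian encoding (a) the bias $-\tfrac{1}{2}\sum_i r_i$, and (b) a classical tableau for $M$'s polynomial-time post-query computation on $(x,r_1,\dots,r_m)$ with the final bit written to $o$. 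Both closure assumptions on $\cF$ are used: the controlled-$H_i$ template combined with the classical $H_{\text{cl}}$ puts $H_{\text{query}}\in\cF$. The observable is the $1$-local $A=I-\proj{1}_o$. Since the $r_i$ and tableau ancillas are classical, $H_{\text{query}}$ decomposes over assignments $r\in\{0,1\}^m$: coordinate $i$ contributes $\min(0,\lambda(H_i)-\tfrac{1}{2})$, uniquely selecting $r_i^\star=b_i$ with an energy penalty of at least $\tfrac{1}{2}$ per flip, and the tableau then forces $o=M(x,b)$. Hence $x\in L$ iff the ground state has $\langle A\rangle=0$; taking $a=1/3$, $b=2/3$, and inverse-polynomial $\delta$ completes the reduction.

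The main obstacle is verifying that the $\app$ promise gap survives: any state within energy $\delta$ of $\lambda(H_{\text{query}})$ must have its register and tableau qubits close enough to the correct classical configuration that $\langle A\rangle\in[0,a]\cup[b,1]$. A standard perturbative calculation handles this, provided the bias and tableau penalties in $H_{\text{cl}}$ are scaled to dominate $\sum_i\lVert H_i\rVert$ by an inverse-polynomial margin. The two closure assumptions placed on $\cF$ are precisely those needed for this parallel-query simplification of the Ambainis--Gharibian--Yirka ``query Hamiltonian''~\cite{A14,GY16} to be realized inside $\cF$.
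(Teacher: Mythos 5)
Your proposal takes essentially the same route as the paper: Beigel's generic inclusion for $\Plog{C}\subseteq\Ppar{C}$, Ambainis-style binary search plus one final $\klh$-query on a positive combination $H+\eta A\in\cF$ for containment (the paper's Lemma~\ref{lem:APXinPXlog}), and, for $\Ppar{C}$-hardness, a parallel query Hamiltonian of the form $H_{\text{cl}}+\sum_i\proj{1}_i\otimes H_i$ glued to a Cook--Levin tableau (the paper's Lemma~\ref{lem:APXisPparXhard}), with the two closure hypotheses used exactly as you identify; your bias $-\tfrac12\sum_i r_i$ on the $\ketbra{1}{1}$ branch is affinely equivalent to the paper's $\tfrac{a_i+b_i}{2}$ weight on the $\ketbra{0}{0}$ branch.
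One small correction to your closing paragraph: you should not scale the bias up against $\sum_i\|H_i\|$ — doing so would force all $r_i=1$ and destroy the selection of correct query answers; the paper instead exploits that the tableau and the query Hamiltonian commute (both diagonal on the shared register), scaling only the tableau so its spectral gap dominates $\lambda(H_2)+\epsilon$, which replaces any "perturbative calculation" with a direct block decomposition and yields hardness even for $\apptwo$, i.e.\ for every $\delta$-low-energy state and not just the ground state.
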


\noindent
(The reason for the form of the expression $H_{\text{cl}}+\sum_{i=1}^m \proj{1}_i\otimes H_i$ in Theorem~\ref{thm:APXcompleteness} will become clear as we introduce the Hamiltonian constructions we use. In short, the expression suffices to encode our construction while still belonging to several interesting families $\cF$.)
Applying that $\klh$ is $\NP$-complete, $\StoqMA$-complete, and $\QMA$-complete when restricted to the families of classical, stoquastic, and arbitrary $k$-local Hamiltonians, respectively~\cite{CM16}, Theorem~\ref{thm:APXcompleteness} yields:
\begin{corollary}
	\label{cor:complexityequalities}
	$\Plog{NP}=\Ppar{NP},
	\Plog{StoqMA}=\Ppar{StoqMA}$,
	and $\Plog{QMA}=\Ppar{QMA}$.
\end{corollary}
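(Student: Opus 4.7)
The plan is to instantiate Theorem~\ref{thm:APXcompleteness} three times, taking $\cF$ to be, respectively, the family of classical (diagonal in the computational basis), stoquastic (all off-diagonal entries non-positive in the computational basis), and arbitrary $k$-local Hamiltonians. Cubitt and Montanaro's classification~\cite{CM16} supplies the needed completeness input: $\klh$ restricted to each of these three families is $\NP$-, $\StoqMA$-, and $\QMA$-complete under polynomial-time many-one reductions, and this holds for every $k\ge 2$, matching the hypothesis of Theorem~\ref{thm:APXcompleteness}.

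What remains is to verify the two closure conditions for each family. First, every family is closed under positive linear combinations: diagonality is preserved because a sum of diagonal matrices is diagonal; stoquasticity is preserved because scaling non-positive off-diagonal entries by positive weights and summing keeps them non-positive; and the class of arbitrary Hermitian operators is trivially closed. Second, one must check that $H_{\text{cl}}+\sum_{i=1}^{m}\proj{1}_i\otimes H_i$ lies in $\cF$ whenever $H_{\text{cl}}$ is classical and each $H_i\in\cF$. Since $\proj{1}=\diag(0,1)$ is diagonal with nonnegative entries, the tensor product $\proj{1}_i\otimes H_i$ inherits diagonality from $H_i$ (classical case) and inherits non-positive off-diagonal entries from $H_i$ (stoquastic case), and is trivially Hermitian (general case). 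Adding $H_{\text{cl}}$, which lies in all three families, preserves the relevant structural property.

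With both hypotheses verified, Theorem~\ref{thm:APXcompleteness} immediately yields the three claimed equalities. The only mild wrinkle — and not really an obstacle — is that the construction hidden inside Theorem~\ref{thm:APXcompleteness} inflates the locality by a constant (since it tensors a $1$-local projector with $k$-local terms), which is precisely why the hypothesis of that theorem is phrased for all $k\ge 2$ rather than some fixed $k$. As a free byproduct of this instantiation, one also obtains that $\app$ is $\Plog{NP}$-, $\Plog{StoqMA}$-, and $\Plog{QMA}$-complete when restricted to the corresponding Hamiltonian families, which will be needed in later sections.
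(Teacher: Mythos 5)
Your proposal is correct and follows the same route as the paper: instantiate Theorem~\ref{thm:APXcompleteness} with the families of classical, stoquastic, and arbitrary $k$-local Hamiltonians and invoke the $\klh$ completeness results of \cite{CM16}. The extra care you take in explicitly verifying the two closure conditions of the theorem's hypothesis (positive linear combinations, and the $H_{\text{cl}}+\sum_i \proj{1}_i\otimes H_i$ form) is accurate and fills in detail the paper leaves implicit.
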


\item Second, we base our reduction on the Cook-Levin theorem~\cite{C72,L73}, as opposed to Kitaev's circuit-to-Hamiltonian construction~\cite{KSV02} as in~\cite{GY16}. This allows us to obtain a \emph{constant} promise gap for the observable
\footnote{The constant gap is only for the input thresholds $a,b$ for the expectation value of the observable $A$. The required ``low-energy gap'' defined by the parameter $\delta$ continues to potentially scale as inverse polynomial, i.e. $\delta \geq 1/\poly$, and we note that the spectral gap of the Hamiltonian $H$ may be arbitrarily small in our constructions unless otherwise noted. Because the improved gap corresponds only to the observable, it is unclear how to apply this result to resolve questions concerning Hamiltonians with improved promise gaps, e.g. the Quantum PCP Conjecture. (As a general note, it is worth stressing here that the Quantum PCP conjecture deals with constant promise gaps, not constant spectral gaps of the Hamiltonian.)}
$A$'s threshold values (i.e. $b-a\geq \Omega(1)$, as opposed to $b-a\geq1/\poly$), even when $\|A\|=O(1)$. Further, because the core of this construction is already spatially sparse, it additionally eases proving hardness results about physical Hamiltonians (Theorem~\ref{thm:simofsparsehard}).
\end{itemize}

\subsection{The complexity of \app\ for physically motivated Hamiltonians}\label{sscn:results2}

With the simplifications that moving to parallel queries affords us (i.e. working with $\Ppar{QMA}$ versus $\Plog{QMA}$), we proceed to study $\Ppar{QMA}$-hardness for physically motivated Hamiltonians. This requires a shift of focus to \emph{simulations}, in the sense of~\cite{CMP18}, i.e. analog Hamiltonian simulations. 

Recall that Kitaev originally proved $\QMA$-hardness of $\klh$ for 5-local Hamiltonians \cite{KSV02}; this was brought down to $2$-local Hamiltonians via perturbation theory techniques~\cite{KR03,KKR06}. Since then, there has been a large body of work (e.g.~\cite{OT08,BDL11,CM16,BH17,PM17,PM18}) showing complexity theoretic hardness results for ever simpler systems, much of which uses perturbative gadgets to construct Hamiltonians which have approximately the same ground state energy as a Hamiltonian of an apparently more complicated form. Here, we wish to enable a similarly large number of results for the problem $\app$ by using the same perturbative gadget constructions and ideas of analogue simulation.

In \cite{CMP18}, the authors define a strong notion of simulation which approximately preserves almost all the important properties of a Hamiltonian, including the properties important for the problem $\klh$, and they observe that the perturbative gadget constructions used in the $\klh$ problem literature are examples of this definition of simulation. They go on to show that there exist simple families of Hamiltonians (such as the 2-qubit Heisenberg interaction) which are \emph{universal} Hamiltonians, in the sense that they can simulate all $O(1)$-local Hamiltonians efficiently.

\paragraph{How do simulations affect the complexity of \app?} Ideally, we would like to show that efficient simulations lead to reductions between classes of Hamiltonians for the problem $\app$. However, this is apparently difficult, as the definition of $\app$ is not robust to small perturbations in the eigenvalues of the system.
We instead consider a closely related, seemingly easier problem which we call $\apptwo$.
\begin{definition}[$\apptwo (H,A,k,\ell,a,b,\delta)$]	
	\label{dfn:app2}
	Given a $k$-local Hamiltonian $H$, an $\ell$-local observable $A$, and real numbers $a,b$, and $\delta$ such that satisfy $b-a \geq n^{-c}$ and $\delta \geq n^{-c'}$, for $n$ the number of qubits $H$ acts on and $c,c'>0$ some constants, decide:
	\begin{itemize}
		\item If for all $\ket{\psi}$ satisfying $\bra{\psi}H\ket{\psi}\le \lambda(H)+\delta$, it holds that $\bra{\psi}A\ket{\psi} \le a$, then output YES.
		\item If for all $\ket{\psi}$ satisfying $\bra{\psi}H\ket{\psi}\le \lambda(H)+\delta$, it holds that $\bra{\psi}A\ket{\psi} \ge b$, then output NO.
	\end{itemize}
\end{definition}
\noindent Above, we have a stronger promise in the YES case than in $\app$: namely, \emph{all} low-energy states $\ket{\psi}$ are promised to satisfy $\bra{\psi}A\ket{\psi} \le a$, as opposed to just a single ground state. Thus, $\apptwo$ is easier than $\app$, in that $\apptwo$ reduces to $\app$.
(The reduction is trivial, in that a valid instance of $\apptwo$ is already a valid instance of $\app$, with no need for modification.)
We conclude that $\apptwo$ is contained in $\PQMA$.
Furthermore, the proof of Theorem~\ref{thm:APXcompleteness} is actually sufficient to show that $\apptwo$ is $\Ppar{C}$-complete (when restricted to the corresponding family of Hamiltonians for arbitrary class $\class{C}$).

Our second result, Lemma~\ref{lem:simulations=reductions} in Section~\ref{sec:simulations}, is to prove that efficient simulations correspond to reductions between instances of $\apptwo$.
As a byproduct, we combine this result with Theorem~\ref{thm:APXcompleteness} and the universality classifications from \cite{CMP18} (cf. Corollary~\ref{cor:complexityequalities}) in order to obtain complexity classifications for the original $\app$ problem restricted to several families of Hamiltonians:

\begin{theorem}
	\label{thm:S-APXSIM}
	Let $\mathcal{S}$ be an arbitrary fixed subset of Hermitian matrices on at most 2 qubits. Then the $\app$ problem, restricted to Hamiltonians $H$ and measurements $A$ given as a linear combination of terms from $\cS$, is
	\begin{enumerate}
		\item $\class{P}$-complete, if every matrix in $\mathcal{S}$ is 1-local;
		\item $\Plog{NP}$-complete, if $\cS$ does not satisfy the previous condition and there exists $U \in SU(2)$ such that $U$ diagonalizes all 1-qubit matrices in $\mathcal{S}$ and $U^{\otimes 2}$ diagonalizes all 2-qubit matrices in $\mathcal{S}$;
		\item $\Plog{StoqMA}$-complete, if $\cS$ does not satisfy the previous condition and there exists $U \in SU(2)$ such that, for each 2-qubit matrix $H_i \in \mathcal{S}$, $U^{\otimes 2} H_i (U^{\dag})^{\otimes 2} = \alpha_i Z^{\otimes 2} + A_iI + IB_i$, where $\alpha_i \in \R$ and $A_i$, $B_i$ are arbitrary single-qubit Hermitian matrices;
		\item $\Plog{QMA}$-complete, otherwise.
	\end{enumerate}
\end{theorem}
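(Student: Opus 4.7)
The plan is to obtain the four-way classification by matching it to the universality classification of [CMP18], using Lemma~\ref{lem:simulations=reductions} to transport hardness along simulations and Theorem~\ref{thm:APXcompleteness} to supply the initial hardness. Each of the four cases is treated by exhibiting a containment (upper bound) and a hardness reduction (lower bound). The four cases of $\cS$ are precisely those identified in \cite{CMP18}, where in cases (2)--(4) the family $\cS$ is universal for (respectively) classical, stoquastic, and arbitrary 2-local Hamiltonians, while in case (1) the terms of $\cS$ commute and act as diagonal operators in a single-qubit product basis.

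For the upper bounds, case (1) is trivial: since every term of $H$ and $A$ is a sum of 1-local operators, after a global single-qubit rotation both become diagonal in the computational basis, so a ground state is a computational basis string and one can evaluate $\bra{\psi}A\ket{\psi}$ in polynomial time, placing $\app$ in $\Poly$. For cases (2)--(4), $\klh$ on $\cS$-Hamiltonians is in $\NP$, $\StoqMA$, or $\QMA$ respectively by \cite{CM16}, so the standard reduction (binary-search the ground state energy $\lambda(H)$ to precision $\delta/3$ using $O(\log n)$ adaptive queries, then ask one final question of the form ``does there exist a state $\ket{\psi}$ with $\bra{\psi}H\ket{\psi}\le \lambda(H)+\delta$ and $\bra{\psi}A\ket{\psi}\le(a+b)/2$'') places $\app$ in $\Plog{C}$ for the corresponding $\class{C}$.

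For the lower bounds in cases (2)--(4), the plan is to chain three reductions. First, by Theorem~\ref{thm:APXcompleteness} together with Corollary~\ref{cor:complexityequalities}, $\apptwo$ is $\Plog{C}$-hard when the Hamiltonian is drawn from the broad family $\cF_\class{C}$ of classical, stoquastic, or all 2-local Hamiltonians. Second, by the universality results of \cite{CMP18}, $\cS$ efficiently simulates $\cF_\class{C}$ in the sense of analog Hamiltonian simulation. Third, applying Lemma~\ref{lem:simulations=reductions}, this simulation converts an instance $(H',A',a,b,\delta)$ of $\apptwo$ over $\cF_\class{C}$ into an equivalent instance $(H,A,a',b',\delta')$ over $\cS$ with inverse-polynomial gap preserved. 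Finally, the trivial reduction $\apptwo\le \app$ (any valid $\apptwo$ instance is also a valid $\app$ instance) yields $\Plog{C}$-hardness of $\app$ restricted to $\cS$.

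The main obstacle is verifying that the chain composes cleanly at the promise level. In particular, one must check that the simulations provided by \cite{CMP18} preserve both the energy promise gap $\delta$ (i.e.\ approximate enough of the low-energy spectrum) and that the simulation map on observables transforms an $\ell$-local $A$ into an observable $A'$ on $\cS$ whose expectation values in the simulator's low-energy subspace track those of $A$ in the simulated system, up to an additive error much smaller than $b-a$. This is exactly what Lemma~\ref{lem:simulations=reductions} is designed to give for $\apptwo$; the reason we must go through $\apptwo$ and not $\app$ directly is that the stronger YES promise of $\apptwo$ (bounds on \emph{every} low-energy state) is robust to the small perturbations introduced by simulation, whereas $\app$'s single-ground-state promise is not. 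Once this robustness is in place, the matching upper and lower bounds for each of the four cases close the classification.
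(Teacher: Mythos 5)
Your overall architecture matches the paper's---contain via a binary-search argument using that $\klh$ restricted to $\cS$ lies in $\class{C}$, and prove hardness by combining the $\Ppar{C}$-hardness of $\apptwo$ on the broad families (classical/stoquastic/arbitrary 2-local) with Lemma~\ref{lem:simulations=reductions} and the universality classification of~\cite{CMP18}. However, there are two genuine gaps.

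The more serious one concerns the observable. Theorem~\ref{thm:S-APXSIM} requires \emph{both} $H$ \emph{and the measurement} $A$ to be a linear combination of terms from $\cS$. Lemma~\ref{lem:simulations=reductions} hands you a simulating observable $A'=V_S A_S V_S^{\dagger}$ whose expectations track those of $A$, and it guarantees $A'$ is $O(1)$-local --- but it says nothing about $A'$ being in the span of $\cS$. You remark that the lemma ``transforms $A$ into an observable $A'$ on $\cS$,'' but that property is not automatic and is in fact the crux of the paper's hardness argument. The paper explicitly flags this (``The only obstacle to achieving our current claim is that we also require $A'$ to be chosen as a linear combination of terms from $\cS$ and $I$'') and spends the remainder of the proof on it: first Observation~(*) weakens the conclusion of Lemma~\ref{lem:APXisPparXhard} so that any classical $2$-local observable with two distinct eigenvalues suffices in the unsimulated instance, then each of cases 2, 3, 4 is handled by unpacking the specific encoding isometry $V$ from the corresponding $\cite{CMP18}$ simulation (appending ancillas for the classical case, the $\ket{0}\mapsto\ket{0011}$ encoding for the stoquastic case, the singlet encoding with $h_{13}=(XX+YY)_{13}$ or $(XX+YY+ZZ)_{13}$ for the universal case) and exhibiting a concrete $A'$ built from $\cS$ and $I$ that satisfies $V^{\dagger}A'V=A$. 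Your proposal skips all of this, so it does not actually establish the theorem as stated.

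A smaller issue is in the containment argument. You describe the final query as ``does there exist $\ket{\psi}$ with $\bra{\psi}H\ket{\psi}\le\lambda(H)+\delta$ and $\bra{\psi}A\ket{\psi}\le(a+b)/2$.'' That is Ambainis's original $\QMA$-style query, but for $\class{C}=\NP$ or $\StoqMA$ it is not a $\klh$ instance and does not obviously land inside $\class{C}$. The paper's Lemma~\ref{lem:APXinPXlog} avoids this by folding both constraints into a single $\klh$ query on the Hamiltonian $H'=(b-\mu)H+\delta A$ with suitably shifted thresholds; the key point is that $H'$ is again a non-negative linear combination of $H$ and $A$, hence remains in the span of $\cS$, which is exactly why the hypothesis ``$\klh$ for $\alpha H+\beta A$ is in $\class{C}$'' is the precondition of that lemma and why the observable is required to be built from $\cS$ in the first place.
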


\paragraph{Hardness of simulating local measurements on lattices and spatially sparse systems.} With the previous two main results in hand, we are in a position to show that $\apptwo$ is $\PQMA$-hard even when the Hamiltonian is restricted to a spatially sparse interaction graph (in the sense of \cite{OT08}). This is analogous to the equivalent result for $\klh$ shown in \cite{OT08}, which was crucial in showing that the Local Hamiltonian problem is $\QMA$-complete for Hamiltonians on a 2D square lattice. Formally, by exploiting the previously discussed results about parallel queries (Theorem~\ref{thm:APXcompleteness}) and simulations (Lemma~\ref{lem:simulations=reductions}) and by developing a variant of the hardness construction from Theorem~\ref{thm:APXcompleteness}, we are able to show the following:

\begin{theorem}
	\label{thm:simofsparsehard}
	Let $\mathcal{F}$ be a family of Hamiltonians which can efficiently simulate any spatially sparse Hamiltonian. Then, $\app$ is $\PQMA$-complete even when restricted to a single-qubit observable and a Hamiltonian from the family $\mathcal{F}$.
\end{theorem}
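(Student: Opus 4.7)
The plan is to establish a chain of reductions: first, show that $\apptwo$ is $\Ppar{QMA}$-hard even when restricted to a spatially sparse Hamiltonian with a single-qubit observable; then apply the simulation hypothesis together with Lemma~\ref{lem:simulations=reductions} to obtain an $\apptwo$ instance whose Hamiltonian lies in $\mathcal{F}$; finally, use the trivial reduction from $\apptwo$ to $\app$, together with $\app \in \PQMA$ from~\cite{A14} and Corollary~\ref{cor:complexityequalities}, to conclude $\PQMA$-completeness.

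For the first step, I would build on the Cook-Levin-based ``query Hamiltonian'' construction underlying Theorem~\ref{thm:APXcompleteness}. As noted in Section~\ref{ssec:parVsAdap}, the core of that construction is already spatially sparse, essentially because the Cook-Levin tableau encodes only nearest-neighbor constraints. Any remaining long-range or high-degree terms can be flattened to a bounded-degree planar interaction graph via Oliveira-Terhal subdivision and fork gadgets~\cite{OT08}, while verifying that the observable promise gap $b-a$ and the low-energy gap $\delta$ both remain at least $1/\poly$. The observable, being a fixed Pauli on a single designated qubit, is untouched by these Hamiltonian-side modifications.

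Next, since $\mathcal{F}$ efficiently simulates every spatially sparse Hamiltonian, Lemma~\ref{lem:simulations=reductions} lifts this simulation to a reduction at the level of $\apptwo$, producing an $\apptwo$ instance whose Hamiltonian lies in $\mathcal{F}$. Under the encoding isometry $V$ provided by the simulation, the logical single-qubit observable $A$ gets mapped to $V A V^\dagger$, which in general acts non-trivially on several physical qubits. To recover a truly single-qubit observable on the simulating system, I would attach, before simulation, a readout ancilla together with a spatially sparse ``correlating'' Hamiltonian term, engineered so that after simulation a fixed Pauli on a designated physical qubit faithfully reports the expectation value of $A$ in the low-energy subspace.

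The main obstacle is the last step: balancing parameters in the readout gadget. The correlating term must be strong enough that, within the $\delta$-low-energy window of the post-simulation Hamiltonian, the designated qubit's expectation tracks $\langle A \rangle$ closely enough to preserve the $b-a$ gap, yet weak enough that it neither disturbs the encoded ground space of the simulation nor shrinks $\delta$ below $1/\poly$. The delicate point is to sequence the simulation precision, the correlating coupling strength, and any rescaling provided by the $\mathcal{F}$-simulation so that the final $\apptwo$ promise parameters remain inverse-polynomial and the overall reduction runs in polynomial time.
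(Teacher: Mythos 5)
Your overall architecture matches the paper's: Theorem~\ref{thm:simofsparsehard} is proved by chaining a spatially-sparse hardness lemma (the paper's Lemma~\ref{lem:spatiallysparse}), the simulation-preserves-$\apptwo$-hardness result (Lemma~\ref{lem:simulations=reductions}), and the parallel-vs-adaptive equivalence (Corollary~\ref{cor:complexityequalities}). The literal proof in the paper is exactly that one-sentence chain. Where you diverge is in how you would establish the two non-trivial links, and both divergences introduce unnecessary difficulty.

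For the spatially-sparse hardness step, the paper does \emph{not} use Oliveira--Terhal subdivision and fork gadgets. The obstruction to sparsity is the query Hamiltonian $H_2$ of Equation~(\ref{eqn:Hdef}), where the single answer qubit $\X_i$ interacts with every qubit of $\Y_i$, forming a star. Rather than flattening this star perturbatively, Lemma~\ref{lem:spatiallysparse} replaces each single-qubit $\X_i$ with a register of many qubits spread across the $\Y_i$ lattice, controls each local term of $H_{\Y_i}$ on a \emph{nearby} copy, and adds an exact (non-perturbative) ferromagnetic term $H_3$ that forces all copies in $\X_i$ to agree. The resulting Hamiltonian is block diagonal and, restricted to the all-agree blocks, is literally the Hamiltonian of Lemma~\ref{lem:APXisPparXhard}; no perturbation theory and no change to the observable or the $b-a$, $\delta$ gaps is needed. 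Your OT-gadget route is not obviously wrong, but since OT gadgets only approximately reproduce the low-energy spectrum, you would have to re-cast them as simulations in the sense of Definition~\ref{dfn:specialsim} and invoke Lemma~\ref{lem:simulations=reductions} a second time --- which both complicates the parameter tracking you flag as the obstacle and, via the encoding isometry, already enlarges the observable locality before you ever reach $\mathcal{F}$. The paper's construction is precisely designed to sidestep this.

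The readout-gadget idea is a misdirected effort. You are right that the simulation isometry maps a single-qubit observable $A$ to $V_S A V_S^\dagger$ supported on $O(1)$ physical qudits, not a single qudit. But the paper simply accepts this: the discussion preceding Corollary~\ref{cor:1} says explicitly that observable locality may grow by a constant factor depending on the simulation, and Corollaries~\ref{cor:1} and~\ref{cor:2} have $4$-local and $1$-local observables respectively. So the ``single-qubit observable'' in the statement of Theorem~\ref{thm:simofsparsehard} should be read as referring to the pre-simulation observable produced by Lemma~\ref{lem:spatiallysparse}; the post-simulation observable is $O(1)$-local with the exact constant determined by the chosen simulation. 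Your proposed ``correlating'' ancilla would itself have to survive the simulation, and its coupling strength would interact with $\Delta,\eta,\epsilon,\delta'$ in Lemma~\ref{lem:simulations=reductions} in exactly the delicate way you identify as the main obstacle. That obstacle is self-inflicted and the paper avoids it entirely by not attempting to restore $1$-locality after simulation.
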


Via Theorem~\ref{thm:simofsparsehard}, we now obtain many corollaries via the long line of research using perturbative gadgets to prove $\QMA$-completeness of restricted Hamiltonians; for brevity, here we list a select few such corollaries. We note that the locality of the observable input to $\app$ may increase after simulation, but only by a constant factor which can be easily calculated based on the simulation used. For example, using the perturbative gadgets constructed in \cite{PM17}, the following is an immediate corollary of Theorem~\ref{thm:simofsparsehard}:
\begin{corollary}\label{cor:1}
The problem $\app$ is $\PQMA$-complete even when the observable $A$ is 4-local and the Hamiltonian $H$ is restricted to be of the form:
\[
	H=\sum_{(j,k) \in E} a_{(j,k)} h_{(j,k)} , \qquad \text{ where } h_{(j,k)}= \alpha X_j X_k + \beta Y_j Y_k + \gamma Z_j Z_k ,
\]
$E$ is the set of edges of a 2D square lattice, $a_{(j,k)} \in \R$, and at least two of $\alpha, \beta, \gamma$ are non-zero.
The case $\alpha=\beta=\gamma$ corresponds to $XX+YY+ZZ$, which is known as the Heisenberg interaction.
\end{corollary}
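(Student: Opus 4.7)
The plan is to invoke Theorem~\ref{thm:simofsparsehard} directly on the family $\mathcal{F}$ of Hamiltonians expressible as the stated linear combinations of the two-body interactions $\alpha X_jX_k+\beta Y_jY_k+\gamma Z_jZ_k$ (with at least two of $\alpha,\beta,\gamma$ nonzero) on edges of a 2D square lattice. If $\mathcal{F}$ can efficiently simulate every spatially sparse Hamiltonian in the sense of \cite{CMP18}, then the theorem immediately yields $\PQMA$-hardness of $\app$ on $\mathcal{F}$ (combined with the containment $\app\in\PQMA$ from \cite{A14}, which gives $\PQMA$-completeness). So the task reduces to citing the appropriate universality statement.

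First, I would recall the perturbative gadget constructions of \cite{PM17}, which show that each such two-body interaction family is universal: starting from interactions of the prescribed form placed on a 2D square lattice, one can realize arbitrary two-qubit (hence arbitrary $O(1)$-local) Hamiltonians as effective low-energy Hamiltonians, with ground-state energies, low-energy subspaces and local observables all approximated to inverse-polynomial precision in the system size. This is exactly the notion of efficient simulation formalized in \cite{CMP18}, and since spatially sparse Hamiltonians are in particular 2-local on a bounded-degree interaction graph, $\mathcal{F}$ efficiently simulates every spatially sparse Hamiltonian. Theorem~\ref{thm:simofsparsehard} then applies.

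The only remaining subtlety is tracking the locality of the observable. Theorem~\ref{thm:simofsparsehard} gives hardness for a single-qubit observable $A$, but under the simulation this observable must be pulled back through the local encoding $\mathcal{E}$ used by the gadget, which inflates its support by a constant factor depending on the gadget. For the gadgets of \cite{PM17}, a straightforward count of the physical qubits used to encode a single logical qubit shows this factor is at most $4$, giving a $4$-local effective observable on $\mathcal{F}$. I do not anticipate a genuine obstacle here; the work is bookkeeping about the encoding, and the interesting content has already been absorbed into Theorem~\ref{thm:simofsparsehard} and the simulation results of \cite{PM17,CMP18}.
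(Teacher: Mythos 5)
Your proposal matches the paper's approach exactly: the paper obtains Corollary~\ref{cor:1} by combining Theorem~\ref{thm:simofsparsehard} with the fact (from \cite{PM17}) that these 2D-lattice interaction families efficiently simulate all spatially sparse Hamiltonians, with the observable locality inflating to $4$ because a logical qubit is encoded across $O(1)$ physical qubits in the gadget encoding. One small wrinkle in your write-up: the intermediate claim that $\mathcal{F}$ realizes ``arbitrary two-qubit (hence arbitrary $O(1)$-local) Hamiltonians'' is too strong once you restrict to a 2D square lattice geometry; the correct and needed statement, which \cite{PM17} in fact proves, is that $\mathcal{F}$ on the 2D lattice efficiently simulates precisely the class of \emph{spatially sparse} Hamiltonians, which is the hypothesis of Theorem~\ref{thm:simofsparsehard}.
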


\noindent But, there is not always a blow-up in the locality of $A$, as is shown by this corollary which follows from Theorem~\ref{thm:simofsparsehard} and \cite{SV09}:
\begin{corollary}\label{cor:2}
The problem $\app$ is $\PQMA$-complete even when the observable $A$ is 1-local and the Hamiltonian $H$ is restricted to be of the form:
\[
	H=\sum_{(j,k) \in E} h_{(j,k)}+\sum_j B_j , \qquad \text{ where } h_{(j,k)}=  X_j X_k + Y_j Y_k +  Z_j Z_k ,
\]
$E$ is the set of edges of a 2D square lattice, and $B_j$ is a single qubit operator (that may depend on $j$).
\end{corollary}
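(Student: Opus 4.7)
The plan is to invoke Theorem~\ref{thm:simofsparsehard} with the family $\mathcal{F}$ taken to be the set of Hamiltonians of the stated form (Heisenberg edge interactions on a 2D square lattice, plus arbitrary single-qubit terms). It then suffices to establish two things: (i) $\mathcal{F}$ can efficiently simulate any spatially sparse Hamiltonian in the sense of \cite{CMP18}, and (ii) this simulation preserves the $1$-locality of observables, so that a single-qubit measurement on the simulated system corresponds to a single-qubit measurement on a physical qubit of the simulating system.

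For part (i), I would appeal directly to the perturbative gadget construction of Schuch and Verstraete \cite{SV09}, which shows that $2$D Heisenberg interactions with arbitrary single-qubit fields can realize an effective low-energy Hamiltonian approximating any target Hamiltonian with bounded-degree, spatially sparse interaction graph. As observed in \cite{CMP18}, perturbative gadget constructions of this kind fit their formal notion of simulation, with polynomial overhead in system size and interaction strengths, provided one is careful about the scaling of the gadget couplings as a function of the desired approximation precision $\delta$. The $\apptwo$-preserving nature of such simulations (Lemma~\ref{lem:simulations=reductions}) is already embedded in the proof of Theorem~\ref{thm:simofsparsehard}, so this step reduces to quoting \cite{SV09}.

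The main obstacle, and the reason the observable locality does \emph{not} blow up here as it does in Corollary~\ref{cor:1}, is verifying (ii): the encoding isometry $V$ associated with the SV09 simulation must act trivially on each logical qubit. In the mediator-style gadgets used by \cite{SV09}, each logical qubit of the spatially sparse target system is identified with a specific physical qubit of the simulating Heisenberg system, while additional \emph{mediator} qubits, pinned into a fixed state within the low-energy subspace by the single-qubit fields, are used to generate the effective interactions. Under this identification, a $1$-local observable $A_j$ on logical qubit $j$ is mapped by $V$ to the same single-qubit operator acting on the corresponding physical qubit. Thus Theorem~\ref{thm:simofsparsehard}, applied to $\mathcal{F}$ with the $1$-local observable $A$ transported through this trivial encoding, yields $\PQMA$-hardness of $\app$ on the stated family with a $1$-local observable. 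Containment in $\PQMA$ is inherited from the general upper bound of \cite{A14}, completing the proof.
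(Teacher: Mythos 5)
Your proposal is correct and matches the paper's approach, which is simply to invoke Theorem~\ref{thm:simofsparsehard} together with the fact (from \cite{SV09}, as organized via the simulation framework of \cite{CMP18}) that this Heisenberg-plus-fields family efficiently simulates all spatially sparse Hamiltonians. You also correctly identify and justify the one nontrivial point the paper leaves implicit --- that in the \cite{SV09} mediator gadgets each logical qubit maps to a single physical qubit, so the local isometry $V_i$ transports a single-qubit observable to a single-qubit observable, which is why Corollary~\ref{cor:2} retains a $1$-local $A$ whereas Corollary~\ref{cor:1} does not.
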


\noindent Finally, we remark that recent work on the simulation power of families of qudit Hamiltonians \cite{PM18} can be used to show the following corollary:

\begin{corollary}\label{cor:3}
Let $\ket{\psi}$ be an entangled two qudit state. Then, the problem $\app$ is $\PQMA$-complete even when the Hamiltonian $H$ is restricted to be of the form
\[
	H=\sum_{j,k} \alpha_{j,k} \proj{\psi}_{j,k} ,
\]
where $\alpha_{j,k} \in \R$ and $\proj{\psi}_{j,k}$ denotes the projector onto $\ket{\psi}$ on qudits $j$ and $k$.
\end{corollary}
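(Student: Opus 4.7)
The plan is to reduce directly to Theorem~\ref{thm:simofsparsehard}: once we verify that the family $\mathcal{F}_\psi$ of Hamiltonians of the form $\sum_{j,k}\alpha_{j,k}\proj{\psi}_{j,k}$ can efficiently simulate every spatially sparse Hamiltonian (in the sense of \cite{CMP18}), $\PQMA$-hardness follows immediately from that theorem. The matching upper bound is inherited from the generic containment $\app\in\PQMA$ of \cite{A14}, yielding $\PQMA$-completeness.

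The key external input will be the universality classification of qudit Hamiltonian families established in \cite{PM18}. That work shows that for any entangled two-qudit pure state $\ket{\psi}$, the two-body interaction set $\{\proj{\psi}\}$ equipped with arbitrary real weights on arbitrary pairs of qudits is universal, meaning it efficiently simulates every $O(1)$-local qudit Hamiltonian, and in particular every spatially sparse one. Plugging this into Theorem~\ref{thm:simofsparsehard} then produces the claimed hardness for $\app$ restricted to $\mathcal{F}_\psi$.

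The main obstacle I anticipate is bookkeeping: one must check that the notion of simulation used in \cite{PM18} is exactly the quantitative notion that Theorem~\ref{thm:simofsparsehard}, and the underlying definition from \cite{CMP18}, actually requires. Specifically, the weights $\alpha_{j,k}$ produced by the simulation must stay polynomially bounded, the simulation error must be drivable below any inverse polynomial, and the ancilla qudits introduced by the gadgets of \cite{PM18} must not contaminate the low-energy structure relevant to $\app$ (in particular, the observable $A$ should continue to act only on ``logical'' qudits, and low-energy states should remain close to the encoded subspace). None of these conditions look problematic given that \cite{PM18} is phrased explicitly within the efficiency-aware simulation framework of \cite{CMP18}, but tracking the parameters through and confirming compatibility with the constraints on $a$, $b$, and $\delta$ in Definition~\ref{dfn:apx} is the one step that requires genuine care.
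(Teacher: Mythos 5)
Your proposal matches the paper's own argument exactly: the paper proves Corollary~\ref{cor:3} by citing \cite{PM18} to establish that the family $\{\sum_{j,k}\alpha_{j,k}\proj{\psi}_{j,k}\}$ efficiently simulates all spatially sparse Hamiltonians, then invokes Theorem~\ref{thm:simofsparsehard} for hardness and the generic containment from \cite{A14} for the upper bound. The bookkeeping concerns you flag (polynomially bounded weights, tunable simulation error, locality of the observable after encoding) are exactly the conditions folded into Definition~\ref{dfn:specialsim}/\ref{dfn:sim} and discharged by Lemma~\ref{lem:simulations=reductions}, so your instinct that they hold because \cite{PM18} works within the \cite{CMP18} framework is correct.
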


\noindent Each of these corollaries follows as the corresponding references show that the described families of Hamiltonians can efficiently simulate all spatially sparse Hamiltonians.

\subsection{The complexity of \app\ on the line}\label{sscn:results3}

We finally move to our last result, which characterizes the complexity of $\app$ on the line. Historically, it was known that the $\NP$-complete problem MAX-$2$-SAT on a line is efficiently solvable via dynamic programming or divide-and-conquer (even for large, but constant, dimension). It hence came as a surprise when \cite{AGIK09} showed that $\klh[2]$ on a line is still QMA-complete. This result was for local dimension $13$ (\cite{AGIK09} actually claimed a result for $12$-dimensional qudits; \cite{HNN13} later identified an error in their construction, and gave a fix requiring an addition dimension). \cite{Nag08} improved this to hardness for $12$-dimensional qudits by leveraging the parity of the position of qudits (similarly, \cite{Nag08} claimed a result for 11-dimensional particles, but suffered from the same error as \cite{AGIK09}). Most recently, \cite{HNN13} showed $\QMA$-completeness for qudits of dimension $8$ by allowing some of the clock transitions to be ambiguous (a similar idea was used in \cite{KKR06} to show $\QMA$-completeness of $\klh[2]$). The complexity of $\klh$ on a 1D line remains open for local dimension $2\leq d\leq 7$.

Returning to the setting of $\app$, it is clear that the classical analogue of $\app$ on a 1D line of bits is also in P; given any $2$-local Boolean formula $\phi:\set{0,1}^n\mapsto\set{0,1}$, we simply compute an optimal solution $x$ to $\phi$ (which recall can be done in 1D as referenced above), and subsequently evaluate any desired efficiently computable function on $x$ (i.e. a ``measurement'' on a subset of the bits). This raises the question: is $\app$ on a line still $\PQMA$-complete? Or does its complexity in the 1D setting drop to, say, $\QMA$? Our final result shows the former.

\begin{theorem}\label{thm:apxsim-1dline-complete}
	$\app$ is $\PQMA$-complete even when restricted to Hamiltonians on a 1D line of $8$-dimensional qudits and single-qudit observables.
\end{theorem}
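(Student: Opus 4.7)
My plan is to reduce from $\Ppar{QMA}$, which equals $\PQMA$ by Corollary~\ref{cor:complexityequalities}, to $\app$ on a 1D line. Given an instance of $\Ppar{QMA}$, we have a polynomial-time machine making parallel queries $q_1,\ldots,q_m$ to a $\QMA$ oracle followed by classical post-processing of the answer string $y \in \{0,1\}^m$. By Hallgren--Nagaj--Narayanaswami~\cite{HNN13}, each query $q_i$ can be encoded as the ground-state energy problem for a 2-local Hamiltonian $H_i$ on a 1D line of 8-dimensional qudits, with the standard inverse-polynomial $\QMA$ promise gap. Containment of $\app$ in $\PQMA$ is immediate from~\cite{A14}, so only hardness needs attention.

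The next step is to combine the $m$ query Hamiltonians into a single 1D Hamiltonian while still enabling a single-qudit observable to extract the outcome of the post-processing. Because 1D geometry forbids Ambainis's cross-cutting term $\sum_i \proj{1}_i \otimes H_i$, I would replace it with a new ``sifter'' construction. Conceptually, the sifter concatenates the blocks $H_1,\ldots,H_m$ along the line, with separator states drawn from the 8-dimensional alphabet to enforce block boundaries (exploiting the ambiguous-clock trick already present in~\cite{HNN13}). A global low-weight selector term makes the ground space spanned by states of the form $\ket{i}\ket{\psi_i^{\rm gs}}$, where only block $i$ is ``active'' and carries its ground state while the remaining blocks sit in a fixed reference configuration. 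The classical post-processing function of $y$ is then hard-wired into single-qudit penalties on a designated measurement qudit at one end of the line, so that the expectation of a single-qudit observable on that qudit equals the desired Boolean function of $y$ up to the constant promise gap inherited from Theorem~\ref{thm:APXcompleteness}.

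The hardest part will be squeezing this sifter into dimension 8 on a strict 1D line while preserving the $\delta$-low-energy promise of $\app$. Concretely, one must (i) ensure that every state below the energy threshold $\lambda(H)+\delta$ concentrates its amplitude in the ``active-block'' subspace, so that invalid-oracle-query behavior of the sort discussed in Section~\ref{ssec:parVsAdap} cannot spoof the measurement outcome; (ii) arrange the block separators and the selector penalty without exceeding 8 local dimensions, which probably requires reusing the clock and dead states of~\cite{HNN13} as separator markers rather than introducing fresh symbols; and (iii) verify that the post-processing --- which by parallelization can be taken to be a simple symmetric function of $y$ --- can be implemented by purely 1-local energy terms on the measurement qudit combined with the sifter weighting. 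Once these pieces are in place, arguments analogous to the proof of Theorem~\ref{thm:APXcompleteness}, but with parallel rather than adaptive queries, close the gap and yield full $\PQMA$-completeness on the 1D line.
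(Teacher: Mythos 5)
There is a genuine gap, and it is the central design decision of the construction. Your ``selector'' idea produces a ground space spanned by states of the form $\ket{i}\ket{\psi_i^{\rm gs}}$, i.e.\ in which exactly one query block is ``active'' at a time. But to decide a $\Ppar{QMA}$ instance you must read off the \emph{entire} answer string $y \in \{0,1\}^m$ simultaneously and then run a (general, not merely symmetric) polynomial-time classical circuit $C$ on $y$. A one-hot superposition over blocks cannot carry all $m$ answers at once, and a single 1-local penalty on a ``measurement qudit'' cannot encode an arbitrary poly-time function of an $m$-bit string: a single qudit's eigenvalues can distinguish only constantly many outcomes, and the construction contains nothing that actually \emph{computes} $C(y)$. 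Your remark that ``the post-processing\ldots can be taken to be a simple symmetric function of $y$'' is also not justified; parallelizing the queries removes adaptivity but does not simplify the post-processing circuit.

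The paper takes a structurally different route. Rather than concatenating $m$ separate Hamiltonians with a selector, it builds a \emph{single} master circuit $V$ that runs all the verifiers $V_1,\ldots,V_m$ in parallel on disjoint proof registers, stores their output bits, and then runs the classical circuit $C$ on those bits; this whole circuit is fed once into the Hallgren--Nagaj--Narayanaswami 1D construction, so the classical post-processing is handled by the circuit-to-Hamiltonian machinery rather than by ad hoc 1-local terms. The ``sifter'' is not a block-selector at all: it is a family of weak $1$-local penalties $\houti = \epsilon\ketbra{\gate_0}{\gate_0}_{q_i}$, placed on the specific qudits that encode each $V_i$'s output bit at a fixed round of the linearized circuit. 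These are tuned ($\epsilon = 1/(8m)$, with soundness error $p \ll \epsilon$) so that a cheating prover who deliberately submits a bad proof to a YES-instance query incurs an energy penalty, while correct NO answers remain low energy. Correctness then requires showing that \emph{all} $m$ sifters are simultaneously satisfied by low-energy states, for which the paper invokes a commutative quantum union bound (Lemma~\ref{lem:unionbound}), together with the Projection Lemma analysis (Lemma~\ref{lem:props}) to argue low-energy states are close to genuine history states. None of these ingredients --- running all verifiers plus $C$ inside one history state, weak $1$-local output penalties calibrated against the completeness/soundness error, and a union-bound argument over all queries --- appears in your outline, and the selector design you propose instead would not yield a Hamiltonian from which $C(y)$ can be recovered.
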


\noindent Thus, even in severely restricted geometries like the 1D line, simulating a measurement on a single qudit of the ground space remains harder than QMA.\\

\noindent\emph{Proof techniques for Theorem~\ref{thm:apxsim-1dline-complete}.}
We employ a combination of new and known ideas. We wish to simulate the idea from~\cite{GY16} that instead of having the $\Poly$ machine make $m$ queries to a $\QMA$ oracle, it receives the answers to the queries as a ``proof'' $y\in\set{0,1}^m$ which it accesses whenever it needs a particular query answer. In~\cite{GY16}, Ambainis's query Hamiltonian~\cite{A14} was then used to ensure $y$ was correctly initialized. However, it is not clear how to use Ambainis' query Hamiltonian (or variants of it) while maintaining a 1D layout. We hence take a different approach.

Instead of receiving the query \emph{answers}, the $\Poly$ machine now has access to $m$ $\QMA$ verifiers $\set{V_i}_{i=1}^m$ corresponding to the $m$ queries, and for each of them receives a \emph{quantum} proof $\ket{\psi_i}$ in some proof register $R_i$. The $\Poly$ machine then treats the (probabilistic) outputs of each $V_i$ as the ``correct'' answer to the query $i$. If a query $i$ is a NO instance of a $\QMA$ problem, this works well --- no proof can cause $V_i$ to accept with high probability. However, if query $i$ is a YES instance, a cheating prover may nevertheless submit a ``bad'' proof to verifier $V_i$, since flipping the output bit of $V_i$ may cause the $\Poly$ machine to flip its final output bit. To prevent this, and thus ensure the $\Poly$ machine receives all correct answers with high probability, we use a delicate application of $1$-local energy penalties, which we call ``sifters'', to the outputs of the $V_i$; just enough to penalize bad proofs for YES cases, but not enough to cause genuine NO cases to incur large energy penalties. Here, we again utilize our result that $\Plog{QMA}=\Ppar{QMA}$ (Corollary~\ref{cor:complexityequalities}), and choose to begin with a $\PQMApar$ instance; this allows us to apply identical, independent sifters to the output of each verifier $V_i$, significantly easing the subsequent analysis and transition to 1D.

We next plug this construction, where the $\Poly$ circuit has many sub-circuits $V_i$, into the 1D $8$-dimensional circuit-to-Hamiltonian construction of~\cite{HNN13}.
Similarly to~\cite{GY16}, we apply a corollary of the Projection Lemma of~\cite{KKR06,GY16} (Corollary~\ref{cor:kkr}) to argue that any low energy state must be close to a history state $\ket{\psi}$. Combining with our sifter Hamiltonian terms, we show in Lemma~\ref{lem:L} that for $\ket{\psi}$ to remain in the low-energy space, it must encode $V_i$ outputting approximately the right query answer for any query $i$. To then conclude that all query responses are jointly correct with high probability, and thus that the low-energy space encodes the correct final output to the $\PQMApar$ computation, we apply a known quantum non-commutative generalization of the union bound.
In fact, our argument immediately shows hardness for both $\app$ and $\apptwo$.
The full proof is given in Sections~\ref{ssec:1d-construction},~\ref{ssec:1D-correctness}, and~\ref{sscn:lemmas}.

\subsection{Open questions and organization}
Our results bring previous $\PQMA$-hardness results for a remarkably natural problem, Approximate Simulation ($\app$), closer to the types of problems studied in the physics literature, where typically observables are $O(1)$-local, allowed interactions physically motivated, and the geometry of the interaction graph is constrained. There are many questions which remain open, of which we list a few here: (1) The coupling strengths for local Hamiltonian terms in Corollary~\ref{cor:1},\ref{cor:2},\ref{cor:3} are typically non-constant, as these corollaries follow from the use of existing perturbation theory gadgets; can these coupling constants be made $O(1)$? Note this question is also open for the complexity classification of $\klh$ itself~\cite{CM16,PM17}. (2) What is the complexity of $\PQMA$? It is known that $\PQMA\subseteq \class{PP}$~\cite{GY16}; can a tighter characterization be obtained? (3) Can similar hardness results for $\app$ be shown for \emph{translationally invariant} 1D systems? For reference, it is known that $k$-LH is $\class{QMA}_{\class{exp}}$-complete for 1D translationally invariant systems when the local dimension is roughly $40$~\cite{GI13,BCO17}. ($\class{QMA}_{\class{exp}}$ is roughly the quantum analogue of NEXP, in which the proof and verification circuit are exponentially large in the input size. The use of this class is necessary in~\cite{GI13,BCO17}, as the only input parameter for 1D translationally invariant systems is the \emph{length} of the chain.) If a similar hardness result holds for $\app$, presumably it would show $\class{P}^{\class{QMA}_{\class{exp}}[\log]}$-hardness for 1D translationally invariant systems.

\paragraph{Organization.} We introduce notation and definitions in Section~\ref{sec:preliminaries}.
We prove that $\app$ is contained in $\Plog{C}$ in Section~\ref{ssec:containment} for classes $\class{C}$ and corresponding restrictions, and that $\apptwo$ is $\Ppar{C}$-hard in Section~\ref{ssec:hardness}, thereby proving Theorem~\ref{thm:APXcompleteness}.
In Section~\ref{sec:simulations}, we introduce a special case of the definition of simulation from \cite{CMP18} and show that simulations correspond to reductions of the problem $\apptwo$, yielding Theorem~\ref{thm:S-APXSIM}; proofs with regard to the general definition are in Appendix~\ref{sec:generalsimulations}.
In Section~\ref{sec:spatiallysparse}, we give a spatially sparse construction with which $\apptwo$ is $\PQMApar$-hard, thus proving Theorem~\ref{thm:simofsparsehard}. Finally, in Section~\ref{sec:1D}, we study hardness on a 1D line and prove Theorem~\ref{thm:apxsim-1dline-complete}.

\section{Preliminaries}
\label{sec:preliminaries}
\paragraph{Notation.}
Let $\lambda(H)$ denote the smallest eigenvalue of Hermitian operator $H$.
For a matrix $A$, $\snorm{A} := \max\{\norm{A\ket{v}}_2 : \norm{\ket{v}}_2 = 1\}$ is the operator norm or spectral norm of $A$, and $\trnorm{A}:=\trace{\sqrt{A^\dagger A}}$ the trace norm. Throughout this paper, we will assume generally that both $H=\sum_{i=1}^m H_i$ and observable $A=\sum_{i=1}^m A_i$ are local Hamiltonians whose local terms $H_i$ and $A_i$ act non-trivially on at most $O(\log n)$ out of $n$ qubits. We also assume $m,\snorm{H_i},\snorm{A_i}\in O(\poly~n)$ for all $i\in\set{1,\ldots,m}$, for $n$ the number of qubits in the system. For a subspace $S$, $S^{\perp}$ denotes the orthogonal complement of $S$. We denote the restriction of an operator $H$ to subspace $S$ as $H\vert_S$. The null space of $H$ is denoted $\Null(H)$.

\paragraph{Definitions.}
$\PQMA$, defined in~\cite{A14}, is the set of decision problems decidable by a polynomial-time deterministic Turing machine with the ability to query an oracle for a $\QMA$-complete problem $O(\log n)$ times, where $n$ is the size of the input.
For a class $\class{C}$ of languages or promise problems, the class $\Plog{C}$ is similarly defined, except with an oracle for a $\class{C}$-complete problem.

$\Ppar{C}$ is the class of problems decidable by a polynomial-time deterministic Turing machine given access to an oracle for a $\class{C}$-complete problem, with the restriction that all (up to $O(n^c)$ for $c\in\Theta(1)$) queries to the oracle must be made in one time step, i.e. in parallel. Such queries are labeled \emph{non-adaptive}, as opposed to the \emph{adaptive} queries allowed to a $\Plog{C}$ machine.
We note that $\PNPpar$ has in the past been denoted $\leq^{p}_{tt}(\NP)$, in reference to polynomial-time truth-table reductions (e.g. \cite{BH91}).

In this article, for $\PQMA$ we assume oracle queries made by the P machine are to an oracle for the QMA-complete~\cite{KSV02} $k$-local Hamiltonian problem ($\klh$), defined as follows: Given a $k$-local Hamiltonian $H$ and inverse polynomial-separated thresholds $a,b\in\reals$, decide whether $\lambda(H)\leq a$ (YES-instance) or $\lambda(H)\geq b$ (NO-instance) \cite{KKR06}. We shall say an oracle query is \emph{valid} (\emph{invalid}) if it satisfies (violates) the promise gap of the $\QMA$-complete problem the oracle answers. (An invalid query hence satisfies $\lambda(H)\in (a, b)$.) For any invalid query, the oracle can accept or reject arbitrarily. A \emph{correct} query string $y\in\set{0,1}^m$ encodes a sequence of correct answers to all of the $m$ queries made by the $\class{P}$ machine, and an \emph{incorrect} query string is one which contains at least one incorrect query answer. Note that for an invalid query, any answer is considered ``correct'', yielding the possible existence of multiple correct query strings. Nevertheless, the \class{P} machine is required to output the same final answer (accept or reject) regardless of how such invalid queries are answered \cite{G06}. The above definitions extend analogously when the class $\QMA$ is replaced with another class $\class{C}$, with a designated $\class{C}$-complete problem $\Pi_C$ playing the role of $\klh$. (In this paper, the complexity classes $\class{C}$ we consider have complete problems.)

\section{Parallel versus adaptive queries}\label{sec:parallel}
We begin by showing Theorem~\ref{thm:APXcompleteness}, i.e. that $\Plog{C}=\Ppar{C}$ for appropriate complexity classes $\class{C}$. Section~\ref{ssec:containment} shows containment of the corresponding $\app$ problem in $\Plog{C}$ (and thus in $\Ppar{C}$). Section~\ref{ssec:hardness} then shows $\Ppar{C}$-hardness (and thus $\Plog{C}$-hardness) of $\app$. Theorem~\ref{thm:APXcompleteness} is restated and proven in Section~\ref{ssec:final}.

\subsection{Containment in $\Plog{C}$}
\label{ssec:containment}

We begin by modifying the containment proof of~\cite{A14} to show containment of \app\ in classes $\Plog{C}$ for $\class{C}$ beyond just $\class{C}=\class{QMA}$.

\begin{lemma}
	\label{lem:APXinPXlog}
	Let $H$ be a $k$-local Hamiltonian acting on $n$ qudits, and let $A$ be an observable on the same system of $n$ qudits. If $\klh$ for $\alpha H+ \beta A$ is contained in complexity class $\class{C}$ for any $0\le \alpha, \beta \le \poly(n)$ and for all $k\geq 1$, then $\app(H,A,k,\ell,a,b,\delta)\in\Plog{C}$ for all $\ell\leq O(\log n)$ and $b-a,\delta\geq O(1/\poly~n)$.
\end{lemma}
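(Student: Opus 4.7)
The plan is to adapt Ambainis's original containment proof from~\cite{A14} (where $\class{C}=\QMA$) to the general setting in which $\klh$ for every positive linear combination $\alpha H + \beta A$ with $\alpha,\beta \in [0,\poly(n)]$ is in $\class{C}$. The overall strategy is to reduce the decision of $\app$ to estimating two ground energies by binary search: that of $H$ itself, and that of the ``tilted'' Hamiltonian $H' := H + \beta A$ for a suitably chosen parameter $\beta > 0$. Each binary search uses $O(\log n)$ adaptive queries of the form ``is $\lambda(\alpha H + \beta A) \leq t$?'' for fixed $\alpha,\beta \in [0,\poly(n)]$, with the threshold $t$ absorbed into the $\klh$ thresholds of the query; by hypothesis, each such query fits into $\class{C}$. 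The two searches combine into $O(\log n)$ total adaptive queries, placing the procedure in $\Plog{C}$.

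The key technical step is to show that $\lambda(H') - \lambda(H)$ separates YES from NO by at least $1/\poly(n)$. In the YES case, substituting the ground state $\ket{\psi}$ (with $\bra{\psi}A\ket{\psi} \leq a$) into $H'$ immediately gives $\lambda(H') \leq \lambda(H) + \beta a$. In the NO case, for any candidate $\ket{\phi}$ I split on whether $\bra{\phi}H\ket{\phi} \leq \lambda(H) + \delta$: if so, the NO promise yields $\bra{\phi}A\ket{\phi} \geq b$ and hence $\bra{\phi}H'\ket{\phi} \geq \lambda(H) + \beta b$; if not, $\bra{\phi}H'\ket{\phi} \geq \lambda(H) + \delta - \beta \snorm{A}$. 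Minimizing over $\ket{\phi}$, $\lambda(H') - \lambda(H) \geq \min(\beta b, \delta - \beta \snorm{A})$. Since the standing assumptions force $\snorm{A} \leq \poly(n)$, and $b-a,\delta \geq 1/\poly(n)$ by hypothesis, setting $\beta = \Theta(\delta/\snorm{A})$ both keeps $\beta \leq \poly(n)$ and guarantees a YES-vs-NO gap of $\Omega(1/\poly(n))$ in $\lambda(H')-\lambda(H)$. (Without loss of generality one may also assume $|a|,|b| \leq \snorm{A}$, since otherwise one of the two cases is vacuous and the answer is fixed.)

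With the gap in hand, the $\class{P}$ machine performs binary search, to precision roughly a fifth of the gap, on both $\lambda(H)$ and $\lambda(H')$ over the polynomially bounded range of their spectra; each step is a single $\class{C}$ query, so the total remains $O(\log n)$. The machine then outputs YES iff the difference of the two estimates lies below the midpoint of the YES/NO interval. The main obstacle I expect is not the binary search itself, which is routine, but the parameter bookkeeping needed to confirm that every query really is of the hypothesized form: namely, that the binary-search threshold $t$ can be pushed into the $\klh$ promise gap rather than into the Hamiltonian, that $\alpha$ and $\beta$ remain nonnegative and polynomially bounded throughout, and that the $1/\poly(n)$ promise gap of each individual $\klh$ query is compatible with the precision demanded by the YES/NO separation. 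Once these routine checks are carried out, the result $\app(H,A,k,\ell,a,b,\delta) \in \Plog{C}$ for all $\ell \leq O(\log n)$ and $b-a,\delta \geq 1/\poly(n)$ follows.
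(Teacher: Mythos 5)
Your proposal is correct and takes essentially the same approach as the paper: binary-search $\lambda(H)$, then exploit the hypothesis that $\klh$ for $\alpha H + \beta A$ lies in $\class{C}$ to probe the ground energy of a tilted Hamiltonian, whose spectrum separates YES from NO by $1/\poly(n)$. The only difference is economy of queries: the paper folds the estimate $\lambda^*$ of $\lambda(H)$ directly into the thresholds of a single final $\klh$ query on $(b-\mu)H + \delta A$, whereas you run a second full binary search on $\lambda(H+\beta A)$ and then compare the two estimates — both give $O(\log n)$ queries total, so this is just a bookkeeping difference, not a different route.
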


\begin{proof}
	We need to show the existence of a $\poly(n)$ time classical algorithm to decide $\app$ while making at most  $O(\log n )$ queries to an oracle for $\class{C}$. As with the proof in \cite{A14}, the idea is to use $O(\log n )$ oracle queries to determine the ground space energy $\lambda(H)$ of $H$ by binary search, and then use one final query to determine the answer.
	In \cite{A14} the final query is a $\QMA$ query; here we show how this final query can be performed differently so that only an oracle for $\class{C}$ is required.
	
	First calculate a lower bound $\mu$ for $\lambda(A)$, the lowest eigenvalue of $A$. If $A$ acts only on $O(1)$ qudits, then $\lambda(A)$ can be calculated via brute force (up to, say, inverse exponential additive error) in $O(1)$ time. If $A$ acts on many qudits, then $\lambda(A)$ can alternatively be approximated to within inverse polynomial additive error by binary search (as in~\cite{A14}) by querying the $\class{C}$ oracle $O(\log \|A\|)=O(\log n)$ times. Note that without loss of generality, we may assume $0\leq b-\mu \le q(n)$ for some efficiently computable polynomial $q$. The lower bound holds since if $b<\mu\leq\lambda(A)$, we conclude our \app\ instance is a NO instance, and we reject. For the upper bound, it holds that $\mu\leq \snorm{A}$, and we may assume $b\leq \snorm{A}$, as otherwise our \app\ instance is either a YES or invalid instance, and in both cases we can accept. By assumption, $\snorm{A}\leq q(n)$ for appropriate polynomial $q$ which can be computed efficiently by applying the triangle inequality to the local terms of $A$; note $\snorm{A}$ may hence be replaced by $q$ in the bounds above.
	
	Perform binary search with the oracle for $\class{C}$ (an example of how to perform binary search with an oracle for a promise problem is given in~\cite{A14}) to find $\lambda^*$ such that $\lambda(H) \in [\lambda^*,\lambda^*+\epsilon]$ where
	\[
        \epsilon=\frac{\delta(b-a)}{2(b-\mu)} \ge 1/\poly(n)
    \]
	since $0\leq b-\mu \le \poly (n)$.
	This requires $O(\log 1/\epsilon)=O(\log n)$ queries to the oracle for $\class{C}$.
	Next perform one final query to the $\class{C}$ oracle to solve \klh\ with Hamiltonian $H'$ with thresholds $a'$ and $b'$, where
	\[
        H'=(b-\mu)H+\delta A  \quad \text{ and } \quad  \begin{array}{l}
	   a'=(\lambda^*+\epsilon) (b-\mu) +\delta a\\
	   b'=\lambda^* (b-\mu) +\delta b
	\end{array}\]
	and accept if and only if this final query accepts. Observe this is an allowed query for the $\class{C}$ oracle because $H'$ is of the form required in the statement of the lemma (recall $b-\mu\geq 0$), and also
	\[
        b'-a'=\delta(b-a)-\epsilon(b-\mu)=\delta(b-a)/2 \ge 1/\poly( n).
	\]	
	Now, if $\app(H,A,k,l,a,b,\delta)$ is a YES instance, then there exists $\ket{\psi}$ such that $\bra{\psi} H\ket{\psi}=\lambda(H)$ and $\bra{\psi} A \ket{\psi} \le a$. Then
	\[
        \bra{\psi} (b-\mu)H + \delta A \ket {\psi} \le \lambda(H) (b-\mu) +\delta a \le (\lambda^*+\epsilon) (b-\mu) +\delta a =a'
    \]
	and the algorithm accepts as required.
	
	Now suppose the input is a NO instance.
	We will show that $\bra{\psi}H'\ket{\psi} \ge b'$ for any $\ket{\psi}$ and so the algorithm rejects as required.
	First, if $\ket{\psi}$ is low-energy with $\bra{\psi} H\ket{\psi} \le \lambda(H) + \delta$, then it also satisfies $\bra{\psi} A \ket{\psi} \ge b$, and so
	\[\bra{\psi} (b-\mu)H + \delta A \ket {\psi} \ge \lambda(H)(b-\mu) +\delta b \ge \lambda^* (b-\mu) +\delta b=b'  \]
	where we have used $\bra{\psi}H\ket{\psi} \ge \lambda(H) \ge \lambda^*$ and $b-\mu\geq 0$. Otherwise, if $\ket{\psi}$ is high energy with $\bra{\psi} H\ket{\psi}\ge \lambda(H) + \delta$, then
	\begin{align*}
	\bra{\psi} (b-\mu)H + \delta A \ket {\psi} &\ge (\lambda(H)+\delta) (b-\mu) +\delta \lambda(A) \\&= \lambda(H) (b-\mu) +\delta b+\delta(\lambda(A)-\mu) \ge \lambda^* (b-\mu) + \delta b=b' \end{align*}
	where we have used $\bra{\psi} A \ket{\psi} \ge \lambda(A)$ and $ \lambda(A) -\mu \ge 0$. Thus, we reject.
\end{proof}

An additional application of Lemma~\ref{lem:APXinPXlog} is that it allows us to prove that the $\app$ problem is easy for certain families of Hamiltonians for which $\klh$ is known to be easy. For example, the work on ferromagnetic Hamiltonians in \cite{BG17} implies the following corollary:

\begin{corollary}
	Consider the family of Hamiltonians $\mathcal{F}$ of the form:
	\begin{equation}
	\label{eq:ferromagnetic} H=\sum_{1\leq i < j \leq n} (-b_{ij}X_iX_j +c_{ij}Y_iY_j)+\sum_{i=1}^n d_i (I+Z_i)
	\end{equation}
	where $b_{ij}, c_{ij}, d_i \in \R$ satisfy $|c_{ij}|\le b_{ij}$.
	Then, $\app$ for Hamiltonians and observables chosen from $\mathcal{F}$ is contained in $\class{BPP}$.
\end{corollary}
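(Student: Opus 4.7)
The plan is to apply Lemma~\ref{lem:APXinPXlog} with $\class{C} = \class{BPP}$, leveraging the polynomial-time classical algorithm of \cite{BG17} for ferromagnetic Hamiltonians of the form \eqref{eq:ferromagnetic}. The first step is to verify that $\mathcal{F}$ is closed under positive linear combinations: if $H_1, H_2 \in \mathcal{F}$ have parameters $(b^{(r)}_{ij}, c^{(r)}_{ij}, d^{(r)}_i)$ satisfying $|c^{(r)}_{ij}| \leq b^{(r)}_{ij}$ for $r \in \{1,2\}$, then for any $\alpha, \beta \geq 0$ the triangle inequality gives $|\alpha c^{(1)}_{ij} + \beta c^{(2)}_{ij}| \leq \alpha b^{(1)}_{ij} + \beta b^{(2)}_{ij}$, so $\alpha H_1 + \beta H_2 \in \mathcal{F}$. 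In particular, whenever $H, A \in \mathcal{F}$ and $\alpha, \beta \in [0, \poly(n)]$, the Hamiltonian $\alpha H + \beta A$ featuring in Lemma~\ref{lem:APXinPXlog} remains in $\mathcal{F}$ and is 2-local.

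Next, I would invoke \cite{BG17}, which provides a polynomial-time randomized algorithm approximating the ground state energy of any $H \in \mathcal{F}$ to inverse-polynomial additive accuracy. This places $\klh$ restricted to $\mathcal{F}$ inside $\class{BPP}$. Combining with the closure property above, the hypothesis of Lemma~\ref{lem:APXinPXlog} is satisfied for $\class{C} = \class{BPP}$, yielding $\app(H, A, k, \ell, a, b, \delta) \in \Plog{BPP}$ whenever $H, A \in \mathcal{F}$.

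Finally, I would apply the standard fact that $\Plog{BPP} \subseteq \class{BPP}$ to conclude. Concretely, each $\class{BPP}$ oracle invocation can be amplified by majority vote over $\poly(n)$ independent runs so that its failure probability drops below $1/q(n)$ for any desired polynomial $q$; the deterministic $\class{P}$ machine can then simulate these amplified subroutines in place of the oracle, and a union bound over the $O(\log n)$ adaptive queries along any computation path bounds the total failure probability well below $1/3$. The main (minor) obstacle is simply confirming that the algorithm of \cite{BG17} supplies the required inverse-polynomial additive accuracy needed to decide the $\klh$ promise gap; the remaining steps are a direct assembly of Lemma~\ref{lem:APXinPXlog} with well-known oracle-simulation techniques. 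If \cite{BG17} happens to be deterministic on this family, the conclusion strengthens to $\app \in \class{P}$.
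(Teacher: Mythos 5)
Your proposal matches the paper's proof essentially step for step: check closure of $\mathcal{F}$ under positive linear combinations via the triangle inequality on the $b_{ij},c_{ij}$ coefficients, invoke \cite{BG17} to place $\klh$ over $\mathcal{F}$ in $\class{BPP}$, apply Lemma~\ref{lem:APXinPXlog} to get $\app \in \Plog{\BPP}$, and then collapse to $\class{BPP}$. The only cosmetic difference is the last step, where the paper cites that $\class{BPP}$ is low for itself ($\Plog{\BPP}\subseteq\Poly^{\BPP}\subseteq\BPP^{\BPP}=\BPP$) while you give the underlying amplification-plus-union-bound argument directly; both are standard and correct.
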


\begin{proof}
	In \cite{BG17}, it was shown that for Hamiltonians in $\mathcal{F}$, there exists a FPRAS (fully polynomial randomized approximation scheme) to calculate the partition function of $H$ up to multiplicative error. In particular, it is noted that this gives a corresponding approximation to the ground state energy with additive error. Therefore, there exists a randomized algorithm that runs in polynomial-time and which, with high probability, gives an approximation to the ground state energy of $H$ up to inverse-polynomial additive error. This algorithm shows containment of $\klh$ restricted to the family $\mathcal{F}$ in $\BPP$.
	
	We now wish to consider $\app$ by applying Lemma~\ref{lem:APXinPXlog}, but first need to check that $H'=\alpha H+ \beta A$ is in the family for all $\alpha,\beta \ge 0$.
	It is clear that $H'$ can be written in the form of Equation~\eqref{eq:ferromagnetic}, but not whether it satisfies the required bounds on its coefficients.
	Following the notation of Equation~\eqref{eq:ferromagnetic}, for an operator $F$ in the family $\mathcal{F}$, let $c_{i,j}(F)$ be the coefficient of $Y_iY_j$ and let $b_{i,j}(F)$ be the coefficient of $-X_iX_j$.
	Then, a simple application of the triangle inequality shows that
	\[ |c_{i,j}(H')|=| \lambda c_{i,j}(H)+\mu c_{i,j}(A) | \le \lambda |c_{i,j}(H)_{i,j}|+\mu |c_{i,j}(A)| \le \lambda b_{i,j}(H) + \mu b_{i,j}(A) =b_{i,j}(H').\]
	Therefore, by Lemma~\ref{lem:APXinPXlog}, $\app$ is contained in $\Plog{\BPP}$ for $H$ and $A$ in $\mathcal{F}$.
	Finally, we note that $\Plog{BPP}=\BPP$, since clearly $\BPP \subseteq \Plog{BPP}$, and $\Plog{BPP} \subseteq \Poly^{\BPP} \subseteq \BPP^{\BPP} = \BPP$ since $\BPP$ is low for itself.
\end{proof}

\subsection{Hardness for $\Ppar{C}$}
\label{ssec:hardness}

We next modify the proof that $\app$ is $\PQMA$-hard to obtain the following lemma. Our modifications include simplifying the ``query Hamiltonian'' of~\cite{A14} and improving the construction of~\cite{GY16} by using the Cook-Levin theorem, as opposed to Kitaev's circuit-to-Hamiltonian construction. The latter has a nice consequence --- in contrast to the $\QMA$-completeness results for $\klh$, where the promise gap is inverse polynomial, for $\app$ we are able to show that the promise gap $b-a$ sufficient for $\Ppar{C}$-completeness scales as $\Omega(1)$.

\begin{lemma}
	\label{lem:APXisPparXhard}
	Let $\cF$ be a family of Hamiltonians for which \klh\ is $\class{C}$-hard for all $k\geq 2$. Then $\apptwo$ is $\Ppar{C}$-hard even when $b-a=\Omega(1)$, the observable $A$ is a  single Pauli $Z$ measurement, and when restricted to Hamiltonians of the form $H=H_{\text{cl}}+\sum_i \proj{1}_i\otimes H_i$, where $H_{\text{cl}}$ is a classical Hamiltonian, and the $H_i$ are Hamiltonians from $\cF$.
\end{lemma}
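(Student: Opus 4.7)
The plan is to give a many-one reduction from an arbitrary $\Ppar{C}$ language to $\apptwo$. Let $M$ be a polynomial-time deterministic machine making $m=\poly(n)$ parallel queries to a $\class{C}$ oracle. Using $\class{C}$-hardness of $\klh$ for the family $\cF$, reduce each query to an instance with Hamiltonian $H_i\in\cF$ and thresholds $a_i<b_i$, so $\Delta:=\min_i(b_i-a_i)\ge 1/\poly(n)$. Call query $i$ \emph{valid} if $\lambda(H_i)\notin(a_i,b_i)$, and let $\mathcal{Y}^*\subseteq\{0,1\}^m$ denote the set of strings $y$ whose coordinate $y_i$ correctly encodes the YES/NO status of each valid query (with arbitrary values on invalid coordinates). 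Since $M$ is a valid $\Ppar{C}$ decider, $M(y)=M^*$ is constant across $\mathcal{Y}^*$.

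The Hamiltonian will act on four kinds of registers: an \emph{answer register} $P=P_1\cdots P_m$ of $m$ qubits (intended to hold $y$), a \emph{witness register} $W_i$ for each $H_i$, a Cook--Levin auxiliary register $Z$, and a single \emph{output qubit} $O$. By the Cook--Levin theorem, we build a polynomial-size classical $O(1)$-local Hamiltonian $H_{\rm CL}$ on $(P,Z,O)$ whose zero-energy configurations are exactly the tuples $(y,z,o)$ in which $z$ encodes $M$'s computation transcript on input $y$ and $o=M(y)$, with each violated clause costing at least $1$. Set $\epsilon:=\Delta/2$, pick a large polynomial $L$ to be determined, and define
\[
H_{\rm cl}:=-\sum_{i=1}^m(a_i+\epsilon)\,\proj{1}_{P_i}+L\cdot H_{\rm CL},\qquad H:=H_{\rm cl}+\sum_{i=1}^m\proj{1}_{P_i}\otimes H_i,
\]
which has the required form. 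Take the observable $A:=Z_O$ on the output qubit, with thresholds $a:=-1/2$, $b:=1/2$, and low-energy gap $\delta_{\rm sim}:=\Delta/8$.

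For the analysis, observe that $H$ commutes with the classical registers $(P,Z,O)$, so it is block diagonal in the basis $\ket{y,z,o}$. Within each block, the witness Hamiltonian splits as a sum of the mutually commuting pieces $H_i|_{W_i}$ for those $i$ with $y_i=1$, so the minimum block energy is
\[
E(y,z,o)=\sum_{i=1}^m y_i\bigl(\lambda(H_i)-a_i-\epsilon\bigr)+L\cdot\#\{\text{clauses violated by }(y,z,o)\}.
\]
For $y\in\mathcal{Y}^*$ and $(z,o)$ consistent with $M$'s transcript, this equals some $E^*$; any flip of a \emph{valid} coordinate $y_i$ changes its summand by at least $\epsilon=\Delta/2$ (from $\le -\epsilon$ to $0$ in the YES case, from $0$ to $\ge\Delta-\epsilon$ in the NO case), and any clause violation costs $\ge L$. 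Taking $L\ge\Delta/2$, every computational basis state outside the subspace $\Pi_{\rm good}:=\mathrm{span}\{\ket{y,z^*(y),M^*}:y\in\mathcal{Y}^*\}\otimes\mathcal{H}_W$ has energy $\ge E^*+\Delta/2$. A standard Markov-type estimate then gives $\bra{\psi}\Pi_{\rm good}\ket{\psi}\ge p:=1-2\delta_{\rm sim}/\Delta=3/4$ for every $\ket{\psi}$ with $\bra{\psi}H\ket{\psi}\le E^*+\delta_{\rm sim}$. Because $Z_O$ is diagonal on $(P,Z,O)$ it commutes with $\Pi_{\rm good}$, so cross terms vanish and $\bra{\psi}Z_O\ket{\psi}=p\cdot(-1)^{M^*}+(1-p)z_{\rm bad}$ with $|z_{\rm bad}|\le 1$. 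This forces $\bra{\psi}Z_O\ket{\psi}\le -1/2$ in the YES case and $\ge 1/2$ in the NO case, yielding the constant gap $b-a=1$.

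The main technical point I expect to sweat over is handling \emph{invalid} queries: the oracle may answer them arbitrarily, so $\mathcal{Y}^*$ may contain many strings and the minimum-energy $y$ is only pinned down after minimizing over the invalid coordinates. The rescue is that $M$'s final output is constant on $\mathcal{Y}^*$, so the output qubit $o=M^*$ is determined uniformly throughout $\Pi_{\rm good}$ even though $\Pi_{\rm good}$ is far from one-dimensional; this is precisely what secures $\apptwo$'s stronger ``for all low-energy $\ket{\psi}$'' promise in both branches. Picking $L=\poly(n)$ and $\epsilon=\Delta/2$ then balances the energetic tradeoffs so that $b-a$ is an absolute constant while $\delta_{\rm sim}$ remains inverse polynomial.
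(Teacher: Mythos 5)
Your proposal is correct and takes essentially the same route as the paper: a query Hamiltonian (diagonally controlled on the answer register) plus a Cook--Levin Hamiltonian, the observable a single Pauli $Z$ on the output qubit, with a Markov-type estimate converting a low-energy promise on $H$ into a large overlap with the ``good'' block-diagonal subspace on which the output is deterministic. The two cosmetic differences are (i) you build the Cook--Levin Hamiltonian directly on the register $(P,Z,O)$ including the answer qubits, whereas the paper keeps a separate grid register $\W$ and adds an $H_{\rm in}$ term copying $\X$ into the first row of $\W$, and (ii) you replace the paper's $\frac{a_i+b_i}{2}\proj{0}_{\X_i}$ weighting by a bonus $-(a_i+\epsilon)\proj{1}_{P_i}$ (a reparameterisation up to a classical shift, yielding the same $\ge\epsilon$ separation per wrong valid bit). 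One small imprecision: you write that for all $y\in\mathcal{Y}^*$ the good-block energy ``equals some $E^*$'', but when invalid queries are present the quantity $\sum_i y_i(\lambda(H_i)-a_i-\epsilon)$ genuinely varies across $\mathcal{Y}^*$, since the invalid coordinates can be flipped for free and $\lambda(H_i)-a_i-\epsilon$ may have either sign. This does not break the argument --- the right reading is $E^*:=\lambda(H)$, and your subsequent exchange argument (comparing any bad block $(y,z,o)$ against the $y^*\in\mathcal{Y}^*$ that agrees with $y$ on invalid coordinates and is corrected on valid ones) does show every bad basis state has energy at least $\lambda(H)+\Delta/2$ --- but the claim as literally written is false and should be rephrased.
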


\noindent To show this, we require two tools: in the next two subsections, we show how to simplify~\cite{A14}'s query Hamiltonian in the context of parallel queries, used to enforce correct query answers, and discuss how to employ the Cook-Levin reduction, which enforces a correct simulation of the circuit given those query answers, respectively.

\subsubsection{Simplifying Ambainis' query Hamiltonian}

First, we give a simplified version of the ``query Hamiltonian'' introduced by Ambainis~\cite{A14}, which will be useful in the following lemmas.
We note that~\cite{GY16} reduced the locality of the construction of~\cite{A14} by applying the unary encoding trick of Kitaev~\cite{KSV02}, but due to the simplified structure of parallel queries, here we do not require this unary encoding to achieve $O(1)$-locality for our Hamiltonian. However, \cite{GY16} also reduced the locality of the observable from $O(\log n)$-local to a single qubit, by deferring the job of simulating the circuit away from the observable and to the Hamiltonian from~\cite{KSV02}, and this improvement is now crucial, as otherwise a polynomial number of queries would demand an $O(\poly~n)$-local observable.

Given some $\Ppar{C}$ computation $U$ for an appropriate class $C$, let $(H_{\spa{Y}_i},a_i,b_i)$ be the instance of (without loss of generality) $\klh[2]$ corresponding to the $i$-th query made by $U$. Then, our ``query Hamiltonian'' is
\begin{equation}\label{eqn:Hdef}
H = \sum_{i=1}^{m} M_i := \sum_{i=1}^{m} \left(\frac{a_i+b_i}{2} \ketbra{0}{0}_{\spa{X}_{i}}\otimes I_{\spa{Y}_i} + \ketbra{1}{1}_{\spa{X}_{i}}\otimes H_{\spa{Y}_i} \right) ,
\end{equation}
where single qubit register $\spa{X}_i$ is intended to encode the answer to query $i$ and $\spa{Y}_i$ encodes the ground state of $H_{\spa{Y}_i}$. Since each query is $2$-local, $H$ is $3$-local. Notably, because $U$ makes all of its queries in \emph{parallel}, we are able to weight each of the $m$ terms equally, unlike in~\cite{A14,GY16} which studied adaptive queries. This significantly eases our later analysis.

The following lemma is analogous to Lemma~3.1 of \cite{GY16}, but with an improved spectral gap. The proof is similar to theirs, but is significantly simplified due to our use of parallel queries.

\begin{lemma}\label{lem:hgap}
	Define for any $x\in\set{0,1}^m$ the space $\spa{H}_{x_1\cdots x_m} := \bigotimes_{i=1}^m \ketbra{x_i}{x_i}\otimes \spa{Y}_i $. Then, there exists a correct query string $x\in\set{0,1}^m$ such that the ground state of $H$ lies in $\spa{H}_{x_1\cdots x_m}$. Moreover, if $\lambda$ is the minimum eigenvalue of $H$ restricted to this space, then for any incorrect query string $y_1\cdots y_m$, any state in $\spa{H}_{y_1\cdots y_m}$ has energy at least $\lambda+\epsilon$, where $\epsilon =\min_i (b_i-a_i)/2$.
\end{lemma}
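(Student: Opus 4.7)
The plan is to exploit the fact that $H$ is block-diagonal with respect to the computational basis on the $\spa{X}$ registers. The first step is to observe that each term $M_i$ preserves every subspace $\spa{H}_{x_1\cdots x_m}$: its action on $\spa{X}_i$ is the sum of two orthogonal rank-one projectors $\ketbra{0}{0}$ and $\ketbra{1}{1}$, and it is trivial on $\spa{X}_j$ for $j\neq i$. Hence the spectrum of $H$ is the disjoint union (with multiplicity) of the spectra of the restrictions $H|_{\spa{H}_x}$, and a ground state can be chosen to lie entirely in a single such subspace.

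The second step is to compute the restricted Hamiltonian explicitly. On $\spa{H}_{x_1\cdots x_m}$, the operator $H$ acts as $\sum_{i:x_i=0} \frac{a_i+b_i}{2}I_{\spa{Y}_i} + \sum_{i:x_i=1} H_{\spa{Y}_i}$, whose minimum eigenvalue is
\[
E(x) \;=\; \sum_{i} e_i(x_i), \qquad e_i(0) := \tfrac{a_i+b_i}{2}, \quad e_i(1) := \lambda(H_{\spa{Y}_i}).
\]
Because the $\spa{Y}_i$ are distinct registers, the choices decouple across $i$ and $E$ is minimized by independently picking $x_i$ to minimize $e_i$.

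The third step is the case analysis identifying a coordinate-wise minimizer as a correct query string. If query $i$ is a YES instance then $\lambda(H_{\spa{Y}_i})\le a_i < (a_i+b_i)/2$, so $e_i(1)<e_i(0)$ and the optimizer selects $x_i=1$, the correct answer; if $i$ is a NO instance then $\lambda(H_{\spa{Y}_i})\ge b_i > (a_i+b_i)/2$, so the optimizer selects $x_i=0$, again correct; if $i$ is invalid, any answer is correct by definition, so either minimizer suffices. Thus there exists an optimal $x^*$ that is a correct query string, and the ground state lies in $\spa{H}_{x^*}$, establishing the first claim with $\lambda = E(x^*)$.

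The fourth step is to quantify the gap to incorrect subspaces. Let $y$ be any incorrect query string; by definition there is a valid query $i$ where $y_i$ disagrees with the correct answer. In case (a), $i$ is YES with $y_i=0$, so $e_i(y_i)-e_i(x^*_i) = (a_i+b_i)/2 - \lambda(H_{\spa{Y}_i}) \ge (b_i-a_i)/2$; in case (b), $i$ is NO with $y_i=1$, so $e_i(y_i)-e_i(x^*_i) = \lambda(H_{\spa{Y}_i}) - (a_i+b_i)/2 \ge (b_i-a_i)/2$. For every other coordinate $j$, $e_j(y_j)\ge e_j(x^*_j)$ by the independence of the minimization. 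Summing gives $E(y) \ge E(x^*) + (b_i-a_i)/2 \ge \lambda + \epsilon$. Since $\spa{H}_y$ is $H$-invariant, every state in this subspace has expected energy at least $\lambda_{\min}(H|_{\spa{H}_y}) = E(y) \ge \lambda + \epsilon$, yielding the second claim. There is no real obstacle; the only subtlety is the definitional one that an \emph{incorrect} $y$ must disagree with the correct answer at some \emph{valid} coordinate, so that one of the two energy-gap bounds above actually applies.
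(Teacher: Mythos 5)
Your proof is correct. You take a more direct route than the paper: you observe that the spectrum of $H$ decouples across the spaces $\spa{H}_x$, write out the restricted ground energy as the separable sum $E(x)=\sum_i e_i(x_i)$ with $e_i(0)=(a_i+b_i)/2$ and $e_i(1)=\lambda(H_{\spa{Y}_i})$, and then argue by coordinate-wise minimization that the global minimizer can be taken to be a correct string, with any incorrect string paying a per-coordinate penalty of at least $(b_i-a_i)/2$ at some valid query $i$. The paper instead phrases the same underlying energy comparison as a proof by contradiction: it takes the lowest-energy correct string $x$ and the lowest-energy incorrect string $y$, flips an incorrect bit of $y$ to obtain $y'$, and argues that $y'$ being incorrect contradicts minimality of $y$, while $y'$ being correct forces the desired gap via minimality of $x$. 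The two arguments rest on the same block-diagonal structure and the same per-coordinate energy gap; what your version buys is that it avoids the contradiction scaffolding and makes the separability $E(x)=\sum_i e_i(x_i)$ fully explicit, which renders the claims about both the ground space and the gap immediate rather than deduced from extremal-element reasoning. (Incidentally, your direct presentation also sidesteps a small slip in the paper's final sentence, which writes $\lambda_{y'}\geq\lambda_x+(b_i-a_i)/2$ where the intended chain is $\lambda_y\geq\lambda_{y'}+(b_i-a_i)/2\geq\lambda_x+(b_i-a_i)/2$.) One point you handled correctly but worth emphasizing: for invalid queries both answers are deemed ``correct,'' so the coordinate-wise minimizer is automatically a correct string regardless of which side wins the tie at those coordinates, and the gap argument only ever invokes a \emph{valid} disagreeing coordinate, which is exactly what the definition of an incorrect string guarantees.
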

\begin{proof}
	We proceed by contradiction. Let $x\in\set{0,1}^m$ ($y\in\set{0,1}^m$) denote a correct (incorrect) query string which has lowest energy among all \emph{correct} (\emph{incorrect}) query strings against $H$. (Note that $x$ and $y$ are well-defined, though they may not be unique; in this latter case, any such $x$ and $y$ will suffice for our proof.) For any $z\in\set{0,1}^m$, define $\lambda_z$ as the smallest eigenvalue in $\spa{H}_z$.
	
	Since $y$ is an incorrect query string, there exists at least one $i\in\{1,\dots,m\}$ such that $y_i$ is the wrong answer to a valid query $H_{\spa{Y}_i}$. If query $i$ is a YES-instance, the smallest eigenvalue of $M_i$ corresponds to setting $\spa{X}_i$ to (the correct query answer) $\ket{1}$, and is at most $a_i$. On the other hand, the space with $\spa{X}_i$ set to $\ket{0}$ has all eigenvalues equaling $(a_i+b_i)/2$. A similar argument shows that in the NO-case, the $\ket{0}$-space has eigenvalues equaling $(a_i+b_i)/2$, and the $\ket{1}$-space has eigenvalues at least $b_i$. We conclude that flipping query bit $i$ to the correct query answer $\overline{y}_i$ allows us to ``save'' an energy penalty of $(b_i-a_i)/2$ against $M_i$, and since all other terms act invariantly on $\spa{X}_i\otimes\spa{Y}_i$, we save $(b_i-a_i)/2$ against $H$ as well.
	
	Let $y'$ denote $y$ with bit $i$ flipped.
	If $y'$ is also an incorrect query string, we have $\lambda_{y'}<\lambda_y$, a contradiction due to the minimality of $y$. Conversely, if $y'$ is a correct query string, then we must have $\lambda_{y'}\geq \lambda_x+(b_i-a_i)/2\ge \lambda +\epsilon$, as otherwise we contradict the minimality of $x$.
\end{proof}

\subsubsection{Cook-Levin construction}
\label{sec:cooklevin}

\begin{figure}[h]
	\centering
	\begin{tikzpicture}
	\begin{scope}[shift={(7,0)}]
	\foreach \x in {1,...,5}{
		\foreach \y in {1,...,4}{
			\node[circle, fill=black] at (\x,\y){};
	}}
	\draw (1.7,0.7) rectangle (4.3,2.3);
	\draw (0.7,1.7) rectangle (2.3,3.3);
	\draw (3.7,2.7) rectangle (5.3,4.3);
	\node at (3,1.5) {$h_{U_1}$};
	\node at (1.5,2.5) {$h_{U_2}$};
	\node at (4.5,3.5) {$h_{U_3}$};
	
	\draw (1,1) -- (1,2);
	\draw (5,1) -- (5,2);
	\draw (3,2) -- (3,3);
	\draw (4,2) -- (4,3);
	\draw (5,2) -- (5,3);
	\draw (1,3) -- (1,4);
	\draw (2,3) -- (2,4);
	\draw (3,3) -- (3,4);
	
	\node at (6.5,1) {$t=1$};
	\node at (6.5,2) {$t=2$};
	\node at (6.5,3) {$t=3$};
	\node at (6.5,4) {$t=4$};
	\end{scope}
	
	\foreach \x in {1,...,5}{
		\draw (\x,0.7) - -(\x,4.3);
	}
	\draw[fill=white] (1.7,1) rectangle (4.3,2);
	\draw[fill=white] (0.7,2.3) rectangle (2.3,3.3);
	\draw[fill=white] (3.7,3) rectangle (5.3,4);
	
	\node at (3,1.5) {$U_1$};
	\node at (1.5,2.8) {$U_2$};
	\node at (4.5,3.5) {$U_3$};
	\end{tikzpicture}
	\caption[Cook-Levin construction]{Cook-Levin construction of classical Hamiltonian to simulate a $\class{P}$ machine. On the left is a picture of the gates $U_i$ in the circuit of the $\Poly$ machine; the figure on the right shows the Hamiltonian terms $h_{U_t}$ encoding each gate. Each straight line edge on the right represents the interaction $\proj{01}+\proj{10}$. The initialization terms $H_{\text{in}}$ on qubits in time step $t=0$ are omitted in the diagram.}
	\label{fig:cooklevin}
\end{figure}
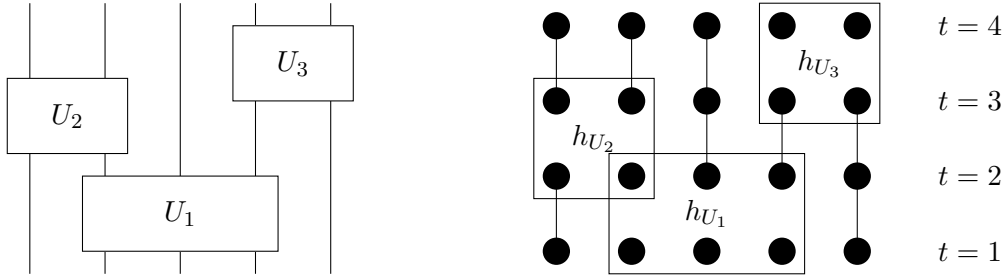

We now show how to model the Cook-Levin construction as a Hamiltonian in our setting. For this, we consider the $\class{P}$ machine to be given as a circuit of classical reversible gates $U=U_m\dots U_1$, in which one gate occurs at each time step.
The evolution of the circuit is encoded into a 2D grid of qubits, where the $t$-th row of qubits corresponds to the state of the system at time step $t$; the output of the circuit is copied to a dedicated output bit in the final timestep. The overall Hamiltonian is diagonal in the computational basis with a groundspace of states corresponding to the correct evolution of the $\class{P}$ machine.

Let $I_t$ be the set of qubits which $U_t$ acts non-trivially on. If a qubit $i \notin I_t$ (i.e. it is not acted on by the circuit at time step $t$), then there is an interaction $\proj{01}+\proj{10}$ on qubits $(i,t)$ and $(i,t+1)$, to penalize states which encode a change on qubit $i$. To encode a classical reversible gate $U_t: x \mapsto U_t(x)$ acting at time $t$, we define an interaction $h_{U_t}=I-\sum_x\proj{x}_t \otimes \proj{U_t(x)}_{t+1}$ acting non-trivially only on qubits $(i,t')$ for $i \in I_t$ and $t'$ equal to $t$ or $t+1$.
See Figure~\ref{fig:cooklevin} for a pictorial representation of this Hamiltonian.
Then
\begin{equation}
\label{eqn:Hprop}
H_{\operatorname{prop}}=\sum_{t=1}^m \left(h_{U_t}+\sum_{i \notin I_t}\proj{0}_{(i,t)}\proj{1}_{(i,t+1)}+\proj{1}_{(i,t)}\proj{0}_{(i,t+1)}\right)
\end{equation}
 is positive semi-definite and has ground space spanned by states of the form:

\[\ket{w(x)}=\ket{x}_{t=1} \otimes \ket{U_1 x}_{t=2} \otimes \dots \otimes \ket{U_m\dots U_1 x}_{t=m+1} \]

\noindent Typically, there is an additional term $H_{\text{in}}$ consisting of 1-local $\proj{1}$ terms on all qubits in the first ($t=1$) row.
Then the  Hamiltonian $H_{\text{prop}}+H_{\text{in}}$ has (1) unique ground state $\ket{w(0^n)}$ encoding the action of the circuit on the $0^n$ string, (2) ground state energy $0$, and (3) spectral gap at least $1$, since the Hamiltonian is a sum of projectors. We will later show how we adapt $\hin$ to our query answer register.

\subsubsection{Proof of hardness}

We are almost ready to prove the main result of this section, Lemma~\ref{lem:APXisPparXhard}. Before doing so, we require a final technical lemma.

\begin{lemma}
	\label{lem:lowenergydistance}
	Let $H$ be a Hamiltonian and $\rho$ a density matrix satisfying $\tr(H\rho) \le \lambda(H) +\delta$. Let $P$ be the projector onto the space of eigenvectors of $H$ with energy less than $\lambda(H) + \delta'$. Then,
	\[
	\frac{1}{2}\|\rho-\rho'\|_1 \le \sqrt{\frac{\delta}{\delta'}} , \quad \text{ where } \rho'=P \rho P / \tr(P\rho) .
	\]
	
\end{lemma}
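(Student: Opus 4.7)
}
The plan is to split the argument into two parts: first, use the low-energy assumption to show that $\rho$ has most of its weight in the range of $P$; second, invoke a ``gentle measurement'' argument to convert that weight bound into a trace distance bound.

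\emph{Step 1 (weight in the low-energy subspace).} Set $Q := I - P$ and $p := \tr(P\rho)$, so $\tr(Q\rho) = 1-p$. Since $P$ is the spectral projector onto the eigenspaces of $H$ with eigenvalues $< \lambda(H)+\delta'$, both $P$ and $Q$ commute with $H$. Consequently the cross terms vanish in
\[
    \tr(H\rho) = \tr(HP\rho P) + \tr(HQ\rho Q),
\]
because, for instance, $\tr(HP\rho Q) = \tr(QHP\rho) = \tr(HQP\rho) = 0$. On the $P$-block we only have the trivial bound $\tr(HP\rho P) \ge \lambda(H)\,p$, while on the $Q$-block every eigenvalue of $H$ is at least $\lambda(H)+\delta'$, giving $\tr(HQ\rho Q) \ge (\lambda(H)+\delta')(1-p)$. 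Combining these with the hypothesis $\tr(H\rho) \le \lambda(H) + \delta$ yields $\delta \ge \delta'(1-p)$, i.e.
\[
    1 - p \le \frac{\delta}{\delta'}.
\]

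\emph{Step 2 (weight bound implies trace-distance bound).} To turn $1-p \le \delta/\delta'$ into the desired bound $\tfrac12\|\rho-\rho'\|_1 \le \sqrt{\delta/\delta'}$, I will apply (a straightforward form of) the gentle measurement lemma: if $\tr(P\rho) = p \ge 1 - \epsilon$ for a projector $P$, then $\tfrac12\|\rho - P\rho P/p\|_1 \le \sqrt{\epsilon}$. For completeness I would sketch the standard proof via purification and Uhlmann's theorem: take any purification $\ket{\Psi}$ of $\rho$ and define $\ket{\Phi} := (P\otimes I)\ket{\Psi}/\sqrt{p}$, which is a unit vector purifying $\rho' = P\rho P/p$. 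Then $\langle\Psi|\Phi\rangle = \tr(P\rho)/\sqrt{p} = \sqrt{p}$, so by Uhlmann the fidelity satisfies $F(\rho,\rho') \ge \sqrt{p}$, and the Fuchs--van de Graaf inequality gives
\[
    \tfrac12\|\rho-\rho'\|_1 \le \sqrt{1 - F(\rho,\rho')^2} \le \sqrt{1-p}.
\]
Plugging in $1-p \le \delta/\delta'$ from Step 1 completes the proof.

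\emph{Where the work lives.} Neither step involves a genuine obstacle: Step 1 is a one-line energy decomposition once one notes that $[P,H] = 0$, and Step 2 is a textbook gentle-measurement computation. The only subtlety worth flagging is the degenerate case $p = 0$, which is ruled out automatically because $\tr(H\rho) \le \lambda(H) + \delta < \lambda(H) + \delta'$ forces some overlap with the low-energy subspace (and if $\delta \ge \delta'$ the claimed bound is vacuous anyway, since $\tfrac12\|\rho-\rho'\|_1 \le 1 \le \sqrt{\delta/\delta'}$).
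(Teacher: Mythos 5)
Your proof is correct and follows essentially the same route as the paper: bound the weight $1 - \tr(P\rho)$ using the spectral structure of $H$ with respect to $P$, then convert this into a trace-distance bound via fidelity and the Fuchs--van de Graaf inequality. The only cosmetic differences are that the paper derives the weight bound from the operator inequality $H \succeq (\lambda(H)+\delta')I - \delta' P$ rather than an explicit block decomposition, and computes the fidelity $F(\rho,\rho') = \sqrt{\tr(P\rho)}$ directly rather than invoking Uhlmann's theorem on purifications --- but these are equivalent reformulations of the same steps.
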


\begin{proof}
	First, bound the trace distance by the fidelity in the usual way (using one of the Fuchs-van de Graf inequalities \cite{FvdG99}):
	\begin{equation}
	\label{eq:tracefidelity}
	\frac{1}{2}\|\rho-\rho'\|_1 \le \sqrt{1-F(\rho,\rho')^2}
	\end{equation}
	where
	\[
	F(\rho,\rho')
	=\tr\left(\sqrt{\sqrt{\rho} \rho'\sqrt{\rho}}\right)
	=\tr\left(\sqrt{\frac{\sqrt{\rho}P\rho P\sqrt{\rho}}{\tr(P\rho)}}\right)
	=\frac{1}{\sqrt{\tr(P\rho)}} \tr(\sqrt{\rho}P\sqrt{\rho})
	=\sqrt{\tr(P\rho)},
	\]
	where the third equality follows since $(\sqrt{\rho}P\sqrt{\rho})^2=\sqrt{\rho}P\rho P\sqrt{\rho}$ and since the latter is positive semi-definite. Now, it remains to bound $\tr(P\rho)$.
	We note that $H$ has eigenvalues at least $\lambda(H)+\delta'$ on the space annihilated by $P$ and eigenvalues at least $\lambda(H)$ everywhere else, and so $H \succeq (\lambda(H)+\delta')(I-P) +\lambda(H) P=(\lambda(H)+\delta')I-\delta' P$. Therefore, using the bound on $\tr(H\rho)$, we have
	\[
	\lambda(H)+\delta\ge \tr(H\rho)\ge (\lambda(H)+\delta')\tr(\rho)-\delta'\tr(P\rho) \quad \Leftrightarrow \quad 1-\tr(P\rho) \le \frac{\delta}{\delta'} .
	\]
	Substituting this back into Equation \eqref{eq:tracefidelity} proves the result.
\end{proof}

We are now ready to prove Lemma~\ref{lem:APXisPparXhard}:
\begin{proof}[Proof of Lemma~\ref{lem:APXisPparXhard}]
	We split the Hilbert space into three parts $\W$, $\X= \bigotimes_i\X_i$, $\Y= \bigotimes_i\Y_i$ and have a Hamiltonian of the form $H=H_1+H_2$, where $H_1$ acts on $\W$ and $\X$, and $H_2$ acts on $\X$ and $\Y$. $H_2$ is the query Hamiltonian of Equation (\ref{eqn:Hdef}), and therefore by Lemma~\ref{lem:hgap} the space of eigenvectors of $H_2$ with eigenvalues less than $\lambda(H_2)+\epsilon$ is spanned by states of the form: $\ket{x}_{\X} \otimes \ket{\phi}_{\Y}$, where $x$ is a correct string of answers for the queries to the $\class{C}$ oracle.

 $H_1=H_{\text{prop}}+H_{\text{in}}$ is the classical Hamiltonian encoding the evolution of a classical $\Poly$ circuit, using the Cook-Levin construction of Section~\ref{sec:cooklevin}, where $H_{\text{prop}}$ is as defined in Equation (\ref{eqn:Hprop}). For clarity, $H_{\text{prop}}$ and $H_{\text{in}}$ act on $\W$ and $\W\otimes\X$, respectively. We think of $\W$ as ``laid out in a 2D grid'' as in Figure~\ref{fig:cooklevin}, and of $\X$ as playing the role of a ``message'' register passing information between $H_1$ and $H_2$. We modify the Hamiltonian $H_{\text{in}}$ which initializes the qubits at the start of the classical circuit. For each qubit $\X_i$  in $\X$, we initialize a corresponding qubit of the first ($t=0$) row of $\W$ into the same state with a penalty term $\proj{1}_{\X_i}\otimes\proj{0}_{\W_i}+\proj{0}_{\X_i}\otimes\proj{1}_{\W_i}$. All other qubits in the first ($t=0$) row of $\W$ are initialized to $\ket{0}$ with a penalty $\proj{1}$. The full construction is depicted diagrammatically in Figure~\ref{fig:prevstructure}. Note that as stated in the claim, $H$ is of the form $H=H_{\text{cl}}+\sum^{m}_i \proj{1}_i\otimes H_i$, where $H_{\text{cl}}$ contains $H_1$ and the local terms of $H_2$ which are tagged with $\ketbra{0}{0}$ in registers $\X_i$.

We can argue about the low-energy eigenspace of $H$ as follows. Since the ground spaces of $H_1$ and $H_2$ have non-trivial intersection, $\lambda(H)=\lambda(H_1)+\lambda(H_2)=\lambda(H_2)$. Moreover, since $[H_1,H_2]=0$ (they overlap only on the $\X$ register, on which they are both diagonal in the standard basis), and since we may assume without loss of generality that $\lambda(H_2)+\epsilon$ is inverse polynomially bounded below $1$ (otherwise, we can scale $H_1$ by an appropriate fixed polynomial), we conclude the space of eigenstates of $H$ with eigenvalue less than $\lambda(H)+\epsilon$, henceforth denoted $\spa{H}_{\rm low}$, is spanned by states of the form $\ket{\Phi}=\ket{w}_{\W}\otimes \ket{x}_{\X} \otimes \ket{\phi}_{\Y}$, where $x$ is a string of correct answers to the oracle queries and $w$ is the classical string encoding the correct computation of the $\class{P}$ circuit acting on $x$. The qubit corresponding to the output bit of the $\Poly$ circuit will be in the state $\ket{1}$ (resp. $\ket{0}$) in a YES (resp. NO) instance of $\apptwo$.

To complete the proof let the observable $A=Z_{\operatorname{out}}$, a Pauli $Z$ measurement on the qubit corresponding to the output bit of the $\Poly$ circuit, and let $\delta=\epsilon/16$ and $\delta'=\epsilon$. Consider any state $\ket{\psi}$ with $\bra{\psi}H\ket{\psi} \le \lambda(H) + \delta$. Then by Lemma~\ref{lem:lowenergydistance}, there exists a state $\ket{\psi'}\in\spa{H}_{\rm low}$ such that $\bra{\psi'}H\ket{\psi'}\leq \lambda(H)+\delta'=\lambda(H)+\epsilon$ which satisfies $\|\ketbra{\psi}{\psi}-\ketbra{\psi'}{\psi'}\|_1 \le 1/2$.
So,
\[\bra{\psi'}Z_{\operatorname{out}}\ket{\psi'}=\left\{\begin{array}{cl} -1 &\text{ in a YES instance}\\ 1 &\text{ in a NO instance} \end{array} \right.\]
which implies by H\"{o}lder's inequality that $\bra{\psi}A\ket{\psi}$ is $\le -1/2$ in a YES instance and $\ge 1/2$ in a NO instance, as required.
\end{proof}
\begin{figure}
	\begin{tikzpicture}
	\foreach \a in {1,2,3}{
		\begin{scope}[shift={(\a*6,0)}]
		\node  at (3,0.5) {$\mathcal{Y}_{\a}$};
		\draw[green] (1,0) -- (1,1) -- ( 5,1) -- (5,0) -- ( 1,0);
		\node[circle,draw=blue] (\a) at (3,2) {$\mathcal{X}_{\a}$};
		\draw (3,1) -- (\a);
		\end{scope}
		\draw (\a) -- (15,3);
	}
	\draw[red] (12,3) rectangle (18,4);
	\node at (15,3.5) {$\W$};
	\end{tikzpicture}
	\caption[Structure of the Hamiltonian used in Lemma~\ref{lem:APXisPparXhard}]{The structure of the Hamiltonian $H=H_1+H_2$ used in Lemma~\ref{lem:APXisPparXhard}, for the case of $3$ queries. $H_1$ acts on the space $\W \otimes \X$ and $H_2$ acts on $\X \otimes \Y$, where $\X= \bigotimes_i \X_i$ and $\Y= \bigotimes_i \Y_i$}
	\label{fig:prevstructure}
\end{figure}
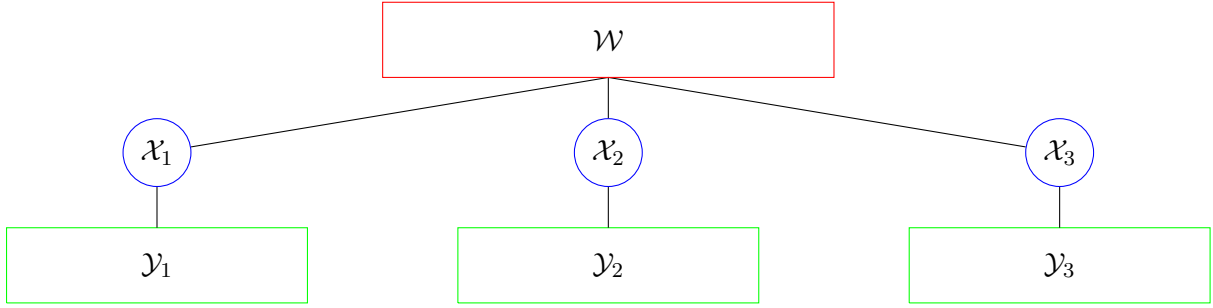

\subsection{Final result}\label{ssec:final}

Theorem~\ref{thm:APXcompleteness} is now a straightforward consequence of Lemma~\ref{lem:APXinPXlog} and Lemma~\ref{lem:APXisPparXhard}.
\begin{reptheorem}{thm:APXcompleteness}
	Let $\class{C}$ be a class of languages or promise problems. Let $\cF$ be a family of Hamiltonians for which \klh is $\class{C}$-complete under poly-time many-one reductions for all $k \ge 2$. Suppose $\cF$ is closed under positive linear combination of Hamiltonians, and that if $\{H_i\}_{i=1}^m\subset \cF$, then $H_{\text{cl}}+\sum_{i=1}^m \proj{1}_i\otimes H_i \in \cF$, where $H_{\text{cl}}$ is any classical Hamiltonian (i.e. diagonal in the standard basis). Then,
	\[
	\Plog{C}=\Ppar{C},
	\]
	and $\app$ is $\Plog{C}$-complete when restricted to $k$-local Hamiltonians and observables from $\cF$.
\end{reptheorem}

\begin{proof}
	The containment $\Poly^{\class{C[log]}} \subseteq \Poly^{||\class{C}}$ follows directly from the same argument that $\PNPlog\subseteq\PNPpar$ of \cite{Bei91}, which we summarized in Section~\ref{ssec:parVsAdap}. By Lemma~\ref{lem:APXinPXlog}, $\app$ is contained in $\Plog{C}$ for Hamiltonians and observables from $\cF$.
	And by Lemma~\ref{lem:APXisPparXhard} $\apptwo$ is $\Ppar{C}$-hard for Hamiltonians from $\cF$, even when the observable is a single Pauli $Z$ measurement, which is contained in $\cF$ by the assumption that $\cF$ contains any classical Hamiltonian $H_{\text{cl}}$.
	Since $\apptwo$ trivially reduces to $\app$, we thus have that $\app$ is similarly $\Ppar{C}$-hard, and the result follows.
\end{proof}


\section{Simulations and $\app$ for physical classes of Hamiltonians}
\label{sec:simulations}
In order to study the complexity of \app\ for physically motivated Hamiltonians in Section~\ref{sec:spatiallysparse}, we require two tools: first, hardness results for parallel query classes $\Ppar{C}$, given in Section~\ref{sec:parallel}, and second, an understanding of how \emph{simulations} affect the hardness of the problem $\app$, which this section focuses on. Specifically, we consider a simplified notion of simulation, defined below, which is a special case of the full definition given in \cite{CMP18}. This simpler case includes all of the important details necessary for the general case. For full proofs with regard to the general definition of simulation, see Appendix~\ref{sec:generalsimulations}.

\begin{definition}[Special case of definition in~\cite{CMP18}; variant of definition in~\cite{BH17}]
	\label{dfn:specialsim}
	We say that $H'$ is a $(\Delta,\eta,\epsilon)$-simulation of $H$ if there exists a local isometry $V = \bigotimes_i V_i$ such that
	\begin{enumerate}
		\item
		There exists an isometry $\widetilde{V}$ such that $\widetilde{V} \widetilde{V}^\dag = P_{\le \Delta(H')}$, where $P_{\le \Delta(H')}$ is the projector onto the space of eigenvectors of $H'$ with eigenvalues less than $\Delta$, and $\|\widetilde{V} - V\| \le \eta$;
		\item
		$\| H'_{\le \Delta} - \widetilde{V}H\widetilde{V}^\dag \| \le \epsilon$, where $H'_{\le \Delta}=  P_{\le \Delta(H')} H'  P_{\le \Delta(H')}$.
	\end{enumerate}
	We say that a family $\mathcal{F}'$ of Hamiltonians can simulate a family $\mathcal{F}$ of Hamiltonians if, for any $H \in \mathcal{F}$ and any $\eta,\epsilon >0$, and $\Delta \ge \Delta_0$ for some $\Delta_0 > 0$, there exists $H' \in \mathcal{F}'$ such that $H'$ is a $(\Delta,\eta,\epsilon)$-simulation of $H$.
	We say that the simulation is efficient if, for $H$ acting on $n$ qudits, $\|H'\| = \poly(n,1/\eta,1/\epsilon,\Delta)$; $H'$ and $\set{V_i}$ are computable in polynomial-time given $H$, $\Delta$, $\eta$ and $\epsilon$ and provided that $\Delta, 1/\eta, 1/\epsilon$ are $O(\poly~n)$; and each isometry $V_i$ maps from at most one qudit to $O(1)$ qudits.
\end{definition}

\noindent We remark that unlike in~\cite{CMP18}, here we have the additional requirement that the local isometry $V$ is efficiently computable. This ensures that given some input Hamiltonian $H$ and local observable $A$, we can use the notion of simulation to efficiently produce a simulating Hamiltonian $H'$ \emph{and} a simulating observable $A'$ (see proof of Lemma~\ref{lem:simulations=reductions} below). As far as we are aware, all known constructions satisfying the notion of efficient simulation from~\cite{CMP18} fulfill this additional requirement (see proof of Theorem~\ref{thm:S-APXSIM} for examples).

Note that eigenvalues are preserved up to a small additive factor $\epsilon$ in a simulation, but that the YES instance in the definition of $\app$ is not robust to such perturbations of eigenvalues when the spectral gap is very small.
We therefore do not expect to show directly that hardness of $\app$ is preserved by simulations, and instead we work with the problem $\apptwo$.
(Recall though, an instance of $\apptwo$ trivially reduces to one of $\app$ with no modifications. Thus, if $\apptwo$ is hard for some family of Hamiltonians, then so too is $\app$.)
Let $\Fapptwo$ denote the problem $\apptwo$ restricted to Hamiltonians taken from the family $\mathcal{F}$.

\begin{lemma}[Simulations preserve hardness of $\apptwo$]
	\label{lem:simulations=reductions}
	Let $\mathcal{F}$ be a family of Hamiltonians which can be efficiently simulated by another family $\mathcal{F}'$.
	Then, $\Fapptwo$ reduces to $\Fapptwo[F']$ via polynomial-time many-one reductions.
\end{lemma}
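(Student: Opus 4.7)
The plan is to translate any instance $(H, A, k, \ell, a, b, \delta)$ of $\Fapptwo$ into an instance $(H', A', k', \ell', a', b', \delta')$ of $\Fapptwo[F']$ by invoking the efficient simulation guarantee. I would first choose simulation parameters of inverse polynomial size: $\epsilon = \Theta(\delta)$, $\eta = \Theta((b-a)/\|A\|)$, and $\Delta$ polynomially large relative to $\|H\|$ and $\|A\|$, all of which lie in $O(\poly(n))$ or $\Omega(1/\poly(n))$. Efficient simulability then yields a polynomial-time computable $H' \in \mathcal{F}'$ together with a local isometry $V = \bigotimes_i V_i$ and an isometry $\widetilde{V}$ satisfying $\widetilde{V}\widetilde{V}^\dag = P_{\leq \Delta(H')}$, $\|\widetilde{V} - V\| \leq \eta$, and $\|H'_{\leq \Delta} - \widetilde{V}H\widetilde{V}^\dag\| \leq \epsilon$. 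I then set $A' := VAV^\dag$, which is $O(\ell)$-local since each $V_i$ maps one qudit to $O(1)$ qudits, and I choose new thresholds $a' = a + \xi$, $b' = b - \xi$ with $\xi = (b-a)/3$ and $\delta' = \delta/4$, so that $b'-a' = (b-a)/3 \geq 1/\poly(n)$.

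The correctness proof translates any low-energy witness of $H'$ back to one of $H$. First, $\lambda(H') = \lambda(H) \pm O(\epsilon)$ follows directly from Definition~\ref{dfn:specialsim} by evaluating both operators on the corresponding ground states lifted through $\widetilde{V}$. Given any $\ket{\psi'}$ with $\bra{\psi'}H'\ket{\psi'} \leq \lambda(H') + \delta'$, I decompose $\ket{\psi'} = \alpha\ket{\psi'_\leq} + \beta\ket{\psi'_>}$ relative to $P_{\leq \Delta(H')}$ and its complement. Since the $>\Delta$ block of $H'$ has eigenvalues at least $\Delta$, expanding the energy (and recalling $H'$ does not mix the two blocks because both projectors commute with $H'$) yields $|\beta|^2 \leq \delta'/(\Delta - \lambda(H'))$, which can be driven below $O((b-a)/\|A\|)$ by the choice of $\Delta$. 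Setting $\ket{\phi} := \widetilde{V}^\dag\ket{\psi'_\leq}/\|\widetilde{V}^\dag\ket{\psi'_\leq}\|$ so that $\widetilde{V}\ket{\phi}$ equals $\ket{\psi'_\leq}$ up to phase, property~2 of the simulation gives $|\bra{\phi}H\ket{\phi} - \bra{\psi'_\leq}H'\ket{\psi'_\leq}| \leq \epsilon$, while the same decomposition estimate yields $\bra{\psi'_\leq}H'\ket{\psi'_\leq} \leq \lambda(H') + \delta'/|\alpha|^2 \leq \lambda(H') + 2\delta'$. Combining, $\ket{\phi}$ lies in the low-energy window of $H$, i.e.\ $\bra{\phi}H\ket{\phi} \leq \lambda(H) + \delta$.

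The $\apptwo$ promise on $H$ then bounds $\bra{\phi}A\ket{\phi}$ by $a$ (YES) or $b$ (NO), and I transfer this to $A'$ by writing $\bra{\psi'}A'\ket{\psi'} = \bra{\psi'}\widetilde{V}A\widetilde{V}^\dag\ket{\psi'} + \bra{\psi'}(VAV^\dag - \widetilde{V}A\widetilde{V}^\dag)\ket{\psi'}$. The second term is bounded by $2\eta\|A\|$ via $\|V-\widetilde{V}\| \leq \eta$ and the triangle inequality, while the first equals $|\alpha|^2\bra{\phi}A\ket{\phi}$ because $\widetilde{V}A\widetilde{V}^\dag$ annihilates the orthogonal complement of the image of $\widetilde{V}$ and $\widetilde{V}^\dag\widetilde{V} = I$ on its domain. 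Collecting, $|\bra{\psi'}A'\ket{\psi'} - \bra{\phi}A\ket{\phi}| \leq 2\eta\|A\| + 2|\beta|^2\|A\|$, which my parameter choices make at most $\xi$; this closes both directions of correctness and produces a polynomial-time many-one reduction.

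The main obstacle I expect is the joint error bookkeeping: both $\delta$ and $b-a$ may be as small as $1/\poly(n)$, so simultaneously forcing the accumulated errors (from $\eta$, $\epsilon$, and the high-$\Delta$ leakage $|\beta|^2$) below the final slack $\xi$, while keeping $\Delta, 1/\eta, 1/\epsilon$ polynomially bounded so that the simulation remains polynomial-size and polynomial-time computable, demands mutually consistent parameter choices. Notably, it is precisely the universal (``for all low-energy states'') structure of $\apptwo$, rather than the existential YES case of $\app$, that makes the balance feasible: I need only control worst-case expectations and never have to preserve a single ground state through the simulation, which is why the reduction is stated for $\apptwo$ and not $\app$.
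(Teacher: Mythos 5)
Your proof is essentially the same approach as the paper's (for the simplified Definition~\ref{dfn:specialsim}): invoke an efficient $(\Delta,\eta,\epsilon)$-simulation with inverse-polynomial parameters, decompose the low-energy witness $\ket{\psi'}$ along $P_{\le\Delta(H')}$ and its complement, pull the high-overlap component back through $\widetilde{V}$ to a low-energy state of $H$, transfer the observable through the isometry, and control all errors against the slack $(b-a)/3$. Your parameter bookkeeping, your use of the $\Delta$-cutoff to bound $|\beta|^2$, and your use of Lemma~27-style eigenvalue stability are all in line with the paper. The paper additionally handles the general encoding-based definition of simulation (Appendix~\ref{sec:generalsimulations}); your argument, like the paper's main-text proof, addresses only the isometry special case, which is acceptable.

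There is, however, one technical snag you should repair. You set $A' := VAV^\dagger$ and simultaneously claim $A'$ is $O(\ell)$-local. If $A = A_S \otimes I_{\bar S}$ acts on a subset $S$ of qudits, then the literal $VAV^\dagger$ equals $(V_S A_S V_S^\dagger)\otimes\bigl(\prod_{i\notin S} V_i V_i^\dagger\bigr)$, which carries a nontrivial projector on every qudit outside $S$ and is therefore \emph{not} local. The local observable you need, as the paper defines it, is $A' := (V_S A_S V_S^\dagger)\otimes I$. But then your transfer decomposition $\bra{\psi'}A'\ket{\psi'} = \bra{\psi'}\widetilde V A\widetilde V^\dagger\ket{\psi'} + \bra{\psi'}(VAV^\dagger - \widetilde V A\widetilde V^\dagger)\ket{\psi'}$ is no longer an identity, since $A' \ne VAV^\dagger$; and the operator norm of $A' - VAV^\dagger = (V_S A_S V_S^\dagger)\otimes(I - V_{\bar S}V_{\bar S}^\dagger)$ is $\|A\|$, not small, so a direct operator-norm bound fails. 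The correction is exactly what the paper does: use the exact identity $V^\dagger A' V = A$ and bound $\bigl|\bra{\psi'}A'\ket{\psi'} - \bra{\psi}A\ket{\psi}\bigr| \le \|A'\|\,\bigl\|\ketbra{\psi'}{\psi'} - V\ketbra{\psi}{\psi}V^\dagger\bigr\|_1$, then split the trace distance through $\ketbra{\phi}{\phi} = \widetilde V\ketbra{\psi}{\psi}\widetilde V^\dagger$ to get the $2\sqrt{\delta'/(\Delta-\lambda(H'))}$ and $2\eta$ pieces. (Alternatively, you can salvage your decomposition by arguing that $\ket{\psi'}$ lies within $O(\eta + |\beta|)$ of $\text{Im}(V)$, so the extra projector $I - V_{\bar S}V_{\bar S}^\dagger$ approximately annihilates $\ket{\psi'}$; but the paper's route is cleaner.) Everything else in your argument — the $|\beta|^2 \le \delta'/(\Delta - \lambda(H'))$ estimate, the energy bound on $\ket{\phi}$, and the threshold-shifting by $(b-a)/3$ — matches the paper.
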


Here, we provide a proof only for the special case where the simulation is of the form given in Definition~\ref{dfn:specialsim}; for a full proof of the general case, see Appendix~\ref{sec:generalsimulations}. 
\begin{proof}
	Let $\Pi = (H,A,k,\ell,a,b,\delta)$ be an instance of $\Fapptwo$. We will demonstrate that one can efficiently compute $H' \in \mathcal{F}'$ and $A',k',\ell',a',b',$ and $\delta'$ such that $\Pi' = (H',A',k',\ell',a',b',\delta')$ is a YES (respectively NO) instance of $\apptwo$ if $\Pi$ is a YES (resp. NO) instance of $\apptwo$; further, we will have that $\ell'\in O(\ell), a'=a+(b-a)/3,b'=b-(b-a)/3$ and $\delta-\delta'\geq 1/\poly(n)$. To do so, we shall pick parameters $\Delta, \eta, \epsilon$ so that $\Delta,1/\eta,1/\epsilon$ are $O(\poly~n)$, upon which the definition of efficient simulation (Definition~\ref{dfn:specialsim}) guarantees we can efficiently compute a Hamiltonian $H'$ being a $(\Delta, \eta, \epsilon)$-simulation of $H$, which we claim will preserve YES and NO instances $H$.

 Let us leave $\Delta, \eta, \epsilon$ arbitrary for now, and assume we have a simulation of the form given in Definition~\ref{dfn:specialsim}. Then, there exists an isometry $\Vt:\cH\rightarrow \cH'$ ($\cH$ and $\cH'$ are the spaces $H$ and $H'$ act on, respectively) which maps onto  the space of eigenvectors of $H'$ with eigenvalues less than $\Delta$, i.e. onto $S_{\le\Delta}:=\Span\{\ket{\psi}:H'\ket{\psi}=\lambda\ket{\psi}, \lambda \le\Delta\}$. In addition, $\Vt$ satisfies $\|\Vt-\bigotimes_iV_i\|\leqslant \eta$ and  $\|H_{\leqslant \Delta}-\Vt H\Vt^{\dagger}\|\leqslant \epsilon$.
		
	Let $\ket{\psi'}$ be a low-energy state of $H'$ satisfying $\bra{\psi'}H'\ket{\psi'} \le \lambda(H')+ \delta'$ for $\delta'$ to be set later. First, we show that $\ket{\psi'}$ is close to a state $\Vt \ket{\psi}$ where $\ket{\psi}$ is a low-energy state of $H$; then, we will show that there exists an observable $A'$, depending only on $A$ and the isometries $V_i$,  such that $\bra{\psi'}A' \ket{\psi'}$ approximates $\bra{\psi}A\ket{\psi}$ for any choice of $\ket{\psi}$. Since by Definition~\ref{dfn:specialsim} $A$ is efficiently computable, our choice of $A'$ will be as well.
	
	Let $\ket{\phi}=P_{\le \Delta(H')} \ket{\psi'}/\|P_{\le \Delta(H')} \ket{\psi'}\|$ be the (normalized) component of $\ket{\psi'}$ in $S_{\le \Delta}$.
	By Lemma~\ref{lem:lowenergydistance}, we have
	\[
	\frac{1}{2} \left\| \proj{\psi'}-\proj{\phi}\right\|_1 \le \sqrt{\frac{\delta'}{\Delta- \lambda(H')}} .
	\]
	
	Since $S_{\le \Delta}= \text{Im}(\widetilde{V})$, there must exist a state $\ket{\psi}$ in $\cH$ such that $\widetilde{V}\ket{\psi}=\ket{\phi}$; next, we will show that $\ket{\psi}$ has low-energy with respect to $H$.
	Note that $\ket{\psi'}= \sqrt{p}\ket{\phi}+\sqrt{1-p}\ket{\phi^{\perp}}$ for some $p \in [0,1]$ and a state $\ket{\phi^{\perp}}$ in $S_{\le \Delta}^{\perp}$ which has higher energy: $\bra{\phi^{\perp}} H' \ket{\phi^{\perp}}\ge \Delta \ge \bra{\phi} H' \ket{\phi}$. Therefore,
	\[
	\bra{\psi'}H'\ket{\psi'}=p\bra{\phi} H' \ket{\phi} +(1-p)\bra{\phi^{\perp}} H' \ket{\phi^{\perp}}\geqslant \bra{\phi} H' \ket{\phi} ,
	\]
	which implies that
	\begin{align}
	\bra{\psi}H\ket{\psi} -\bra{\psi'}H'\ket{\psi'} & \leqslant \bra{\psi}H\ket{\psi}- \bra{\phi}H'\ket{\phi} \\
	& =\bra{\phi}\Vt H\Vt^{\dagger}\ket{\phi}- \bra{\phi}H'\ket{\phi} \\
	& \leqslant \| H'_{\le \Delta}-\Vt H\Vt^{\dagger}\| \leqslant\epsilon .
	\end{align}

	So, $\bra{\psi}H\ket{\psi}\leqslant\lambda(H')+\delta'+\epsilon\leqslant\lambda(H)+\delta'+2\epsilon$, where the final inequality follows from Lemma 27 of \cite{CMP18}, which roughly states that eigenvalues are preserved up to error $\epsilon$ in a simulation (in particular, the minimum eigenvalues satisfy $\abs{\lambda(H')-\lambda(H)}\leq \epsilon$).
	
	For any local measurement $A_S$ acting on subset of $S$ qubits $\cH_S$ {(here $\cH_S$ is the Hilbert space for qudits in set $S\subseteq[n]$)}, we can define the local measurement $A'_S=V_SA_SV_S^{\dagger}$ on $\cH'_S$  where $V=\bigotimes V_i$ is the local isometry in the definition of simulation and $V_S:=\bigotimes_{i\in S} V_i$. Note that $A_S'$ acts only on the $O(\abs{S})$ qudits which $V_S$ maps to.
	Furthermore, $V^{\dagger}(A_S' \otimes I)V=A_S \otimes I$ and so
	\begin{align}
	|\bra{\psi'}A_S' \otimes I\ket{\psi'}-&\bra{\psi}A_S \otimes I \ket{\psi}|=|\bra{\psi'}A_S' \otimes I\ket{\psi'}-\bra{\psi}V^{\dagger}(A_S' \otimes I)V\ket{\psi}|\\
	&\leqslant \|A'_S\| \| \proj{\psi'}- V\proj{\psi}V^{\dagger}\|_1 \\
	&\leqslant \|A_S\| \left(\| \proj{\psi'}- \proj{\phi}\|_1 +\| \Vt \proj{\psi}\Vt^{\dagger}-V\proj{\psi}V^{\dagger}\|_1\right)\\
	& \leqslant \|A_S\|\left(\| \proj{\psi'}- \proj{\phi}\|_1+2\|\Vt -V\|\right) \label{eq:afterVtapprox}\\
	&\leqslant \|A_S\|\left(2\sqrt{\frac{\delta'}{\Delta-\lambda(H')}}+2\eta\right)
	\end{align}
	
	where to get to (\ref{eq:afterVtapprox}), we have used the triangle inequality to bound:
	\begin{align}
	\| \Vt \proj{\psi}\Vt^{\dagger}-V\proj{\psi}V^{\dagger}\|_1
	&\le
	\| \Vt \proj{\psi}\Vt^{\dagger}-V\proj{\psi}\Vt^{\dagger}\|_1+
	\| V \proj{\psi}\Vt^{\dagger}-V\proj{\psi}V^{\dagger}\|_1 \\
	&= \|\Vt-V\|\left(\|\proj{\psi}\Vt^\dagger\|_1+\|V\proj{\psi}\|_1\right)
	= 2\|\Vt-V\|
	\end{align}
	
	Therefore, to ensure that $\Pi'$ is a YES (resp. NO) instance if $\Pi$ is a YES (resp. NO) instance, we will choose $a'=a+(b-a)/3$ and $b'=b-(b-a)/3$. Choosing $\delta', \Delta, \epsilon, \eta$ such that
	\[
	0<\delta'+2\epsilon < \delta \qquad \text{ and } \qquad  0< \|A\|\left(2\sqrt{\frac{\delta'}{\Delta-\lambda(H')}}+2\eta\right) < \frac{b-a}{3}
	\]
	completes the proof.
\end{proof}

As a corollary of our results, we obtain Theorem~\ref{thm:S-APXSIM}, which gives a complete classification of the complexity of $\app$ when restricted to families of Hamiltonians and measurements built up from a set of interactions $\cS$. We restate it here for convenience:

\begin{reptheorem}{thm:S-APXSIM}
	Let $\mathcal{S}$ be an arbitrary fixed subset of Hermitian matrices on at most 2 qubits. Then the $\app$ problem, restricted to Hamiltonians $H$ and measurements $A$ given as a linear combination of terms from $\cS$ { and the identity $I$}, is
	\begin{enumerate}
		\item in $\class{P}$, if every matrix in $\mathcal{S}$ is 1-local;
		\item $\Plog{NP}$-complete, if $\cS$ does not satisfy the previous condition and there exists $U \in SU(2)$ such that $U$ diagonalizes all 1-qubit matrices in $\mathcal{S}$ and $U^{\otimes 2}$ diagonalizes all 2-qubit matrices in $\mathcal{S}$;
		\item $\Plog{StoqMA}$-complete, if $\cS$ does not satisfy the previous condition and there exists $U \in SU(2)$ such that, for each 2-qubit matrix $H_i \in \mathcal{S}$, $U^{\otimes 2} H_i (U^{\dag})^{\otimes 2} = \alpha_i Z^{\otimes 2} + A_iI + IB_i$, where $\alpha_i \in \R$ and $A_i$, $B_i$ are arbitrary single-qubit Hermitian matrices;
		\item $\Plog{QMA}$-complete, otherwise.
	\end{enumerate}
\end{reptheorem}

\begin{proof} We first discuss containment in the claimed complexity classes, and then hardness.

\paragraph{Containment.} In the first case it is trivial to simulate the outcome of 1-local measurements on the ground state of a 1-local Hamiltonian, as the ground state is an easily calculated product state. For the other three cases, it was shown in \cite{CM16} and \cite{BH17}, that $k$-LH for these three families of Hamiltonians is complete for the classes $\class{NP}, \class{StoqMA}, \class{QMA}$, respectively. Therefore, by Lemma~\ref{lem:APXinPXlog}, $\app$ is contained in $\Plog{NP}, \Plog{StoqMA}$ and $\Plog{QMA}$, respectively. (Note that the precondition of Lemma~\ref{lem:APXinPXlog} is met, i.e. for $H$ and $A$ given as a linear combination of terms from $\cS$ and $I$, we have that \klh for $\alpha H+ \beta A$ is contained in the respective complexity class of $\class{NP}$, $\class{StoqMA}$, or $\class{QMA}$, for any $0\le \alpha, \beta \le \poly(n)$, and for all $k\geq 1$.)

\paragraph{Hardness.} Starting with the referenced completeness results of~\cite{CM16,BH17} above, we now wish to show $\app$ is hard for $\Plog{NP}, \Plog{StoqMA}$ and $\Plog{QMA}$ for cases 2--4 of our claim. At first glance, it may seem that Theorem~\ref{thm:APXcompleteness} already yields this result, since that theorem says that $\app$ is $\Plog{C}$-complete when restricted to $k$-local Hamiltonians and observables from a family $\cF$. Unfortunately, however, a precondition of Theorem~\ref{thm:APXcompleteness} is that $\cF$ must contain all classical (i.e. diagonal in standard basis) Hamiltonians, which is \emph{not} necessarily true for cases 2--4 of our claim here. Thus, some work is required get the hardness claims of cases 2--4 here.

To achieve this, we first apply Lemma~\ref{lem:APXisPparXhard} to conclude that \apptwo\ is hard for classes $\Ppar{NP}, \Ppar{StoqMA}$ and $\Ppar{QMA}$ for the families of classical, stoquastic and arbitrary local Hamiltonians, respectively. (In contrast to the Hamiltonians of cases 2--4 of our claim here, the sets of classical, stoquastic and arbitrary local Hamiltonians \emph{do} contain all diagonal Hamiltonians, and thus satisfy the preconditions of Lemma~\ref{lem:APXisPparXhard}.) We then use simulations, in combination with Lemma~\ref{lem:simulations=reductions}, to reduce the sets of classical, stoquastic, and arbitrary local Hamiltonians to the Hamiltonians in cases 2,3,4 of our claim here, respectively.

Specifically, it was shown in \cite{CMP18} that the three families of Hamiltonians in cases 2--4 of our claim can efficiently simulate all classical, stoquastic and arbitrary local Hamiltonians, respectively, via some local isometry $V$ (see Definition~\ref{dfn:specialsim}). It follows by Lemma~\ref{lem:simulations=reductions} (which states that simulations act like hardness reductions) that $\apptwo$ is hard for $\Ppar{NP}, \Ppar{StoqMA}$ and $\Ppar{QMA}$ respectively, with respect to (using the notation of Lemma~\ref{lem:simulations=reductions}) a local observable $A'$ (in the larger, simulating, space) such that $A'=VAV^{\dagger}$ (where in our case $A$ will equal Pauli $Z$ due to the proof of Lemma~\ref{lem:APXisPparXhard}). The only obstacle to achieving our current claim is that we also require $A'$ to be chosen as a linear combination of terms from $\cS$ and $I$. This is what the remainder of the proof shall show.

\emph{Observation (*).} To begin, note the proof of Lemma~\ref{lem:APXisPparXhard} used single qubit observable $Z$, since we encoded the P machine's output in a single bit, which we assumed was set to $\ket{0}$ for ``reject'' and $\ket{1}$ for ``accept''. However, without loss of generality, we may alter the starting P machine to encode its output in some more general function on two bits, such as the parity function. (For example, the P machine can be assumed to output a $2$-bit string $q$, such that $q$ has odd parity if and only if the P machine wishes to accept.) We use this observation as follows. Consider any classical observable $A$ with two distinct eigenvalues $\lambda_x<\lambda_y$ corresponding to eigenstates $\ket{x}$ and $\ket{y}$, respectively, for distinct strings $x,y\in\set{0,1}^2$. Then, assuming the specification of $A$ is independent of the number of qubits in the system (thus, $A$ is specified to within constant bits of precision, and so $\lambda_y-\lambda_x\in\Theta(1)$), if we set the P machine to output $x$ when it wishes to accept and $y$ when it wishes to reject, a measurement with observable $A$ suffices to distinguish these two cases. With this observation in hand, we consider cases 2--4 of our claim, in particular with respect to the action of isometry $V$.

\emph{Case 2: $\Ppar{NP}$-completeness.} First note that in this case we can assume without loss of generality that all interactions in $\cS$ are diagonal (by performing  a global basis change of $U^{\otimes n}$ if necessary) . Since we are not in the first case we know also that there is a 2-local interaction in $\cS$ with at least two distinct eigenvalues. By Observation (*), it will suffice to simulate such an observable on a particular pair of qubits in the original system; call this operator $A$. For the  $\Plog{NP}$ case, the isometry $V$ appends  some ancilla qubits in a computational basis state (in the $U^{\otimes n}$ basis) \cite{DlCC16}. We can therefore choose $A'$ to be the same 2-local observable $A$, but acting on the corresponding qubits in the larger, simulating system; that is, if we let $A'=A \otimes I$ (where the identity term acts on the ancilla qubits), then $V^{\dagger} A' V=A$ as desired.

\emph{Case 3: $\Ppar{StoqMA}$-completeness.} For the third case, one can check that the reductions in \cite{BH17} correspond to a simulation with an isometry $V$ which maps each qubit  $\ket{0} \mapsto \ket{0011}$ and $\ket{1} \mapsto \ket{1100}$ and appends some additional ancilla qubits in a computational basis state (see discussion in Section 9.4 of \cite{CMP18}). Thus, a classical $2$-local observable $Z\otimes Z+\diag(A)\otimes I+I\otimes \diag(B)$ (which we may use by Observation (*)) can be simulated in the larger, simulating space on physical qubits $1,2,3,4 $ (logical qubit $1$) and $5,6,7,8$ (logical qubit $2$) via:
\[
    V^{\dagger}(Z_1Z_5+A_1+B_5)V= Z\otimes Z+\diag(A)\otimes I+I\otimes \diag(B),
\]
where $\diag(A)$ denotes the diagonal part of $A$, i.e. $\diag(A)=\sum_{i=0}^1 \proj{i}A\proj{i}$. Thus, measuring observable $(Z_1Z_5+A_1+B_5)$ on the larger, simulating Hamiltonian $H'$ (which has the desired form of Case 3 here) is equivalent to measuring $Z\otimes Z+\diag(A)\otimes I+I\otimes \diag(B)$ on the starting Hamiltonian $H$ in the simulation (again, using notation of Lemma~\ref{lem:simulations=reductions}).
	
\emph{Case 4: $\Ppar{QMA}$-completeness.}	The final case is slightly more complicated. When showing that these Hamiltonians are universal, the one step with a non-trivial isometry is simulating $\{X,Z,XX,ZZ\}$-Hamiltonians with $\{XX+YY\}$-Hamiltonians or $\{XX+YY+ZZ\}$-Hamiltonians in Theorem 41 of \cite{CMP18}.
	In both of these cases, the isometry $V$ maps each qubit via action
	\[\ket{0} \mapsto \ket{\Psi^-}_{13}\ket{\Psi^-}_{24} \qquad
	\ket{1} \mapsto \tfrac{2}{\sqrt{3}}\ket{\Psi^-}_{12}\ket{\Psi^-}_{34}
	- \tfrac{1}{\sqrt{3}}\ket{\Psi^-}_{13}\ket{\Psi^-}_{24}.\]
	In the proof of Theorem 41 of \cite{CMP18}, it is shown that a single $Z$ observable can be reproduced by choosing $A=h_{13}$ (where either $h=XX+YY$ or $h=XX+YY+ZZ$), that is $V^{\dagger}h_{13}\otimes I_{24}V$ is proportional to $Z$.
	
	The proof is completed by Corollary~\ref{cor:complexityequalities} (i.e. logarithmic adaptive queries are equivalent to polynomially many parallel queries).
\end{proof}

\section{Spatially sparse construction}
\label{sec:spatiallysparse}

We now combine the tools developed in the previous sections to study the complexity of \app\ for physical Hamiltonians. Our approach is to show that $\apptwo$ is $\PQMApar$-hard even for Hamiltonians on a \emph{spatially sparse interaction graph}, defined below:

\begin{definition}[Spatial sparsity~\cite{OT08}]
  A spatially sparse interaction (hyper)graph $G$ on $n$ vertices is defined as a (hyper)graph in which \begin{inparaenum}
  \item every vertex participates in $O(1)$ hyper-edges;
  \item there is a straight-line drawing in the plane such that every hyper-edge overlaps with $O(1)$ other hyper-edges and the surface covered by every hyper-edge is $O(1)$.
  \end{inparaenum}
\end{definition}

\begin{lemma}
\label{lem:spatiallysparse}
$\apptwo$ is $\PQMApar$-hard even when $b-a=\Omega(1)$, the observable $A$ is 1-local (single-qubit), and the Hamiltonian $H$ is 4-local and is restricted to a spatially sparse interaction graph.
\end{lemma}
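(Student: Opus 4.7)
The plan is to adapt the proof of Lemma~\ref{lem:APXisPparXhard} so that the Hamiltonian $H$ becomes both spatially sparse and at most $4$-local, while keeping the observable a single Pauli $Z$ and the promise gap $\Omega(1)$. I reuse the decomposition $H=H_1+H_2$ from that proof, where $H_2$ is the parallel query Hamiltonian of Eq.~\eqref{eqn:Hdef} and $H_1$ is the Cook-Levin-style encoding of the $\Poly$-machine evolution. Since we are starting from a $\Ppar{QMA}$ computation, the $\QMA$-hardness of $\klh[2]$ on a spatially sparse 2D interaction graph~\cite{OT08} lets me assume each query $H_{\spa{Y}_i}$ is already $2$-local and spatially sparse.

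\paragraph{Reducing the degree of the control qubit.}
In the Lemma~\ref{lem:APXisPparXhard} construction, each $\spa{X}_i$ has $\poly(n)$ degree, since it couples to every local term of $H_{\spa{Y}_i}$ and to the initialization penalty linking it to $\spa{W}$. I replace $\spa{X}_i$ by a chain of copies $\spa{X}_i^{(0)},\ldots,\spa{X}_i^{(r_i)}$, with one copy per local term of $H_{\spa{Y}_i}$ plus one for the initialization, tied together by $2$-local equality penalties $J\,(\proj{01}+\proj{10})$ on consecutive pairs. Each $3$-local controlled term is rewritten as $\ketbra{1}{1}_{\spa{X}_i^{(j)}}\otimes H_{\spa{Y}_i,j}$, and the initialization penalty attaches to $\spa{X}_i^{(0)}$. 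Every copy now has $O(1)$ degree.

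\paragraph{Making $H_1$ both $4$-local and spatially sparse.}
For $H_1$ I keep the 2D clock-grid layout of Section~\ref{sec:cooklevin}, but avoid the $6$-local global propagation term for a possibly Toffoli-like gate $U_t$ acting on set $I_t$ by replacing it with the equivalent sum of $4$-local per-bit penalties
\[
h_{U_t}=\sum_{x\in\{0,1\}^{|I_t|}}\sum_{i\in I_t}\proj{x}_{I_t,t}\otimes\proj{\overline{U_t(x)_i}}_{(i,t+1)},
\]
which is positive semidefinite and vanishes exactly on states correctly propagating $U_t$ (each classical input $x$ at time $t$ is then paired with a state at time $t+1$ having no wrong output bit). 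Together with the $2$-local no-change penalties from Eq.~\eqref{eqn:Hprop} for $i\notin I_t$, $H_1$ is now a $4$-local Hamiltonian whose hyperedges live on a constant-size patch of the 2D clock grid and therefore form a spatially sparse interaction graph.

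\paragraph{Planar layout, correctness, and the main obstacle.}
Place the Cook-Levin grid horizontally, and beneath each first-row qubit $\spa{W}_i$ route the chain $\spa{X}_i^{(0)},\ldots,\spa{X}_i^{(r_i)}$ as a snake through a spatially sparse 2D drawing of $H_{\spa{Y}_i}$, arranging so $\spa{X}_i^{(j)}$ is placed adjacent to the two qubits of the $j$-th local term of $H_{\spa{Y}_i}$. Each $3$-local controlled hyperedge then occupies $O(1)$ area and overlaps only $O(1)$ others, establishing spatial sparsity. For correctness, choose $J=\poly(n)$ large enough that an analogue of Lemma~\ref{lem:hgap} forces the low-energy subspace of the modified $H_2$ to lie, up to an $\Omega(1/\poly)$ spectral gap, inside the ``all copies equal'' subspace, on which the modified $H_2$ coincides with the original query Hamiltonian. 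The argument of Lemma~\ref{lem:APXisPparXhard}, via Lemma~\ref{lem:lowenergydistance} and H\"{o}lder's inequality, then delivers $\bra{\psi}Z_{\operatorname{out}}\ket{\psi}\leq-1/2$ in YES instances and $\geq 1/2$ in NO instances for every $\delta$-low-energy $\ket{\psi}$, giving $b-a=\Omega(1)$. The main obstacle is the quantitative balancing: $J$ must be polynomially large so that the equality penalties dominate each controlled term, yet all of $\|H\|$, $\epsilon$, and the ``effective'' $\delta'$ fed into Lemma~\ref{lem:lowenergydistance} must remain $\Omega(1/\poly)$, so that the $\sqrt{\delta/\delta'}\le 1/2$ trace-distance bound still isolates a state respecting all copy-equalities.
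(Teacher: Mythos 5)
Your overall plan is the same as the paper's: keep the $H=H_1+H_2$ structure from Lemma~\ref{lem:APXisPparXhard}, take each $H_{\spa{Y}_i}$ to be a spatially sparse $2$-local Hamiltonian via~\cite{OT08}, blow up each control qubit $\X_i$ into many physical copies (one near each local term of $H_{\spa{Y}_i}$ plus a route back to the grid for $\W$), re-route each controlled term to the nearest copy, and add $2$-local equality penalties so that the copies are forced to agree. The paper's $H_3$ is exactly your $J\,(\proj{01}+\proj{10})$ penalties, just laid over a spatially sparse graph $G_i$ on the $\X_i$ copies rather than a snake.

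Where you diverge is the final step, and this is also where your ``main obstacle'' paragraph overstates the difficulty. You treat the coupling $J$ as a perturbative penalty whose weight must be balanced against $\|H\|$, $\epsilon$, and the $\delta'$ fed into Lemma~\ref{lem:lowenergydistance}. But no such balancing is necessary, because every term of $H$ --- $H_1$, $H_2$, and the equality penalties --- acts \emph{diagonally} on the $\X_i$ registers. Hence the equality penalties commute with everything else, $H$ is exactly block diagonal with respect to the split $\spa{G}_i \oplus \spa{G}_i^\perp$ (where $\spa{G}_i$ is the ``all copies equal'' subspace), and on the block $\bigcap_i \spa{G}_i$ the Hamiltonian is \emph{literally} the one from Lemma~\ref{lem:APXisPparXhard} --- not merely approximately. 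The paper just sets the penalty weight $\Delta_i > \delta + \sum_{(j,k)\in E_i}\|h^i_{\Y_i(j,k)}\|$; every state with support in $\spa{G}_i^\perp$ then has energy above $\lambda(H)+\delta$, so it is irrelevant to the $\apptwo$ promise, and the remainder of the argument is verbatim Lemma~\ref{lem:APXisPparXhard}. There is no $\sqrt{\delta/\delta'}$ tradeoff to worry about and $\|H\|$ trivially stays polynomial.

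Two smaller points. First, your per-bit rewriting of $h_{U_t}$ into $(|I_t|+1)$-local pieces is a correct and tidy way to keep $4$-locality with Toffoli-size gates; the paper instead takes for granted that the classical circuit is built from $2$-bit gates, which makes $h_{U_t}$ at most $4$-local without any rewriting, so your extra work there is unnecessary but not wrong. Second, appealing to ``an analogue of Lemma~\ref{lem:hgap}'' for the equality penalties is a misattribution: Lemma~\ref{lem:hgap} is about the query Hamiltonian's gap between correct and incorrect query strings, not about locking copies together. What you actually want is the plain commutation/block-diagonal observation above, which the paper states in a single sentence.
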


Here, we adapt the proof of Lemma~\ref{lem:APXisPparXhard}.
Recall that the Hamiltonian $H$ in Lemma~\ref{lem:APXisPparXhard} is composed of two parts $H=H_1+H_2$, where $H_2$ uses (a simplification of) Ambainis's query Hamiltonian on each of the registers $\X_i \otimes \Y_i$ to encode the answer to that query into the state of $\X_i$ (see Equation~(\ref{eqn:Hdef})), and $H_1$ encodes the evolution of the $\Poly$ circuit using the Cook-Levin construction on the $\W$ register (controlling on the states of the $\X_i$ registers). This is represented by Figure~\ref{fig:prevstructure}.

We arrange the qubits of the $\W$ register on a square lattice and note that $H_1$ is already manifestly spatially sparse.
This is one of the advantages of using the Cook-Levin construction over the Kitaev history state construction.
Furthermore, the Hamiltonian $H_{\Y_i}$, corresponding to the $i$-th $\QMA$ query, can be chosen to be spatially sparse -- in fact it can be chosen to have its interactions on the edges of a 2D square lattice~\cite{OT08}, and so we also lay out the qubits of each $\Y_i$ register on a square lattice.

But the interaction graph of this Hamiltonian is still far from spatially sparse because in (the modified version of) Ambainis's query Hamiltonian $H_2$, every qubit of $\Y_i$ interacts with $\X_i$.
We will solve this problem by replacing each single qubit $\X_i$ register with a multi-qubit register of $n_i$ qubits labeled by $\{\X_i(j)\}_{j=1}^{n_i}$, for $n_i$ the number of qubits of $\Y_i$. We spread out the qubits of the $\X_i$ register in space around the $\Y_i$ register, and modify $H_2$ so that each term is controlled only on a nearby qubit in the $\X_i$ register.
To make this work we need to introduce a third term $H_3$ which ensures that all the qubits in each $\X_i$ register are either all $\ket{0}$ or all $\ket{1}$.

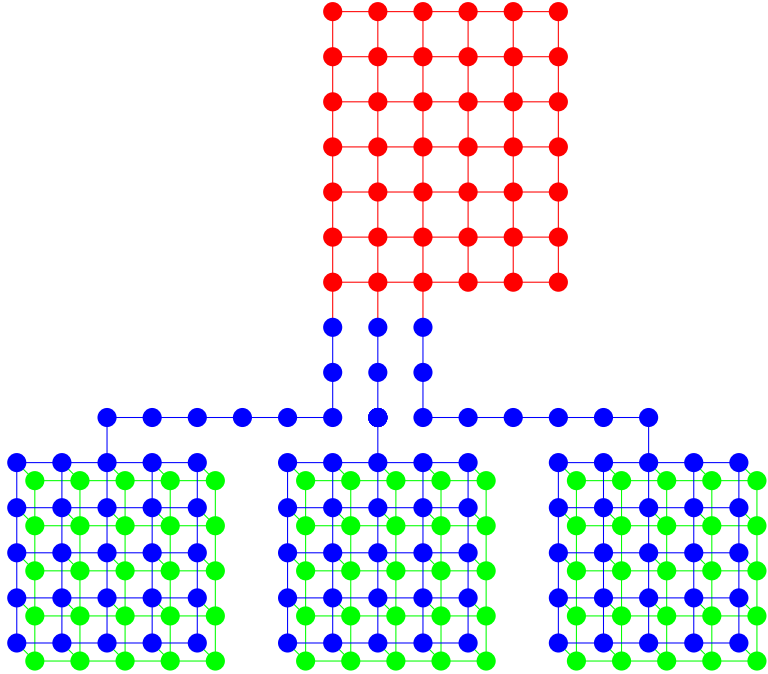
\begin{figure}[h!]
\centering
\begin{tikzpicture}[scale=0.6]

\tikzstyle{H3}=[circle,scale=0.7,fill=blue];
\tikzstyle{H2}=[circle,scale=0.7,fill=green];
\tikzstyle{H1}=[circle,scale=0.7,fill=red];
\foreach \a in {-1,0,1}
{\begin{scope}[shift={(\a*6,0)}]
	\begin{scope}[shift={(0.4,-0.4)}]
		\draw[green] (1,1) grid (5,5);
		\foreach \x in {1,...,5}{
			\foreach \y in {1,...,5}{
				\node[H2] at (\x,\y){};
				\draw[green] (\x-0.4,\y+0.4) -- (\x,\y);}}
	\end{scope}
	\foreach \x in {1,...,5}{
		\foreach \y in {1,...,5}{
			\node[H3] at (\x,\y) {};
		}
		\node[H3] at (3-\a*\x,6){};
	}
	\node[H3] at (3,6){};
	\draw[blue] (3,5) -- (3,6)--(3-5*\a,6)--(3-5*\a,8);
	\draw[blue] (1,1) grid (5,5);
	\draw[red] (3-5*\a,8) -- (3-5*\a,9);
	\foreach \d in {7,...,8}{
		\node[H3] at (3-5*\a,\d){};
	}
\end{scope}
}
\begin{scope}[shift={(1,8)}]
	\draw[red] (1,1) grid (6,7);
	\foreach \x in {1,...,6}{
		\foreach \y in {1,...,7}{
			\node[H1] at (\x,\y) {};
}}
\end{scope}
\end{tikzpicture}
\caption[Geometric structure of Hamiltonian used in Lemma~\ref{lem:spatiallysparse}]{(Color figure) Geometric structure of total Hamiltonian $H= {\color{red} H_1}+{\color{green} H_2}+{\color{blue}H_3}$ for the case of $3$ queries. In words, $H_1$ is the top square, $H_3$ is the set of connecting wires, along with the bottom three squares to which they are connected. $H_2$ is the remaining set of three squares at the bottom of the diagram.}
\label{fig:2Dstructure}
\end{figure}

\begin{proof}[Proof of Lemma~\ref{lem:spatiallysparse}.]
We will construct a Hamiltonian on the registers $\W$, $\X_i$ and $\Y_i$ for $i\in \{1,\dots m\}$, for which the problem $\apptwo$ encodes the output of a $\PQMApar$ circuit, where $m$ is the number of parallel queries to the QMA oracle.

Let the qubits of $\W$ and $\Y_i$ be arranged on distinct parts of a square lattice. For each qubit of $\Y_i$, there is a corresponding qubit in $\X_i$, and $\X_i$ contains a path of qubits leading from $\Y_i$ to $\W$. See Figure~\ref{fig:2Dstructure} for an example layout in the case $m=3$.

Let $E_i$ be the set of edges of the square lattice of qubits of $\Y_i$ (i.e. not including the edges connecting $\Y_i$ to $\X_i$ in Figure~\ref{fig:2Dstructure}) and let $H_{\Y_i}=\sum_{(j,k) \in E_i} h^{i}_{\Y_i(j,k)}$ be a 2D nearest neighbor Hamiltonian on $\Y_i$ corresponding to the $i$-th query.
We have used the subscript notation $\Y_i(j,k)$ to denote the action of an operator on the $j$-th and $k$-th qubits of the $\Y_i$ register.
$H_{\Y_i}$ has ground state energy less than $a_i$ if query $i$ is a YES instance and energy greater than $b_i$ in a NO instance. Then, let $H_2=\sum_i H_2^{(i)}$ where

\[
	H_2^{(i)}=\frac{a_i+b_i}{2} \proj{0}_{\X_i(1)} \otimes I_{\mathcal{Y}_i} + \sum_{(j,k) \in E_i} \left( \proj{1}_{\X_i(g(j,k))} \otimes h^{i}_{\Y_i(j,k)}\right) ,
\]
where $g(j,k)$ is the location of the ``nearest'' qubit in $\X_i$ to edge $(j,k)$ in $\Y_i$. Here, the choice ``nearest'' is somewhat arbitrary; for concreteness, one can set $g(j,k)=j$, i.e. pick the vertex in $\spa{X}_i$ which aligns with the first coordinate of the edge $(j,k)$. (In this sense, Figure~\ref{fig:2Dstructure} is not entirely accurate, since it depicts the $3$-local constraint $\proj{1}_{\X_i(g(j,k))} \otimes h^{i}_{\Y_i(j,k)}$ as a pair of $2$-local constraints. This is done solely for the purpose of simplifying the illustration, as otherwise one would need to draw hyperedges of size $3$.)

Let $H_1=H_{\text{prop}}+H_{\text{in}}$ be the Cook-Levin Hamiltonian where $H_{\text{prop}}$ is exactly as in Lemma~\ref{lem:APXisPparXhard}.
Let $H_{\text{in}}$ initialize the qubits of the first ($t=1$) row of the qubits in $\W$.
For each query $i$, we have a penalty term $\proj{1}_{\X_i(1)}\proj{0}+\proj{0}_{\X_i(1)}\proj{1}$ which effectively copies the state of $\X_i(1)$, the qubit in $\X_i$ nearest to $\W$, onto the $i$-th qubit of the first row of $\W$. For all the remaining qubits in the first ($t=1$) row of $\W$, we have a penalty term $\proj{1}$, effectively initializing the qubit into the $\ket{0}$ state.

Restricted to the subspace $\cH$ where each $\X_i$ register is either all $\ket{0}$ or all $\ket{1}$, $H_1+H_2$ is exactly the same Hamiltonian as in Lemma~\ref{lem:APXisPparXhard}.
It remains to give a high energy penalty to all other states not in this subspace. We do this with $H_3=\sum_{i=1}^m  H_3^{(i)}$ where each term $H_3^{(i)}$ acts on $\X_i$:
\[H_3^{(i)}=\Delta_i\sum_{(j,k) \in G_i} \left(\proj{0}_{\X_i(j)} \proj{1}_{\X_i(k)}+ \proj{1}_{\X_i(j)} \proj{0}_{\X_i(k)}\right)\]
where $G_i$ is the set of edges between the qubits of the $\X_i$ register.
$G_i$ consists of edges between nearest neighbors on the square lattice $E_i$ and on the path of qubits from $\Y_i$ to $\W$.
The overall Hamiltonian $H=H_1+H_2+H_3$ is therefore spatially sparse.

$H_3^{(i)}$ is a classical Hamiltonian, so all of its eigenstates can be taken to be of form $\ket{x}$ for some $x \in \{0,1\}^{n_i}$.
Its ground space $\spa{G}_i$ contains $\ket{0}^{\otimes n_i}$ and $\ket{1}^{\otimes n_i}$; and all states in $\spa{G}_i^\perp$ have energy at least $\Delta_i$. Choosing $\Delta_i > \delta + \sum_{(j,k) \in E_i} \|h^i_{\Y_i(j,k)}\|$ ensures that all states in $\spa{G}_i^{\perp}$ have energy greater than $\lambda(H)+\delta$.

Then $H=H_1+H_2+H_3$ is block diagonal with respect to the split of each subspace $\spa{G}_i\oplus\spa{G}_i^\perp$; restricted to the spaces $\spa{G}_i$, $H$ is exactly the Hamiltonian from Lemma~\ref{lem:APXisPparXhard}, and all states in spaces $\spa{G}_i^\perp$ have energy greater than $\lambda(H)+\delta$.
The result then follows just as in the proof of Lemma~\ref{lem:APXisPparXhard}.
\end{proof}

Finally we restate Theorem~\ref{thm:simofsparsehard} which shows $\app$ is hard not only for families of Hamiltonians which are universal -- that is, families that can efficiently simulate any $k$-local Hamiltonian -- but also for more restricted families of Hamiltonians which can only efficiently simulate the family of spatially sparse Hamiltonians. As stated in Section~\ref{sscn:results2}, this then yields the desired hardness results for \app\ on physical Hamiltonians such as the Heisenberg interaction on a 2D lattice (see, e.g., Corollary~\ref{cor:1}).

\begin{reptheorem}{thm:simofsparsehard}
	Let $\mathcal{F}$ be a family of Hamiltonians which can efficiently simulate any spatially sparse Hamiltonian. Then, $\app$ is $\PQMA$-complete even when restricted to a single-qubit observable and a Hamiltonian from the family $\mathcal{F}$.
\end{reptheorem}

\begin{proof}
This follows from Lemma~\ref{lem:simulations=reductions}, Lemma~\ref{lem:spatiallysparse} and Corollary~\ref{cor:complexityequalities}.
\end{proof}


\section{Simulating measurements on a 1D line}\label{sec:1D}

In this section, we show that \app\ remains $\PQMA$-complete even on a line. Below, we reproduce the statement of the main theorem of this section for convenience.

\begin{reptheorem}{thm:apxsim-1dline-complete}
	$\app$ is $\PQMA$-complete even when restricted to Hamiltonians on a 1D line of 8-dimensional qudits and single-qudit observables.
\end{reptheorem}

We prove Theorem~\ref{thm:apxsim-1dline-complete} in three sections. We first describe our construction in Section~\ref{ssec:1d-construction}. We then show correctness of the construction in Section~\ref{ssec:1D-correctness}, with the proofs of various lemmas deferred to Section~\ref{sscn:lemmas}.

\subsection{Our 1D hardness construction}\label{ssec:1d-construction}

We give a reduction from $\Ppar{QMA}$ to $\apptwo$, which by Theorem~\ref{thm:APXcompleteness} and the fact that $\apptwo$ trivially reduces to $\app$ yields $\PQMA$-hardness of \app. Let $\Pi$ be a $\Ppar{QMA}$ computation which takes in an input of size $n$ and which consists of a uniformly generated polynomial-size classical circuit $C$ making $m = O(\log n)$ $\klh[2]$ queries $\pi_i := (H_i, a_i, b_i)$ to a $\QMA$ oracle. As in Section~\ref{ssec:hardness}, we treat the ``answer register'' in which $C$ receives answers to its $m$ queries as a proof register.

Our high-level approach consists of three steps: (1) construct a ``master'' circuit $V$ composed of the verification circuits $V_i$ corresponding to each query $\pi_i$ and of the circuit $C$; (2) run $V$ through the 1D circuit-to-Hamiltonian construction of~\cite{HNN13} to obtain a 1D Hamiltonian $G$ with local dimension $8$ constructed such that the low-energy space $\spa{S}$ of $G$ must consist of history states (of the form described in~\cite{HNN13}); and (3) carefully add additional $1$-local penalty terms acting on the output qubits corresponding to each verification circuit $V_i$ to obtain final Hamiltonian $H$ such that the low-energy space must encode satisfying proofs to each $V_i$ whenever possible. This final step of ``fine-grained splitting'' of $S$ forces the output qubits of the circuits $V_i$ to encode correct answers to query $\pi_i$, and thus the final circuit $C$ receives a correct proof, hence leading the history states of step (2) to encode a correct simulation of $\Pi$. The answer to the computation $\Pi$ can then be read off the ground state of $H$ via an appropriate single qudit measurement.

\paragraph{1. Construction of $V$.} Suppose each query $\pi_i$ has corresponding $\QMA$ verification circuit $V_i$.
Without loss of generality, we may henceforth assume that the completeness/soundness error of $V_i$ is at most $p \leq 2^{-n}$, for $p$ to be set later, by standard error reduction~\cite{AN02,MW05}; thus, if a particular query $(H_i,a_i,b_i)$ is valid (i.e. $\lambda(H)\notin(a_i,b_i)$), then either there exists a proof such that $V_i$ outputs YES with probability at least $1-p$ or no proof causes $V_i$ to output YES with probability greater than $p$.
Next, since $\Pi$ is a $\Ppar{QMA}$ computation, all queries and corresponding $V_i$ can be precomputed in polynomial-time. We view the ``master circuit'' $V$ as consisting of two phases:
\begin{enumerate}
	\item (Verification phase) Given supposed proofs for each query, $V$ runs all verification circuits $V_i$ in parallel, where $V_i$ acts on space $\spa{Y}_i\otimes\spa{W}_i\otimes\spa{X}_i$, for proof register $\spa{Y}_i$, ancilla register $\spa{W}_i$, and single-qubit output register $\spa{X}_i$.

	\item (Simulated classical phase) The simulated $\Poly$ circuit $C$ now receives the query answers $\spa{X}:=\spa{X}_1\otimes\cdots\otimes\spa{X}_m$ as its proof register as well as an ancilla register $\spa{W}_0$. It outputs a single qubit to an output register $\spa{X}_0$.
\end{enumerate}
This completes the construction of $V$, which acts on $\spa{Y}\otimes\spa{W}\otimes\spa{X}$, where $\spa{Y}=\bigotimes_{i=1}\spa{Y}_i$,$\spa{W}=\bigotimes_{i=1}\spa{W}_i$, and $\spa{X}=\bigotimes_{i=1}\spa{X}_i$.
Crucially, note that given a set of proofs in register $\spa{Y}$, $V$ does \emph{not} necessarily yield the same answer as $\Pi$, since a malicious prover could intentionally send a ``bad'' proof to a YES query, flipping the final answer of $V$.

\paragraph{2. Construction of $G$.} We now plug $V$ into the circuit-to-Hamiltonian construction of Hallgren, Nagaj, and Narayanaswami~\cite{HNN13} to obtain a nearest-neighbor 1D Hamiltonian $G' = \din\hin+\dprop\hprop+\dpen\hpen+\hout$, where $\din,\dprop$, and $\dpen$ are at most polynomials in $n$ which we will set as needed; we review this construction more closely below. Set $G=G'-\hout$, since in our setting the task of ``checking the output'' will be delegated to the observable $A$.
Note that as an intermediate step, \cite{HNN13} maps $V$ to a circuit $V'$ which it then maps to $G'$; we describe the role of $V'$ in the following review.
Our construction will make two trivial assumptions about the behavior of $V'$, including how it arranges its query answers between the verification phase and the simulated classical phase and how it stores its output in the final timestep; we defer details about these assumptions until we define our ``fine-grained splitting'' in step 3 and when we define our observable.

\paragraph{Review of 1D $\QMA$ construction~\cite{HNN13}.}
Suppose an arbitrary circuit $U$ acts on $n$ qubits. Begin by arbitrarily arranging these qubits along a line. The circuit $U$ is then ``linearized'', meaning it is mapped to a new circuit $U'$ which consists of $R$ rounds in which each round applies a sequence of $n-1$ two-qubit gates acting on nearest neighbors. The $i$-th gate in a round acts on qubits $(i,i+1)$. This ``linearization'' is achieved in polynomial time by inserting swap and identity gates as needed, and $U'$ is at most polynomially larger than $U$.

To reduce $U'$ to an instance of $\klh$, we wish to design a mapping similar to Kitaev's circuit-to-Hamiltonian construction for showing $\QMA$-hardness of $\klh[5]$ on general geometry \cite{KSV02}. In both settings, the goal is to design an $H$ which enforces a structure on any state in its low-energy space. In the construction of \cite{KSV02}, $H = \hin+\hprop+\hstab+\hout$, and the minimizing state of $H$ has the form of a \emph{history state}:
\[
	\ket{\eta} = \frac{1}{\sqrt{L+1}}\sum_{t=0}^LU_t\cdots U_1\ket{\psi}_{\spa{Y}}\ket{0\cdots 0}_W\ket{t}_C .
\]
Intuitively, $\hstab$ forces a structure on the clock register $C$ of basis states $\ket{0},\ket{1},\dots$, such that each will correspond to a timestep of $U$. Then, $\hin$ ensures the ancilla register $W$ is set to the all $\ket{0}$ state when $\ket{t}=\ket{0}$. The term $\hprop$ ensures that the workspaces entangled with timesteps $\ket{t}$ and $\ket{t+1}$ are related by the 2-qubit gate $U_{t+1}$. Together, these terms ensure that a minimizing state $\ket{\psihist}$ encodes a correct simulation of the circuit $U$, and that all low-energy states are close to $\ket{\psihist}$. In fact, a valid $\ket{\psihist}$ lies in the nullspace of $\hin+\hprop+\hstab$. Finally, $\hout$ penalizes the low-energy space if the output qubit has overlap with $\ket{0}$.

Now in the 1D setting, the goal remains the same: design $H$ such that the structure of its low-energy state is a superposition over a sequence of states corresponding to timesteps in the computation of $U'$. But, we now appear unable to entangle the workspace with a separate clock register using nearest neighbor interactions. Instead, the constructions of \cite{AGIK09,HNN13} employ qudits of higher dimension as a means to label the qubits, with each labeling encoding a particular timestep. \cite{HNN13} then doubles the number of qudits in order to lower the necessary number of labels. The construction of \cite{HNN13} thus maps $U'$ to a Hamiltonian $H=\hin+\hprop+\hout+\hpen$ acting on $2nR$ qudits of dimension $8$, where the qudits are arranged on a 1D line in $R$ blocks of $2n$ qudits (i.e. one block per round in $U'$).

Let us further describe the idea of labeling, or ``marking'', of qudits. For example, a qubit $\alpha\ket{0}+\beta\ket{1}$ may be encoded as $\alpha\ket{A}+\beta\ket{B}$ if that qubit is ready for a gate to be applied or as $\alpha\ket{C}+\beta\ket{D}$ if that round's gate has already been applied, where $\ket{A},\ket{B},\ket{C},\ket{D}$ are some basis states.
The possible configurations, or arrangements, of labels along the line form a set of orthogonal spaces. \cite{HNN13} thus introduces a Hamiltonian term $\hpen$ which enforces a set of ``legal configurations'' of the workspace, penalizing all other configurations. We then map each of the configurations which remain in the low-energy space of $H$ to timesteps in the computation of $U'$, effectively assigning the job of encoding the workspace in a particular timestep to a particular configuration of qudits. We note that the crucial feature of the set of legal configurations developed by \cite{HNN13} is that they are sufficiently identifiable solely by 2-local nearest neighbor checks\footnote{For clarity, in~\cite{HNN13} not all illegal configurations are immediately detectable by $\hpen$. Any such undetectable illegal configurations are instead shown to eventually evolve under $\hprop$ into detectable illegal configurations.} such that penalties can be correctly assigned when constructing 1D analogs of the terms $\hin,\hprop,\hout$. Similar to the general geometry case of \cite{KSV02}, the construction of \cite{HNN13} enforces that the nullspace of $\hin+\hprop+\hpen$ consists of history states
\begin{equation}\label{eqn:hist-state-1D}
	\ket{\psihist}=\frac{1}{\sqrt{L+1}} \sum_{t=0}^{L} \ket{\psi_t},
\end{equation}
such that $\ket{\psihist}$ is a superposition over states in each legal configuration, $\ket{\psi_0}$ encodes a properly initialized workspace, and each pair $\ket{\psi_t}$ and $\ket{\psi_{t+1}}$ are related according to the corresponding timestep of $U'$. Finally, again similar to the general geometry case, all low-energy states must be close to $\ket{\psihist}$ (we make these two claims explicit and give proofs in Lemma~\ref{lem:props}).

The full description of the labeling, the legal configurations, and their mapping to timesteps by \cite{HNN13} is rather involved. Here, we introduce sufficient details for our later analysis.
We begin with a single block of $2n$ qudits, where recall each block is used to encode a single round (taken from~\cite{HNN13}):
\begin{align}
	\bdry \gate \insi \parity \qubit \insi \parity \cdots \parity \qubit \insi \parity \qubit \blnk
	\bdry\label{eqn:ex1}
\end{align}
Recall the design of $U'$ began by arranging the qubits of $U$ arbitrarily on the line; the $i$-th qubit on that line corresponds to qudits $2i-1$ and $2i$ in \eqref{eqn:ex1}. Thus, each qubit of $U'$, henceforth denoted a \emph{logical qubit}, is encoded into two consecutive qudits. Each pair of qudits representing a logical qubit is depicted as separated by a $\parity$ for clarity.
The standard basis for each $8$-dimensional qudit is labeled by
\[
	\set{\ket{\blnk},\ket{\lmove},\ket{\insi},\ket{\dead},\ket{\gate_0},\ket{\gate_1},\ket{\qubit_0},\ket{\qubit_1}} ,
\]
where, as described earlier, the current state of a qudit can be used to encode a logical qubit and to label the qudit. The first four states should be thought of as 1-dimensional labels; they are used to ensure the correct propagation of the circuit and do not encode a logical qubit. The final four states are used to either label a qudit with $\gate$, in which case a logical qubit is encoded as a superposition of $\ket{\gate_0}$ and $\ket{\gate_1}$, or with $\qubit$, in which case a logical qubit is encoded as a superposition of $\ket{\qubit_0}$ and $\ket{\qubit_1}$. To make this example more concrete, a product state of $(\alpha\ket{0}+\beta\ket{1})^{\otimes n}$ on $n$ logical qubits could be encoded as
\begin{align}
	(\alpha\ket{\gate_0}+\beta\ket{\gate_1})\otimes\ket{\insi}\otimes (\alpha\ket{\qubit_0}+\beta\ket{\qubit_1})\otimes\ket{\insi}\otimes\cdots\otimes(\alpha\ket{\qubit_0}+\beta\ket{\qubit_1})\otimes\ket{\blnk}.\label{eqn:ex2}
\end{align}

Next, here is an example depicting multiple blocks (from Table 2 of~\cite{HNN13}):
\begin{align}
	\cdots \dead \dead \bdry \gate \insi \parity \qubit \insi \parity \qubit \blnk \bdry \blnk \blnk \parity \blnk \blnk \parity \blnk \blnk \bdry \blnk \blnk \cdots ,
\end{align}\label{eqn:ex3}
where the blocks are delineated by $\bdry$. The labels $\dead$ to the left depict ``dead'' qudits, while the labels $\blnk$ to the right depict ``unborn'' qudits. By construction, all logical qubits are encoded in a block between the dead and unborn labels. In this example, the logical qubits line up with the beginning of a new block, beginning with $\bdry\gate$ and ending with the first $\blnk\bdry$.

At a high level, the set of legal configurations is mapped to a sequence of timesteps as follows. The first timestep corresponds to a configuration similar to \eqref{eqn:ex1}, with $n$ logical qubits encoded in the leftmost block of $2n$ qudits, with no $\dead$ labels anywhere, and with the ``gate'' label $\gate$ on the first qudit. The second configuration has the $\gate$ label shifted to the right, on the second qudit. Next, the third configuration has the second qudit labeled $\qubit$ and the third qudit labeled $\gate$. This propagation of the $\gate$ label rightwards continues, with each step corresponding to another legal configuration, until it reaches the end of the block. As the $\gate$ passes between logical qubits $(i,i+1)$, the corresponding configurations map to timesteps $i$ and $i+1$ of round 1, and $\hprop$ enforces that configurations are related by the application of gate $U'_i$. Thus, when we reach a configuration with $\gate$ at the end of the block, i.e. $\gate\bdry$, all gates in the current round will have been applied.
Next, before encoding the next round of gates, our goal becomes to shift all of the logical qubits encoded in the current block rightwards $2n$ spots into the second block. To do this, the $\gate$ label becomes a special $\lmove$ label and moves to the left one spot at a time until it reaches the end of the logical qubits (here, the leftwards $\bdry$). As the label $\lmove$ moves left, it shifts each logical qubit to the right one spot, i.e. $\ket{\qubit\lmove}\rightarrow\ket{\lmove\qubit}$.
This process repeats, with a label propagating rightwards to the end of the logical qubits (now past the rightwards $\bdry$), then the label $\lmove$ propagating to the left, shifting logical qubits to the right, and so on, until the logical qubits have shifted entirely into the second block. Then, the gate label $\gate$ once again transitions down the line, with successive configurations encoding the second round of gates of $U'$. Throughout this sequence, $\blnk$ labels to the right are consumed, while all qudits to the left are labeled $\dead$. This procedure continues until the entire circuit has been simulated.

Lastly, we observe that the final timestep of $U'$ is encoded by \cite{HNN13} in the following configuration:
\begin{align}\label{eqn:configFinal}
	\cdots \dead \dead \bdry \dead \dead \bdry \dead \qubit\parity \insi  \qubit\parity \cdots\parity\insi\qubit\parity \insi \gate \bdry
\end{align}

\paragraph{3. Adding $1$-local ``sifters''.}
We now add $1$-local Hamiltonian terms which serve to ``sift'' through bad proofs, or more accurately to split the ground space of $G$, so as to force low-energy states to encode correct query answers. As previously described, even a correct simulation of the circuit $V$ may not output the correct answer for instance $\Pi$ if a malicious prover supplies incorrect proofs to the query registers $\spa{Y}_i$; in particular, a prover might send a proof which accepts with low probability even though $\pi_i$ is a YES-instance.
Intuitively, we wish to penalize states encoding a proof $\ket{\psi_i}$ which leads verifier $V_i$ to reject with high probability when there exists a proof $\ket{\phi_i}$ such that $V_i$ would have accepted with high probability (here, query $\pi_i$ is a YES instance). For answer register $\spa{X}_i$, we add a ``sifter'' penalty term $\epsilon\ketbra{0}{0}_{\spa{X}_i}$, for $\epsilon$ some inverse polynomial to be set later.
These terms are similar to the $\hout$ term from other Hamiltonian constructions; but, here we are not only concerned about the ground space but also about the low-energy space.
As in other constructions, we must penalize NO answers enough to ensure the ground space encodes YES answers when possible. But, given a correct NO answer, the penalty must be small enough that the energy is gapped lower than any state which encodes an incorrect YES, such as those which by encode an invalid computation leading to YES.

However, because the encoding enforced by $G$ shifts the block of logical qubits rightwards along the line as the computation progresses, the location of a particular logical qubit's encoding depends on the current timestep. Thus, in order to properly act on logical qubit $\spa{X}_i$, we must be careful to specify the configuration which the penalty term acts on.

We may assume that once $V'$ finishes simulating all of the circuits $V_i$, it arranges each of the outputs in the first $m$ logical qubits on the line, finishing by the end of some round $r^*-1$, such that the $i$-th logical qubit on the line is the qubit which $V$ stored in $\spa{X}_i$. (The value of $r^*$ can be determined during the construction of $V'$.)
We may also assume that $V'$ then ``pauses'' by applying only identity gates in round $r^*$.
This round is encoded in block $r^*$, and since each block is comprised of $2n$ qudits, the answers to queries 1 to $m$ are thus simultaneously stored in qudits
\begin{equation}\label{eqn:qi}
    q_i := (2n)(r^* -1)+(2i-1).
\end{equation}
The $m$ sifter terms are given by
\[
	\houti[i] = \epsilon\ketbra{\gate_0}{\gate_0}_{q_i} ,
\]
where the subscript denotes the qudit which the term acts on and $\epsilon$ is to be set later.
Note that there is a unique legal configuration in which any given qudit is labeled $\gate$, so $\houti$ will apply to at most one state $\ket{\psi_t}$ in the history state of Equation~\eqref{eqn:hist-state-1D}.
Finally, we define $\hout = \sum_{i=1}^{m} \houti$.

\paragraph{The final Hamiltonian.} Our final Hamiltonian is
$H := G + \hout=\din\hin+\dprop\hprop+\dpen\hpen+\hout$,
with $\din,\dprop,\dpen$ polynomials to be set later.

\paragraph{The observable.}

Recall the configuration from~\eqref{eqn:configFinal}, which corresponds to the final timestep in the computation of a circuit passed to the construction of \cite{HNN13}. Note that this is the unique timestep in which the final qudit is labeled $\gate$.
We assume, without loss of generality, that $V'$ places its final output in the rightmost logical qubit on the line.
Thus, we choose single-qudit observable $A = \ketbra{\gate_0}{\gate_0}_{2nR}$,
where the subscript denotes that $A$ acts on the rightmost qudit on the line, where $R$ is the number of rounds in $V'$.

\paragraph{Setting parameters.}

Let $L$ denote the number of legal configurations which the history state in \eqref{eqn:hist-state-1D} is summed over, which is at most polynomial in $n$. We have that $H$ is $k$-local and $A$ is $\ell$-local for $k:=2$ and $\ell:=1$.
Set $\epsilon = 1/(8m)$, where recall $m$ is the (polynomial) number of queries.
Then, set $p$, the completeness/soundness error of each $V_i$, to some inverse-exponential in $n$ such that $p<\epsilon$ for all $n$.
Set $a= 1/(4L)$ and $b= 3/(4L)$.
We will set $\delta$ to a sufficiently small fixed inverse polynomial in $n$ in the proof of Lemma~\ref{lem:L}, which will then set $\din,\dprop,\dpen$ to sufficiently large fixed polynomials in $n$ via the proof of Lemma~\ref{lem:props}.

This concludes our deterministic polynomial-time mapping of the input $\Ppar{QMA}$ computation $\Pi$ to the 1D instance $\tilde{\Pi}:=(H,A,k,\ell,a,b,\delta)$ of $\apptwo$.

\subsection{Correctness}\label{ssec:1D-correctness}

We now prove Theorem~\ref{thm:apxsim-1dline-complete} by showing correctness of our construction from Section~\ref{ssec:1d-construction}. A number of lemmas required in the proof are deferred to Section~\ref{sscn:lemmas} to ease the exposition; in particular, we require Lemma~\ref{lem:props}, which explicitly proves two facts about the low-energy space of the construction of \cite{HNN13}, Lemma~\ref{lem:L}, which shows that a history state in our construction must simultaneously encode nearly correct answers for all valid queries $\pi_i$, and Lemma~\ref{lem:unionbound}, which states a Commutative Quantum Union Bound.

\begin{proof}[Proof of Theorem~\ref{thm:apxsim-1dline-complete}]
	Containment in $\PQMA$ was already shown for up to $O(\log n)$-local $H$ by \cite{A14}, with no restriction on the geometry.
	Our goal is now to show $\PQMApar$-hardness, which by Theorem~\ref{thm:APXcompleteness} yields $\PQMA$-hardness.
	We show hardness for the problem $\apptwo$, which recall from Section~\ref{sscn:results2} trivially reduces to $\app$, thus yielding hardness for $\app$.
	Let $\Pi$ be a $\Ppar{QMA}$ computation and map it to the $\apptwo$ instance $\tilde{\Pi}=(H,A,k,l,a,b,\delta)$ as described in Section~\ref{ssec:1d-construction}. The proof proceeds in two parts: We first show that low energy states must necessarily encode correct query answers, and subsequently apply this to show correctness in YES and NO cases for $\Pi$.
	
	\paragraph{Low energy states approximately encode correct query answers.}
	Recall that $H=G+\hout$. Let $\delta,\gamma$ denote arbitrary inverse polynomials in $n$ which will be set later in Lemma~\ref{lem:L}. Consider any state $\ket{\psi}$ such that $\bra{\psi}H\ket{\psi} \leq \lambda(H)+\delta$. Since $\hout\succeq 0$, $\bra{\psi}G\ket{\psi} \leq \lambda(H)+\delta$ as well. By Lemma~\ref{lem:props}, for sufficiently large fixed polynomials $\din,\dprop,\dpen$, two statements thus hold: First, the nullspace $\spa{S}$ of Hamiltonian $G = \din\hin+\dprop\hprop+\dpen\hpen$ is the span of all correctly encoded history states, as defined in Equation~\eqref{eqn:hist-state-1D}; Second, there exists a correctly encoded history state $\ket{\psihist}$ such that
	\begin{equation}\label{eqn:1D-trace-dist}
	\trnorm{\ketbra{\psi}{\psi}-\ketbra{\psihist}{\psihist}}\leq \gamma.
	\end{equation}
	Combining Equation~(\ref{eqn:1D-trace-dist}) with the H\"{o}lder Inequality and the fact that $\snorm{\hout}=m\epsilon$ yields that
	\[
	\abs{ \tr\left[\hout \ketbra{\psi}{\psi}\right] - \tr\left[\hout\ketbra{\psihist}{\psihist}\right] }  \leq \gamma \snorm{\hout} = m\epsilon \gamma .
	\]
	Since $\ket{\psihist}$ is a nullstate of $G$ and $\bra{\psi}\hout\ket{\psi}\leq \bra{\psi}H\ket{\psi}\leq \lambda(H)+\delta$, we conclude
    \begin{equation}\label{eqn:2}
        \bra{\psihist} H \ket{\psihist} \leq \lambda(H) + \delta + m\epsilon\gamma.
    \end{equation}
	
Next, let $I\subseteq\set{1,\ldots,m}$ be the set of indices corresponding to valid queries $\pi_i$, and for all $i\in I$ define $x_i=1$ if $\pi_i$ is a YES-instance and $x_i=0$ if $\pi_i$ is a NO-instance.\footnote{Without loss of generality, we may assume at least one query is valid ($I\neq\emptyset$). This is because if all queries are invalid, then all simulations of the $\Poly$ circuit $C$ must output the same answer no matter the sequence of query answers $C$ receives. Thus, all history states will encode the same final answer, and $\alpha$ (defined after \eqref{eqn:1D-query-string-error}) equals 1, satisfying the lower bound found of $\alpha\geq 1-m\epsilon$.}
Recall now from Section~\ref{ssec:1d-construction} that at the beginning of round $r^*$, $V'$ has encoded the answer to the $i$-th QMA query in qudit $q_i$ (defined in Equation~(\ref{eqn:qi})). Let $\ket{\psi_{t^*}}$ denote the unique (normalized) state in the superposition comprising $\ket{\psihist}$ in which $q_1$ is labeled $\gate$ (i.e. the first timestep corresponding to round $r^*$). Since during round $r^*$, $V'$ only applies identity gates,
the qubits encoded in qudits $q_i$ during timestep $t^*$, in which $q_1$ is labeled $\gate$ and all other $q_i$ are labeled $\qubit$, are exactly the same as in successive timesteps in which other $q_i$ are labeled by $\gate$.
More formally, $ |\brakett{\psi_{t^*}}{\qubit_{x_i}}_{q_i}|^2  = L |\brakett{\psihist}{\gate_{x_i}}_{q_i}|^2$ for any $i\in I$, and so by Lemma~\ref{lem:L},
\begin{equation}\label{eqn:4}
    \abs{ \brakett{\psi_{t^*}}{\qubit_{x_i}}_{q_i} }^2 \geq 1-\epsilon,
\end{equation}
where\footnote{We implicitly apply identity on all qudits other than $q_i$, i.e. $\abs{ \brakett{\psihist}{\qubit_{x_i}}_{q_i} }^2 := \tr\left[ \ketbra{\psihist}{\psihist} \left(I\otimes \ketbra{\qubit_{x_i}}{\qubit_{x_i}}_{q_i} \otimes I\right) \right]$.} we substitute the label $\gate$ for $\qubit$ when $i=1$, and where the factor of $L^{-1}$ is removed due to the normalization of $\ket{\psi_{t^*}}$.

This is for any single query $\pi_i, i\in I$; from this, we can obtain that $\ket{\psi_{t^*}}$ simultaneously encodes nearly correct query answers to \emph{all} valid queries. To do so, define $\Gamma:=\Pi_{i\in I}\ketbra{\qubit_{x_i}}{\qubit_{x_i}}_{q_i}$ (where again, we replace label $\gate$ for $\qubit$ when $i=1$). Then, by the Commutative Quantum Union Bound (Lemma~\ref{lem:unionbound}),
	\begin{equation}\label{eqn:1D-query-string-error}
		\bra{\psi_{t^*}}\Gamma\ket{\psi_{t^*}} \geq 1-\abs{I}\epsilon\geq 1-m\epsilon.
	\end{equation}
It follows that we may write $\ket{\psi_{t^*}} = \alpha \ket{\phi_1} + \beta\ket{\phi_2}$ for unit vectors $\ket{\phi_1}, \ket{\phi_2}$ such that $\Gamma\ket{\phi_1}=\ket{\phi_1}$ and $\Gamma\ket{\phi_2}=0$, and where $\alpha,\beta\in \complex,\abs{\alpha}^2 + \abs{\beta}^2 = 1$, and $\abs{\alpha}^2 \geq 1-m\epsilon$. Intuitively, $\ket{\phi_1}$ is the part of $\ket{\psi_{t^*}}$ that encodes correct strings of query answers on $I$, while $\ket{\phi_2}$ encodes strings with at least one incorrect query answer in $I$ --- for clarity, $\ket{\phi_1}$ may encode a superposition of multiple \emph{distinct} correct strings of query answers, since queries with indices not in $I$ may be answered arbitrarily.
	
	\paragraph{Application to YES versus NO cases for $\Pi$.}
	We have shown that for any low energy state $\ket{\psi}$, there exists a history state $\ket{\psihist}$ close to $\ket{\psi}$ which has large amplitude on all the correct query answers for set $I$ in round $r^*$. We can now analyze the YES and NO cases for our $\PQMA$ problem $\Pi$.

Recall that $\ket{\phi_1}$ may be a superposition over \emph{multiple} correct query strings (due to invalid queries $\pi_i$ for $i\not\in I$). Nevertheless, since the classical circuit $C$ for the $\PQMA$ machine is required to output the \emph{same} answer regardless of how invalid queries are answered (i.e. for any given correct string of query answers), all query strings which $\ket{\phi_1}$ is a superposition over lead $C$ to output the same, correct final answer.
	Thus, setting $y=0$ if $\Pi$ is a YES-instance and $y=1$ if $\Pi$ is a NO-instance, we have
\[
    \abs{\bra{\psihist}A\ket{\psihist} - \frac{y}{L}} \leq \frac{m\epsilon}{L},
\]
where the factor of $L^{-1}$ is due to the fact $A$ applies only to the final configuration/time step. Combining Equation~\eqref{eqn:1D-trace-dist} with the H\"{o}lder inequality yields that
	$\abs{ \tr\left[A \ketbra{\psi}{\psi}\right] - \tr\left[A\ketbra{\psihist}{\psihist}\right] }  \leq \gamma$, since $\snorm{A}=1$,
	and so
	\[
		\abs{\bra{\psi}A\ket{\psi} - \frac{y}{L}} \leq \frac{m\epsilon}{L} +\gamma ,
	\]	
	Given that we set $\delta=\gamma=1/(256m^2L)<1/(8L)$ in Lemma~\ref{lem:L} and $\epsilon = 1/(8m)$, we have that $\gamma + m\epsilon/L < 1/(4L)$.
	We conclude that for all low-energy states $\ket{\psi}$ (i.e. states satisfying $\bra{\psi}H\ket{\psi}\leq \lambda(H)+\delta$), if $\Pi$ is a YES-instance then $\bra{\psi}A\ket{\psi} \leq 1/(4L)$ (i.e. we have a YES instance of $\apptwo$), and if $\Pi$ is a NO-instance then $\bra{\psi}A\ket{\psi} \geq 3/(4L)$ (i.e. we have a NO instance of $\apptwo$), as desired.
\end{proof}

\subsubsection{Required lemmas for proof of Theorem~\ref{thm:apxsim-1dline-complete}}\label{sscn:lemmas}
We begin by restating a known lemma and corollary.

\begin{lemma}[Kempe, Kitaev, Regev~\cite{KKR06}]\label{l:proj}
	Let $H=H_1+H_2$ be the sum of two Hamiltonians operating on some Hilbert space $\spa{H}=\spa{S}+\spa{S}^\perp$. The Hamiltonian $H_1$ is such that $\spa{S}$ is a zero eigenspace and the eigenvectors in $\spa{S}^\perp$ have eigenvalue at least $J>2\snorm{H_2}$. Then,
	\[
		\lambda(H_2|_{\spa{S}})-\frac{\snorm{H_2}^2}{J-2\snorm{H_2}}\leq \lambda(H)\leq \lambda(H_2|_{\spa{S}}),
	\]
where recall $\lambda(H_2|_{\spa{S}})$ denotes the smallest eigenvalue of $H_2$ restricted to space $\spa{S}$.
\end{lemma}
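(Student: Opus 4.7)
The plan is to handle the two inequalities separately; the upper bound is immediate from the variational principle, while the lower bound reduces to a one-dimensional optimization over the $\spa{S}^\perp$-component of the ground state of $H$.

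For the upper bound $\lambda(H)\leq \lambda(H_2|_{\spa{S}})$, I would take $\ket{\psi}\in \spa{S}$ to be a unit eigenvector of $H_2|_{\spa{S}}$ with eigenvalue $\lambda(H_2|_{\spa{S}})$. Since $\ket{\psi}\in \spa{S}$, by hypothesis $H_1\ket{\psi}=0$, so $\bra{\psi}H\ket{\psi}=\bra{\psi}H_2\ket{\psi}=\lambda(H_2|_{\spa{S}})$, and the Rayleigh-Ritz inequality gives the desired bound.

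For the lower bound, let $\ket{\eta}$ be a ground state of $H$ and decompose it as $\ket{\eta}=\alpha\ket{\eta_1}+\beta\ket{\eta_2}$ with unit vectors $\ket{\eta_1}\in \spa{S}$, $\ket{\eta_2}\in \spa{S}^\perp$, $|\alpha|^2+|\beta|^2=1$. Writing $K:=\snorm{H_2}$ and $\lambda_S:=\lambda(H_2|_{\spa{S}})$, I would estimate each contribution to $\bra{\eta}H\ket{\eta}$: the $H_1$ term contributes at least $J|\beta|^2$ (since $H_1\succeq 0$ on $\spa{S}$ and $H_1\succeq J\cdot I$ on $\spa{S}^\perp$); the diagonal $H_2$ pieces contribute at least $|\alpha|^2\lambda_S - |\beta|^2 K$ (using $H_2\succeq -K\cdot I$ on $\spa{S}^\perp$ and $H_2|_{\spa{S}}\succeq \lambda_S\cdot I$); and the cross term is bounded in absolute value by $2|\alpha||\beta| K$ via Cauchy--Schwarz and $\snorm{H_2}=K$. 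Combining gives
\[
\bra{\eta}H\ket{\eta}\;\geq\; |\alpha|^2\lambda_S + (J-K)|\beta|^2 - 2|\alpha||\beta|K.
\]
Using $|\alpha|^2=1-|\beta|^2$, $|\alpha|\leq 1$, and the fact that $|\lambda_S|\leq K$, this simplifies to
\[
\bra{\eta}H\ket{\eta}\;\geq\; \lambda_S + (J-2K)|\beta|^2 - 2K|\beta|.
\]
A one-variable calculus minimization over $|\beta|\in[0,1]$ (valid because $J-2K>0$ by hypothesis) yields the optimal $|\beta|^\ast = K/(J-2K)$, at which the right-hand side equals $\lambda_S - K^2/(J-2K)$, completing the bound.

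The main obstacle is arranging the quadratic in $|\beta|$ so that the coefficient of $|\beta|^2$ is manifestly positive after absorbing the $|\alpha|^2\lambda_S$ term; this is exactly where the assumption $J>2\snorm{H_2}$ plays its role, and where the a priori bound $|\lambda_S|\leq \snorm{H_2}$ is needed. All other steps are routine linear-algebra estimates; in particular no spectral decomposition of $H_2$ is required beyond the operator norm bound. One small sanity check to perform at the end: the derivation assumed $\ket{\eta_2}$ exists as a genuine unit vector, i.e.\ $\beta\neq 0$. If $\beta=0$ then $\ket{\eta}\in \spa{S}$ and $\bra{\eta}H\ket{\eta}=\bra{\eta}H_2\ket{\eta}\geq \lambda_S$, which is even stronger than the claimed bound, so this edge case causes no difficulty.
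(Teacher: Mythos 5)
Your proof is correct, and it is essentially the standard Projection Lemma argument from Kempe, Kitaev and Regev (2006); note that the paper simply cites that reference for this lemma and does not reprove it, so there is no in-paper proof to compare against. One small imprecision worth flagging: when you ``minimize over $|\beta|\in[0,1]$,'' the stationary point $|\beta|^\ast = K/(J-2K)$ need not lie in $[0,1]$ (it exceeds $1$ whenever $2K<J<3K$). This does no harm, because you only need a lower bound: since $J-2K>0$, the quadratic $(J-2K)x^2-2Kx$ is bounded below by $-K^2/(J-2K)$ for every real $x$, hence a fortiori for $x=|\beta|\in[0,1]$, and the claimed estimate follows without locating the constrained minimizer. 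It would be slightly cleaner to present this step as completing the square rather than as a calculus minimization restricted to $[0,1]$. The remaining ingredients --- the variational upper bound, the vanishing of the $H_1$ cross terms because $H_1$ annihilates $\spa{S}$, the Cauchy--Schwarz bound $|\bra{\eta_1}H_2\ket{\eta_2}|\le K$, the a priori bound $|\lambda_S|\le K$ (which follows because the compression of $H_2$ to $\spa{S}$ has operator norm at most $\snorm{H_2}$), and the $\beta=0$ edge case --- are all handled correctly.
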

\begin{corollary}[\cite{GY16}]\label{cor:kkr}
	Let $H=H_1+H_2$ be the sum of two Hamiltonians operating on some Hilbert space $\spa{H}=\spa{S}+\spa{S}^\perp$. The Hamiltonian $H_1$ is such that $\spa{S}$ is a zero eigenspace and the eigenvectors in $\spa{S}^\perp$ have eigenvalue at least $J>2\snorm{H_2}$. Let $K:=\snorm{H_2}$. Then, for any $\delta\geq0$ and vector $\ket{\psi}$ satisfying $\bra{\psi}H\ket{\psi}\leq \lambda(H)+\delta$, there exists a $\ket{\psi'}\in \spa{S}$ such that
	\[
\trnorm{\ketbra{\psi}{\psi}-\ketbra{\psi'}{\psi'}}\leq 2
	\left(\frac{K+\sqrt{K^2+\delta(J-2K)}}{J-2K}\right).
\]	
\end{corollary}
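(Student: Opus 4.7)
The plan is to write $\ket{\psi}=\alpha\ket{\psi_1}+\beta\ket{\psi_2}$ with $\ket{\psi_1}\in\spa{S}$ and $\ket{\psi_2}\in\spa{S}^{\perp}$ unit vectors (and $\abs{\alpha}^2+\abs{\beta}^2=1$), and take $\ket{\psi'}:=\ket{\psi_1}$. For pure states one has $\trnorm{\ketbra{\psi}{\psi}-\ketbra{\psi_1}{\psi_1}}=2\sqrt{1-\abs{\alpha}^2}=2\abs{\beta}$, so the corollary reduces to proving
\[
\abs{\beta}\leq\frac{K+\sqrt{K^2+\delta(J-2K)}}{J-2K}.
\]

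To bound $\abs{\beta}$, I would derive a quadratic inequality from the energy hypothesis $\bra{\psi}H\ket{\psi}\leq\lambda(H)+\delta$. For $H_1$, the block-diagonal structure (zero eigenspace $\spa{S}$, eigenvalues $\geq J$ on $\spa{S}^{\perp}$) immediately yields $\bra{\psi}H_1\ket{\psi}\geq J\abs{\beta}^2$. For $H_2$, expanding in the $\spa{S},\spa{S}^{\perp}$ decomposition gives
\[
\bra{\psi}H_2\ket{\psi}=\abs{\alpha}^2\bra{\psi_1}H_2\ket{\psi_1}+\abs{\beta}^2\bra{\psi_2}H_2\ket{\psi_2}+2\operatorname{Re}\bigl(\bar\alpha\beta\bra{\psi_1}H_2\ket{\psi_2}\bigr),
\]
which I would bound termwise using $\bra{\psi_1}H_2\ket{\psi_1}\geq\mu:=\lambda(H_2|_{\spa{S}})$, $\bra{\psi_2}H_2\ket{\psi_2}\geq -K$, and the operator-norm inequality $\abs{\bra{\psi_1}H_2\ket{\psi_2}}\leq K$. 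Substituting $\abs{\alpha}^2=1-\abs{\beta}^2$ and using the crude bound $\mu\leq K$ (which follows from $\abs{\mu}\leq\snorm{H_2}=K$) then gives $\bra{\psi}H\ket{\psi}\geq\mu+(J-2K)\abs{\beta}^2-2K\abs{\alpha}\abs{\beta}$.

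Combining with the upper bound $\lambda(H)\leq\mu$ from Lemma~\ref{l:proj} and $\abs{\alpha}\leq 1$, the energy hypothesis rearranges to $(J-2K)\abs{\beta}^2-2K\abs{\beta}-\delta\leq 0$; since $J-2K>0$ by assumption, the quadratic formula yields the claimed bound on $\abs{\beta}$, and hence on the trace distance. I expect the main subtlety to be the cross term $\bra{\psi_1}H_2\ket{\psi_2}$: the naive route of writing simply $\bra{\psi}H_2\ket{\psi}\geq -K$ discards the $\mu$ contribution on the $\spa{S}$-diagonal and produces a strictly weaker bound that does not match the statement. The refined three-term decomposition above is what allows the $\mu$'s to cancel between the two sides of the energy inequality, yielding the clean quadratic and, as a sanity check, recovering Lemma~\ref{l:proj}'s perturbative leakage bound $\abs{\beta}\leq 2K/(J-2K)$ in the $\delta=0$ ground-state limit.
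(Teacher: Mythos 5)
Your proof is correct and matches the standard derivation from \cite{GY16}: decompose $\ket{\psi}=\alpha\ket{\psi_1}+\beta\ket{\psi_2}$ along $\spa{S}\oplus\spa{S}^\perp$, take $\ket{\psi'}=\ket{\psi_1}$ so the trace distance is $2|\beta|$, then combine the lower bound $\bra{\psi}H_1\ket{\psi}\geq J|\beta|^2$ with the three-term expansion of $\bra{\psi}H_2\ket{\psi}$ (crucially retaining the $\mu$-diagonal term on $\spa{S}$ so it cancels against $\lambda(H)\leq\mu$ from Lemma~\ref{l:proj}) to get the quadratic $(J-2K)|\beta|^2-2K|\beta|-\delta\leq 0$ and hence the claimed bound. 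This is essentially the same route the cited reference takes.
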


\noindent We now prove the lemmas required for Theorem~\ref{thm:apxsim-1dline-complete}.

\begin{lemma}\label{lem:props}
    Assume the notation of Section~\ref{ssec:1d-construction}. For $G = \din\hin+\dprop\hprop+\dpen\hpen$, the following hold:
    \begin{enumerate}
        \item For sufficiently large (efficiently computable) polynomials $\din,\dprop,\dpen$, the null space of $G$ is the span of all correctly encoded history states, i.e. of the form in Equation~(\ref{eqn:hist-state-1D}).
        \item For any fixed inverse polynomials $\delta$ and $\gamma$, there exist efficiently computable polynomials $\din,\dprop,\dpen$ such that for any $\ket{\psi}$ attaining $\bra{\psi}G\ket{\psi}\leq\lambda(G)+\delta$, there exists a correctly encoded history state $\ket{\psihist}$ such that
    \[
        \trnorm{\ketbra{\psi}{\psi}-\ketbra{\psihist}{\psihist}}\leq \gamma.
    \]
    \end{enumerate}
\end{lemma}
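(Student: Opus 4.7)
The plan is to reduce both parts to well-known properties of the 1D circuit-to-Hamiltonian construction of~\cite{HNN13}, and then apply Lemma~\ref{lem:lowenergydistance}, proven earlier in the paper, to convert a spectral gap bound into a trace distance bound. Both parts follow this template.

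For Part 1, I observe that $G$ differs from the Hamiltonian of~\cite{HNN13} only by the absence of the $\hout$ term. The authors of~\cite{HNN13} show that, for weights $\din,\dprop,\dpen$ above certain efficiently computable polynomials, the operator $\din\hin+\dprop\hprop+\dpen\hpen$ is frustration-free with null space exactly the span of history states~(\ref{eqn:hist-state-1D}) over all choices of initial proof on $\spa{Y}$; the proof register is unconstrained since $\hout$, which in~\cite{HNN13} would favor only accepting proofs, is absent in our $G$. Since $G$ is a sum of positive semidefinite operators, a state is annihilated by $G$ iff it lies jointly in the null space of each summand, so the null space of $G$ coincides with the null space of $\din\hin+\dprop\hprop+\dpen\hpen$ characterized by~\cite{HNN13}. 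This yields Part 1.

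For Part 2, the same analysis in~\cite{HNN13} gives an inverse-polynomial lower bound $g_0$ on the spectral gap of $\din\hin+\dprop\hprop+\dpen\hpen$ above its null space, for sufficiently large polynomial weights. Rescaling all three weights by a common factor $s$ multiplies this gap by $s$, so given any target inverse polynomial $\Delta$, we can arrange the spectral gap of $G$ to be at least $\Delta$ by taking $\din,\dprop,\dpen$ polynomially large. Now fix any $\ket{\psi}$ with $\bra{\psi}G\ket{\psi}\leq\lambda(G)+\delta$, and let $P$ be the projector onto the null space of $G$. Since the gap is at least $\Delta$, $P$ equals the projector onto eigenvectors of energy strictly below $\lambda(G)+\Delta$. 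Applying Lemma~\ref{lem:lowenergydistance} to $\rho=\ketbra{\psi}{\psi}$ with $\delta':=\Delta$ yields a pure state $\rho'=\ketbra{\psihist}{\psihist}$, where $\ket{\psihist}:=P\ket{\psi}/\|P\ket{\psi}\|$, satisfying
\[
    \trnorm{\ketbra{\psi}{\psi}-\ketbra{\psihist}{\psihist}} \leq 2\sqrt{\delta/\Delta}.
\]
By Part 1, $\ket{\psihist}$ is a correctly encoded history state. Choosing $\Delta\geq 4\delta/\gamma^2$, which is a polynomial in $n$ since $\delta,\gamma$ are inverse polynomials, delivers $\trnorm{\ketbra{\psi}{\psi}-\ketbra{\psihist}{\psihist}}\leq\gamma$ as required.

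The main obstacle is verifying that the spectral gap claim of~\cite{HNN13} survives the removal of $\hout$ and that its polynomial dependence is explicit enough to rescale. This is routine: $\hout$ in~\cite{HNN13} is handled as a bounded perturbation sitting on top of $\hin+\hprop+\hpen$, so the gap we need above the joint null space of those three terms is already established as an intermediate step in their argument. The only additional remark is that removing $\hout$ strictly enlarges the null space (by including history states built from non-accepting proofs), which only helps us, since we need to show a low-energy state is close to \emph{some} history state with no constraint on the proof it encodes.
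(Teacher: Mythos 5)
Your logical architecture is sound, but you have buried the hard work inside a citation that does not directly support it. In Part 2 you assert that ``the same analysis in~\cite{HNN13} gives an inverse-polynomial lower bound $g_0$ on the spectral gap of $\din\hin+\dprop\hprop+\dpen\hpen$ above its null space.'' That is not a statement you can simply lift from~\cite{HNN13}. What~\cite{HNN13} establishes are \emph{subspace-specific} bounds: they decompose the Hilbert space into three $\hprop$-invariant pieces --- $S_1$ (legal plus certain locally detectable illegal configurations), $S_2$ (other locally detectable illegal configurations), and $S_3$ (illegal configurations \emph{not} detectable by any local check), each treated separately. In particular $G$ does not act invariantly on legal vs.\ illegal configurations, which is exactly why the 1D case requires this decomposition and why you cannot treat $\hin+\hprop+\hpen$ as a Kitaev-style clock Hamiltonian with a clean, quotable gap. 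Establishing the uniform bound you want means combining (i) that states in $S_2$ pick up energy at least $\dpen$ from $\hpen$, (ii) that $(\dprop\hprop+\dpen\hpen)|_{S_3}\in\Omega(1)$, and (iii) within $S_1\cap\Null(\hpen)$ a bound on the smallest nonzero eigenvalue of $\hprop$, then peeling off $\hpen$, $\hprop$, $\hin$ in succession. The paper does precisely this via three nested applications of Corollary~\ref{cor:kkr} with a hierarchy of weights $\dpen\gg\dprop\gg\din$, and the resulting trace-distance bound $2(\gamma_1+\gamma_2+\gamma_3)$ is what does the job you are ascribing to a single ``$g_0$.'' Without reconstructing something equivalent, your invocation of a spectral gap is a claim, not a proof.

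Your Part 1 argument is essentially the paper's (the joint null space of the three PSD terms equals the span of valid history states, which is implicit in~\cite{HNN13}), and the observation that dropping $\hout$ only enlarges the null space in your favor is a nice explicit remark. The single-shot rescaling you propose --- multiply $\din,\dprop,\dpen$ by a common factor $s$ to inflate whatever gap exists --- is also valid arithmetic and would give a cleaner presentation \emph{if} an inverse-polynomial gap were first in hand; it is not an alternative to the nested Projection Lemma argument but a cosmetic post-processing step on top of it. Finally, note that the paper's proof does not literally establish a spectral gap for $G$ as an intermediate step; it instead goes straight from low energy to closeness to a history state. These two formulations are morally interchangeable (a state $\gamma$-close to the null space with $\gamma<2$ rules out orthogonal low-energy eigenvectors), so your framing is defensible, but the substance you would need to justify it is exactly the three-step Projection Lemma chain you are trying to avoid.
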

\begin{proof}
    The analysis of $G$ is more subtle than that of, say, the $5$-local Kitaev circuit-to-Hamiltonian construction~\cite{KSV02}. The latter required the analysis of two orthogonal subspaces acted on invariantly by the Hamiltonian in question; the span of all correctly encoded history states, and the span of all states with an incorrectly encoded clock register (i.e. illegal configurations). In~\cite{HNN13}, however, due to the restrictions of encoding in 1D, there are \emph{two} types of illegal configurations which can arise --- those which are detectable by local checks, and those which are not --- and $G$ does not act invariantly on the spaces of legal and illegal configurations. The soundness analysis of the $\QMA$-hardness construction of~\cite{HNN13} (see Section 6 therein, which we follow below) hence independently analyzes \emph{three} types of subspaces which are acted on invariantly by $\hprop$: (1) The span of legal configurations and certain locally detectable illegal configurations, (2) the span of certain other locally detectable illegal configurations, and (3) the span of illegal configurations which are not locally detectable. We shall henceforth refer to these subspaces as $S_1$, $S_2$, and $S_3$, respectively.\\

     \noindent\emph{Proof of claim 1.} This claim is implicit in~\cite{HNN13}; we sketch a proof to make it explicit here.
     Claim 2 of~\cite{HNN13} and the subsequent discussion explicitly show that any valid history state is a null state of $G$.
     For the reverse containment,
     Section 6.2 of~\cite{HNN13} shows that for sufficiently large polynomials $\din,\dprop,\dpen$, $\lambda((\dprop\hprop+\dpen\hpen)\vert_{S_3})\in\Omega(1)$. That $\lambda(G\vert_{S_2})\geq \dpen$ follows since $\hpen$ is a sum of pairwise commuting projectors. Thus, $\Null(G)$ resides in $S_1$. Section 6.1 of~\cite{HNN13} shows that $\Null(\hprop\vert_{S_1\cap\Null(\hpen)})$ is spanned by valid history states. We conclude that the span of all valid history states contains $\Null(G)$.\\

     \noindent\emph{Proof of claim 2.} We know from claim 1 that $\Null(G)$ is precisely the span of all correctly encoded history states. Let $\spa{C}$ denote the orthogonal complement of $\Null(G)$. Then, we know from the proof of claim 1 that $\lambda(G\vert_{\spa{C}\cap S_2})\geq\dpen\in\Omega(1)$, and that $\lambda((\dprop\hprop+\dpen\hpen)\vert_{\spa{C}\cap S_3})\in\Omega(1)$. (Here we have used the fact that $S_2\cup S_3\subseteq \spa{C}$.) Since $\delta$ is assumed to be inverse polynomial in $n$, and since we know from claim $1$ that $\lambda(H)\leq 0$, it follows that no vector $\ket{\psi}$ from $S_2$ or $S_3$ can attain $\bra{\psi}G\ket{\psi}\leq\lambda(G)+\delta$.

     We are thus reduced to the case $\ket{\psi}\in S_1$, which we prove using three applications of Corollary~\ref{cor:kkr}. (To reduce notation, in the remainder of this proof all operators are implicitly restricted to $S_1$.) In the first application, let $H_1=\dpen\hpen$ and $H_2=\din\hin+\dprop\hprop$. Suppose $\bra{\psi}H_1+H_2\ket{\psi}\leq\lambda(H)+\delta$. Then by Lemma~\ref{cor:kkr}, there exists a vector $\ket{\psi'}\in{\Null(\hpen)}$ such that
	\[
\trnorm{\ketbra{\psi}{\psi}-\ketbra{\psi'}{\psi'}}\leq 2
	\left(\frac{K_1+\sqrt{K_1^2+\delta(J_1-2K_1)}}{J_1-2K_1}\right)=:2
	\gamma_1,
\]
    for $K_1:=\snorm{H_2}$ and $J_1>2 K_1$. (Note that since $\dpen\hpen$ is a sum of commuting projectors, its smallest non-zero eigenvalue is at least $\dpen$, i.e. $J\geq \dpen$.) By the H\"{o}lder inequality,
    \begin{equation}\label{eqn:p1}
        \abs{\trace((H_1+H_2)\ketbra{\psi}{\psi})-\trace((H_1+H_2)\ketbra{\psi'}{\psi'})}\leq2\gamma_1\snorm{H_1+H_2}=:\epsilon_1.
    \end{equation}
    Combining these facts, we have
    \begin{eqnarray}
        \bra{\psi'}(H_1+H_2)\vert_{\Null(\hpen)}\ket{\psi'}&=&\bra{\psi'}(H_1+H_2)\ket{\psi'}\nonumber\\
        &\leq&\lambda((H_1+H_2))+\delta+\epsilon_1\nonumber\\
        &\leq&\lambda((H_1+H_2)\vert_{\Null(\hpen)})+\delta+\epsilon_1\nonumber\\
        &=:&\lambda((H_1+H_2)\vert_{\Null(\hpen)})+\delta_2,\label{eqn:p2}
    \end{eqnarray}
    where the first statement holds since $\ket{\psi'}\in \Null(\hpen)$, the second by Equation~(\ref{eqn:p1}), and the third by the Projection Lemma (this follows directly since projections can only increase the smallest eigenvalue).

    We now repeat the process for $H_1=\dprop\hprop\vert_{\Null(\hpen)}$ and $H_2=\din\hin\vert_{\Null(\hpen)}$. The key observation (used also in~\cite{HNN13}) is that restricted to $S_1\cap\Null(\hpen)$, $\hprop$ is now positive semidefinite, has a 1-dimensional null space spanned by the correct history state (the action of $\hprop$ ignores the initial setting of ancilla qubits, including the proof register, which in general leads to multiple correct history states), and its smallest non-zero eigenvalue is at least $1/(2(L+1)^2)$ (recall $L$ is the number of time steps a valid history state sums over). Thus, by Lemma~\ref{cor:kkr}, there exists a vector $\ket{\psi''}\in{\Null(\hpen)\cap\Null(\hprop)}$ such that
	\[
\trnorm{\ketbra{\psi'}{\psi'}-\ketbra{\psi''}{\psi''}}\leq 2
	\left(\frac{K_2+\sqrt{K_2^2+\delta_2(J_2-2K_2)}}{J_2-2K_2}\right)=:2
	\gamma_2,
\]
    for $K_2:=\snorm{H_2}$ and $J_2>2 K_2$. Note that $J_2\geq \dprop/(2(L+1)^2)$. By the H\"{o}lder inequality,
    \[
        \abs{\trace((H_1+H_2)\ketbra{\psi'}{\psi'})-\trace((H_1+H_2)\ketbra{\psi''}{\psi''})}\leq2\gamma_2\snorm{H_1+H_2}=:\epsilon_2,
    \]
    which yields
    \begin{eqnarray*}
        \bra{\psi''}(H_1+H_2)\vert_{\Null(\hprop)}\ket{\psi''}&=&\bra{\psi''}(H_1+H_2)\ket{\psi''}\nonumber\\
        &\leq&\lambda((H_1+H_2))+\delta_2+\epsilon_2\nonumber\\
        &\leq&\lambda((H_1+H_2)\vert_{\Null(\hprop)})+\delta_2+\epsilon_2\nonumber\\
        &=:&\lambda((H_1+H_2)\vert_{\Null(\hprop)})+\delta_3.
    \end{eqnarray*}
     Finally, we repeat the process for $H_1=\din\hin\vert_{\Null(\hpen)\cap\Null(\hprop)}$ and $H_2=0$. Since by claim 1 we know the joint null space of $\hin,\hprop,\hpen$ is non-empty, by Lemma~\ref{cor:kkr}, there exists a vector $\ket{\psi'''}\in{\Null(\hpen)\cap\Null(\hprop)\cap\Null(\hin)}$ such that
	\[
\trnorm{\ketbra{\psi''}{\psi''}-\ketbra{\psi'''}{\psi'''}}\leq 2
	\sqrt{\frac{\delta_3 }{J_3}}=:2
	\gamma_3,
\]
    for $J_3>0$. Note that $J_3\geq \din$ since $\hin$ is a sum of commuting projectors. By claim 1, since $\ket{\psi'''}$ is in the joint null space of $\hin,\hprop,\hpen$, it is a correctly encoded history state; denote it $\ket{\psihist}$. By the triangle inequality we have
    \[
        \trnorm{\ketbra{\psi}{\psi}-\ketbra{\psihist}{\psihist}}\leq 2(\gamma_1+\gamma_2+\gamma_3).
    \]
    The claim now follows by observing that all variables involved, i.e. $\delta_2,\delta_3,\epsilon_1,\epsilon_2,\gamma_1,\gamma_2,\gamma_3,J_1,J_2,J_3$, decrease inverse polynomially in (a non-empty subset of) polynomials $\din,\dprop,\dpen$. Thus, for any desired target accuracy $q$, we may attain the claim by setting $\din,\dprop,\dpen$ as sufficiently large polynomials. (Note that this requires upper bounding terms of the form $K_2:=\snorm{H_2}$, which is easily done via triangle inequality of the spectral norm and the fact that projections can only decrease maximum eigenvalues.)
\end{proof}

\begin{lemma}\label{lem:L}
	Assume the notation of Section~\ref{ssec:1D-correctness}. For all $i\in I$, it holds that
	\begin{equation}\label{eqn:1D-answer-bound}
	\abs{ \brakett{\psihist}{\gate_{x_i}}_{q_i} }^2 \geq \frac{1-\epsilon}{L} ,
	\end{equation}
	where recall $q_i$ is the index of the qudit which encodes the output corresponding to query $\pi_i$ following the verification phase.
\end{lemma}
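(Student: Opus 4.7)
The plan is to combine two facts: by Lemma~\ref{lem:props}, $\psihist$ lies in $\Null(G)$, so $\bra{\psihist}H\ket{\psihist}=\bra{\psihist}\hout\ket{\psihist}$, and by Equation~\eqref{eqn:2} in the proof of Theorem~\ref{thm:apxsim-1dline-complete} this quantity is at most $\lambda(H)+\delta+m\epsilon\gamma$; meanwhile, $\lambda(H)$ can itself be explicitly upper bounded by plugging in an optimal honest history state. Pitting these two bounds against each other forces the implicit proof inside $\psihist$ to be nearly optimal for each verifier $V_i$.

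First, I would decompose $\bra{\psihist}\hout\ket{\psihist}$ termwise. Since each sifter $\houti$ acts nontrivially only on the unique configuration $\psi_{t_i}$ in which $q_i$ carries the $\gate$ label, one obtains
\[
\bra{\psihist}\hout\ket{\psihist}=\sum_{i=1}^{m}\frac{\epsilon}{L}\bigl(1-p_i(\rho_i)\bigr),
\]
where $\rho_i$ denotes the reduced state on $Y_i$ of the initial proof encoded in $\psihist$ and $p_i(\rho_i)$ denotes $V_i$'s acceptance probability on $\rho_i$. The termwise separation uses that the verifiers act on mutually disjoint registers $(Y_i,W_i,X_i)$ during the verification phase, so the marginal output distribution on $q_i$ is determined by $\rho_i$ alone.

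Next, I would upper bound $\lambda(H)$ by plugging in an honest history state $\ket{H_{\xi^\ast}}$ built from a product initial proof $\xi^\ast=\bigotimes_i\xi_i^\ast$ where each $\xi_i^\ast$ attains $\omega_i:=\max_{\xi_i}p_i(\xi_i)$. Since $\ket{H_{\xi^\ast}}\in\Null(G)$, the same decomposition gives $\lambda(H)\leq\sum_{i}(\epsilon/L)(1-\omega_i)$. Subtracting this from the $\bra{\psihist}\hout\ket{\psihist}$ bound and using $\omega_i\geq p_i(\rho_i)$ (by maximality of $\omega_i$ and linearity of the acceptance probability in the input state) expresses the slack $\delta+m\epsilon\gamma$ as a sum of nonnegative terms; each individual term must therefore satisfy $\omega_i-p_i(\rho_i)\leq L(\delta+m\epsilon\gamma)/\epsilon$. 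For valid YES queries ($x_i=1$), completeness of $V_i$ gives $\omega_i\geq 1-p$, so $p_i(\rho_i)\geq 1-p-L(\delta+m\epsilon\gamma)/\epsilon$, and the parameter choices $\delta=\gamma=1/(256m^2L)$, $\epsilon=1/(8m)$, $p\leq 2^{-n}$ make this at least $1-\epsilon$, yielding $|\brakett{\psihist}{\gate_1}_{q_i}|^2\geq(1-\epsilon)/L$. For valid NO queries ($x_i=0$), soundness of $V_i$ gives $p_i(\rho_i)\leq p\leq\epsilon$ for \emph{any} $\rho_i$, so $|\brakett{\psihist}{\gate_0}_{q_i}|^2\geq(1-\epsilon)/L$ holds with no appeal to $\psihist$'s energy bound.

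The main obstacle is justifying the clean per-verifier decomposition of $\bra{\psihist}\hout\ket{\psihist}$ in the first step despite the elaborate labelling scheme of~\cite{HNN13} and the possibility that the initial proof carried by $\psihist$ is entangled across the $Y_i$. Concretely, one must verify that measuring $\ketbra{\gate_0}{\gate_0}$ on $q_i$ at the unique timestep $t_i$ really recovers the output-bit distribution of $V_i$ applied to the reduced proof $\rho_i$; this in turn follows from the disjointness of the workspaces assigned to each verifier in the parallelized circuit $V'$ together with the fact that exactly one legal configuration in the history has $q_i$ labelled $\gate$.
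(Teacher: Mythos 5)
Your proof is correct, but it takes a genuinely different route from the paper's. The paper argues by contradiction via an \emph{exchange argument}: supposing some single valid query $\pi_j$ has $\abs{\brakett{\psihist}{\gate_{x_j}}_{q_j}}^2 < (1-\epsilon)/L$, it first rules out the NO case (soundness makes the bound automatic), concludes $\pi_j$ must be YES, swaps out the $j$-th proof in $\psihist$ for an accepting one to get a new history state $\ket{\psi_{\rm hist}'}$, and shows the energy drops on $\houti[j]$ by at least $(\epsilon-p)\epsilon/L$ — which, since all other sifters are unaffected by the swap, contradicts Equation~\eqref{eqn:2}. Your argument instead decomposes $\bra{\psihist}\hout\ket{\psihist}=\sum_i(\epsilon/L)(1-p_i(\rho_i))$ directly, bounds $\lambda(H)\le\sum_i(\epsilon/L)(1-\omega_i)$ with a single optimal product-proof history state, and uses nonnegativity of each deficit $\omega_i-p_i(\rho_i)$ to conclude every term is small. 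Both arguments hinge on the same two facts — the parallel verifiers act on disjoint registers $(\spa{Y}_i,\spa{W}_i,\spa{X}_i)$ so each output marginal depends only on $\rho_i$, and the sifters are equally weighted — and both arrive at the same arithmetic on $\delta,\gamma,\epsilon,p$. What your version buys is a slightly more transparent ``global'' picture: the nonnegativity of each $\omega_i-p_i(\rho_i)$ is visible at once, and the parallelism is used exactly once (in the decomposition of $\hout$) rather than implicitly in the claim that swapping one proof leaves the other $\houti$ unchanged. What the paper's version buys is that it never needs to explicitly name the optimal per-verifier acceptance probabilities $\omega_i$ or write down the full decomposition; the exchange is local to index $j$. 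Your flagged obstacle (justifying the per-verifier decomposition despite possibly entangled initial proofs) is real but correctly dispatched: the same fact is needed by the paper's exchange step, which implicitly assumes the $j$-th marginal can be replaced independently.
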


\begin{proof}
	For clarity, the factor of $L^{-1}$ comes from the $L$ configurations which $\ket{\psihist}$ is a sum over.
	Recall there is a unique configuration in which any given qudit is labeled $\gate$, implying all history states $\ket{\psihist}$ satisfy
	\begin{equation}\label{eqn:1D-gate-sum}
		\abs{\brakett{\psihist}{\gate_0}_{q_i}}^2 + \abs{\brakett{\psihist}{\gate_1}_{q_i}}^2 = \frac{1}{L} .	
	\end{equation}
	We prove our claim by contradiction via an exchange argument. Suppose there exists a valid query\footnote{If all queries are invalid, then Lemma~\ref{lem:L} holds vacuously.}
	$\pi_j$ with correct answer $x_j$ such that
\[
\abs{ \brakett{\psihist}{\gate_{x_j}}_{q_j} }^2 < \frac{1-\epsilon}{L}.
\]
Since $\ket{\psihist}$ is a correctly encoded history state, we claim $\pi_j$ must be a YES-instance. For if $\pi_j$ were a NO-instance, then all simulations of $V_j$ (on any possible proof) output NO with probability at least $1-p$. Thus, $\ket{\psihist}$ always encodes an output qubit such that
\[
    \abs{\brakett{\psihist}{\gate_{0}}_{q_j}}^2 \geq \frac{1-p}{L} \geq \frac{1-\epsilon}{L},
\]
which would contradict our supposition.
	
Given that $\pi_j$ is a YES-instance, we have that $\abs{\brakett{\psihist}{\gate_{1}}_{q_j}}^2 \leq (1-\epsilon)/L$, and so by Equation~(\ref{eqn:1D-gate-sum}), $\bra{\psihist}\houti[j]\ket{\psihist} \geq \epsilon^2 /L$.
	Further, since $\pi_j$ is a YES-instance, there exists a QMA proof $\ket{\omega}$ which causes $V_j$ to output YES with probability at least $1-p$.
	By exchanging the QMA proof which $\ket{\psihist}$ encodes for circuit $V_j$ with the proof $\ket{\omega}$, we obtain a new history state $\ket{\psihist'}$ which satisfies
\[
    \abs{ \brakett{\psihist'}{\gate_{1}}_{q_j} }^2 \geq \frac{1-p}{L},
\]
and so $\bra{\psihist'}\houti[j]\ket{\psihist'} \leq p\epsilon/L$. Hence,
\begin{equation}\label{eqn:houtj-bound}
   \bra{\psihist}\houti[j]\ket{\psihist}-\bra{\psihist'}\houti[j]\ket{\psihist'}\geq \frac{(\epsilon-p)\epsilon}{L},
\end{equation}
i.e. flipping the incorrect query answer saves a non-trivial energy penalty on $\houti[j]$.

We now use this to obtain the desired contradiction. Recall that $H = G + \hout$. We make two observations:
First, because all the $\QMA$ queries are made in parallel, flipping the answer to query $\pi_j$ does not affect the other queries the $\Poly$ machine makes or the answers it receives. Thus, $\ket{\psihist}$ and $\ket{\psihist'}$ obtain the same energy on all terms of $\hout$ other than $\houti[j]$, and Equation~\eqref{eqn:houtj-bound} holds for $\hout$ in place of $\houti[j]$. (Analyzing adaptive queries, rather than parallel, would require that penalties for later queries be carefully weighted less than penalties for earlier queries~\cite{A14}, leading to a significantly more involved analysis.)
Second, both $\ket{\psihist}$ and $\ket{\psihist'}$ are null states of $G$, and so we may substitute $H$ for $\hout$, yielding
\begin{equation}\label{eqn:3}
   \bra{\psihist}H\ket{\psihist}-\bra{\psihist'}H\ket{\psihist'}\geq \frac{(\epsilon-p)\epsilon}{L}.
\end{equation}
Now, recall from Equation~(\ref{eqn:2}) that $\bra{\psihist} H \ket{\psihist} \leq \lambda(H) + \delta + m\epsilon\gamma$.
Since $\delta$ and $\gamma$ are inverse polynomials which (by Lemma~\ref{lem:props}) we are free to choose as needed (the choice of $\delta$ and $\gamma$, in turn, will mandate the choices of $\din,\dprop,\dpen$ via Lemma~\ref{lem:props}), we set $\delta=\gamma=1/(256m^2L)$ (where recall $L$ and $m$ are fixed polynomials in $n$).
These choices of $\delta,\gamma$ satisfy $\delta+m\epsilon\gamma<(\epsilon-p)\epsilon/L$, which combined with Equation~(\ref{eqn:3}) gives that $\bra{\psihist} H\ket{\psihist} > \lambda(H) + \delta + m\epsilon\gamma$, i.e. $\ket{\psihist}$ could not have been close to the ground state energy of $H$. Hence, we have a contradiction, completing the proof.
\end{proof}

Finally, we require a known quantum analogue of the union bound for commuting operators (see, e.g.~\cite{OMW19}). Generalizations to \emph{non-commuting} projectors are given in~\cite{S12,G15,OMW19}.
\begin{lemma}[Commutative Quantum Union Bound]\label{lem:unionbound}
    Let $\set{P_i}_{i=1}^m$ be a set of pairwise commuting projectors, each satisfying $0\preceq P_i\preceq I$. Then for any quantum state $\rho$,
    \[
        1-\trace(\Pi_m\cdots P_1\rho P_1\cdots \Pi_m)\leq\sum_{i=1}^m\trace((I-P_i)\rho).
    \]
\end{lemma}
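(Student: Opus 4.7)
The plan is to reduce the claim to a single operator inequality and verify it in a common eigenbasis. Since the projectors $\{P_i\}_{i=1}^m$ pairwise commute, they are simultaneously diagonalizable, and being projectors, they all have eigenvalues in $\{0,1\}$. In particular, the product $\Pi := P_m \cdots P_1 = P_1 \cdots P_m$ is itself a Hermitian projector (onto the intersection of the ranges of the $P_i$), so
\[
\trace(P_m \cdots P_1 \rho P_1 \cdots P_m) = \trace(\Pi \rho \Pi) = \trace(\Pi^2 \rho) = \trace(\Pi \rho).
\]
Thus the lemma is equivalent to $\trace((I-\Pi)\rho) \leq \sum_{i=1}^m \trace((I-P_i)\rho)$, and by linearity of trace it suffices to establish the operator inequality
\[
I - \Pi \;\preceq\; \sum_{i=1}^m (I - P_i).
\]

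First I would observe that because all $P_i$ commute, so do all the $I - P_i$ and $\Pi$, and they share a simultaneous eigenbasis $\{\ket{v}\}$. On any such common eigenvector, let $p_i \in \{0,1\}$ denote the eigenvalue of $P_i$. The eigenvalue of the left-hand side is $1 - \prod_i p_i$, and the eigenvalue of the right-hand side is $\sum_i (1 - p_i)$. A two-line case analysis finishes it: if every $p_i = 1$, both sides are $0$; otherwise at least one $p_i$ vanishes, the left-hand side equals $1$, and the right-hand side is a nonnegative integer of value at least $1$. Hence the operator inequality holds.

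Taking expectation values against $\rho$ on both sides of $I - \Pi \preceq \sum_i (I-P_i)$ yields the union bound, completing the proof. There is no serious obstacle here; the only subtlety worth flagging is that commutativity is used twice, once to collapse the two-sided sandwich $P_m \cdots P_1 \rho P_1 \cdots P_m$ into $\trace(\Pi \rho)$, and again to diagonalize all $P_i$ simultaneously so that the pointwise inequality $1 - \prod_i p_i \leq \sum_i (1 - p_i)$ (which is just the classical union bound for indicator random variables) lifts to an operator inequality. Without commutativity, neither step is available, and one must pass to the more delicate non-commutative versions proved in \cite{S12,G15,OMW19}.
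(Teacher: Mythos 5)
Your proof is correct, but it takes a genuinely different route from the paper's. You reduce the claim to the operator inequality $I - \Pi \preceq \sum_{i=1}^m (I - P_i)$, where $\Pi = P_m\cdots P_1$, and prove it by passing to a simultaneous eigenbasis, where it becomes the scalar union bound $1 - \prod_i p_i \leq \sum_i (1-p_i)$ for bits $p_i\in\{0,1\}$. The paper instead proceeds by induction on $m$: after using commutativity to collapse the sandwich to $\trace(P_m M \rho)$ with $M := P_{m-1}\cdots P_1$, it expands $1-\trace(P_m M \rho)$ via an inclusion--exclusion identity into $\trace((I-P_m)\rho) + \trace((I-M)\rho) - \trace((I-P_m)(I-M)\rho)$, drops the last term (non-negative because $(I-P_m)(I-M)$ is a projector, hence PSD, against $\rho \succeq 0$), and applies the induction hypothesis to $\trace((I-M)\rho)$. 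Both arguments lean on commutativity twice in essentially the same places — once to collapse the two-sided product and once to make the key positivity/diagonalization step work. Your version is arguably cleaner and more conceptual, making the connection to the classical union bound for indicators fully explicit; the paper's inductive version avoids any explicit appeal to simultaneous diagonalization and instead isolates, at each step, the single cross term whose positivity is the real content of the bound.
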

\noindent The simple proof of Lemma~\ref{lem:unionbound} is given in Appendix~\ref{scn:union} for completeness.


\subsection*{Acknowledgments}
We are grateful to Thomas Vidick for helpful discussions which helped initiate this work. We also thank an anonymous referee for~\cite{GY16} (written by two of the present authors) for the suggestion to think about 1D systems. SG acknowledges support from NSF grants CCF-1526189 and CCF-1617710. SP was supported by EPSRC. Part of this work was completed while JY was supported by a Virginia Commonwealth University Presidential Scholarship. JY acknowledges QIP 2019 student travel funding (NSF CCF-1840547).

\bibliographystyle{alpha}
\bibliography{GPY19_arxiv_submitted}

\appendix

\section{General simulations}
\label{sec:generalsimulations}

In this section we will give a full proof of Lemma~\ref{lem:simulations=reductions} and show that \emph{any} efficient simulation will preserve hardness of $\apptwo$, not just the special case considered in Definition~\ref{dfn:specialsim}.
To state the full definition of simulation, we must first introduce the notion of an encoding.

\begin{definition}[\cite{CMP18}]
\label{dfn:localencoding}
We say a map $\cE:\mathcal{B}(\mathcal{H})\rightarrow \mathcal{B}(\mathcal{H}')$ is an encoding if it is of the form
\[
    \cE(M)=V(M\otimes P +\overline{M} \otimes Q) V^{\dagger}
\]
where $\overline{M}$ denotes the complex conjugate of $M$, $P$ and $Q$ are orthogonal projectors (i.e. $PQ=0$) on an ancilla space $E$; and $V$ is an isometry $V: \mathcal{H}\otimes E \rightarrow \mathcal{H}'$.

When $\cH$ is a many body system with a decomposition $\cH= \bigotimes_{i=1}^n \cH_i$, we say $\cE$ is a local encoding if $E=\bigotimes_{i=1}^n E_i$ such that:
\begin{itemize}
\item $V=\bigotimes_{i=1}^n V_i$ where each $V_i$ acts on $\mathcal{H}_i \otimes E_i$.
\item for each $i$, there exist orthogonal projectors $P_{E_i}$ and $Q_{E_i}$ on $E$ which act non-trivially only on $E_i$, and satisfy $PP_{E_i}=P$ and $QQ_{E_i}=Q$.
\end{itemize}
\end{definition}

\noindent We are now ready to give the full definition of simulation.
\begin{definition}[\cite{CMP18}]
  \label{dfn:sim}
  We say that $H'$ is a $(\Delta,\eta,\epsilon)$-simulation of $H$ if there exists a local encoding $\cE(M)=V(M\otimes P + \overline{M}\otimes Q)V^{\dagger}$ such that:
  \begin{enumerate}
  \item
    There exists an isometry $\widetilde{V}: \mathcal{H}\otimes E \rightarrow \mathcal{H}'$ such that  $\|\widetilde{V} - V\| \le \eta$; and that the encoding $\widetilde{\cE}(M)=\widetilde{V}(M\otimes P + \overline{M}\otimes Q)\widetilde{V}^{\dagger}$ satisfies  $\widetilde{\cE}(I)=P_{\le \Delta(H')}$.
  \item
    $\| H'_{\le \Delta} - \widetilde{\mathcal{E}}(H)\| \le \epsilon$.
  \end{enumerate}
  We say that a family $\mathcal{F}'$ of Hamiltonians can simulate a family $\mathcal{F}$ of Hamiltonians if, for any $H \in \mathcal{F}$ and any $\eta,\epsilon >0$ and $\Delta \ge \Delta_0$ (for some $\Delta_0 > 0$), there exists $H' \in \mathcal{F}'$ such that $H'$ is a $(\Delta,\eta,\epsilon)$-simulation of $H$.
  We say that the simulation is efficient if, in addition, for $H$ acting on $n$ qudits, $\|H'\| = \poly(n,1/\eta,1/\epsilon,\Delta)$; $H'$ and $\set{V_i}$ are efficiently computable given $H$, $\Delta$, $\eta$ and $\epsilon$; and each local isometry $V_i$ in the decomposition $V=\bigotimes_i V_i$ maps to $O(1)$ qudits.
\end{definition}

\noindent We note that Definition~\ref{dfn:specialsim} is just the special case of Definition~\ref{dfn:sim} where $\cE(M)=VMV^{\dagger}$.
We are now ready to restate and prove Lemma~\ref{lem:simulations=reductions}.

\begin{replemma}{lem:simulations=reductions}[Simulations preserve hardness of $\apptwo$]
Let $\mathcal{F}$ be a family of Hamiltonians which can be efficiently simulated by another family $\mathcal{F}'$.
Then $\Fapptwo$ reduces to $\Fapptwo[F']$.
\end{replemma}

\begin{proof}
For brevity, let $P_{\le \Delta}:=P_{\le \Delta(H')}$. Let $\rho'=\proj{\psi'}$ be a state on $\cH'$ such that $\bra{\psi'}H'\ket{\psi'} \le \delta'$ and let $\widetilde{\rho}=P_{\le \Delta} \rho'P_{\le \Delta}/ \tr(P_{\le \Delta}\rho')$, so that by Lemma~\ref{lem:lowenergydistance}, we have $\| \rho' - \widetilde{\rho}\|_1 \le 2 \sqrt{\frac{\delta'}{ \Delta-\lambda(H')}}$.

Since $P_{\le \Delta}$ commutes with $H'$, we have
\begin{align}
\tr(H'\rho')&=\tr(H'P_{\le \Delta} \rho' P_{\le \Delta}) +\tr(H'(I-P_{\le \Delta})\rho' (I- P_{\le \Delta}))\noindent \\
&=p \tr(H'\widetilde{\rho}) + (1-p) \tr(H' \widetilde{\rho}^{\perp}) \ge \tr(H'\widetilde{\rho}), \label{eqn:1}
\end{align}
where $p= \tr(P_{\le \Delta}\rho')$, $\widetilde{\rho}^{\perp}= (I-P_{\le \Delta}) \rho'(I-P_{\le \Delta})/ \tr((I-P_{\le \Delta})\rho')$, and the final inequality follows because $\tr(H' \widetilde{\rho}^{\perp}) \ge \Delta \ge \tr(H' \widetilde{\rho})$.

Now let
	\[\rho= \tr_E\left(\Vt^{\dagger}\widetilde{\rho} \Vt(I \otimes P)\right) + \tr_E\left(\overline{\Vt^{\dagger}\widetilde{\rho} \Vt(I \otimes Q)}\right)\]
	and note that for any operator $A$ on $\cH$, we have
\[
\tr(\widetilde{\mathcal{E}}(A)\widetilde{\rho})
=\tr\left(\Vt (A\otimes P +\overline{A} \otimes Q) \Vt^{\dagger}\widetilde{\rho}\right)
=\tr\left(A\otimes P \Vt^{\dagger} \widetilde{\rho} \Vt\right)+\tr\left(\overline{A}\otimes Q \Vt^{\dagger} \widetilde{\rho} \Vt\right)
=\tr(A\rho).
\]
Therefore,
\[
\tr(H\rho)
=\tr(\widetilde{\cE}(H) \widetilde{\rho})
\le \tr(H'\widetilde{\rho}) + \|H'_{\le \Delta}-\widetilde{\cE}(H)\|
\le \tr(H' \rho') + \epsilon
\le \lambda(H') + \delta' + \epsilon
\le \lambda(H)+\delta' + 2\epsilon,
\]
where the second inequality follows from Equation~(\ref{eqn:1}) and the last inequality from Lemma 27 of \cite{CMP18}, which roughly states that eigenvalues are preserved up to additive error $\epsilon$ in a simulation.

At this point the proof diverges from the simpler case because $\rho$ may be a mixed state, even when $\rho'=\proj{\psi'}$ is pure. Despite having a bound on $\tr(H\rho)$, this bound may not hold for all pure states in the spectral decomposition of $\rho$. Let $\rho_{\delta}= P_{\delta} \rho P_{\delta}/ \tr(P_{\delta})$, where $P_{\delta}$ is the projector onto eigenvectors of $H$ with energy less than $\delta$. By Lemma~\ref{lem:lowenergydistance}, $\|\rho- \rho_{\delta}\|_1\le 2\sqrt{\frac{\delta'+2\epsilon}{\delta}}$.
We will use the spectral decomposition of $\rho_{\delta}= \sum_i \mu_i \proj{\phi_i}$ where the $\ket{\phi_i}$ are orthogonal states with energy $\bra{\phi_i}H\ket{\phi_i} \le \lambda(H)+\delta$ and thus, for observable $A$ given as part of of $\Fapptwo$ input,
\[
    \tr(A \rho_{\delta})= \sum_i \mu_i \bra{\phi_i}A \ket{\phi_i} \quad \left\{
\begin{array}{l} \le a \text{ in a YES instance}\\
\ge b \text{ in a NO instance.} \end{array}\right.
\]

Let $U=V\Vt^{\dagger}$ , which satisfies $U \widetilde{\cE}(A)= \cE(A)U$ for any $A$, and so $\cE(I) U \widetilde{\rho} U^{\dagger}=U \widetilde{\cE}(I) \widetilde{\rho}U^{\dagger}= U \widetilde{\rho} U^{\dagger}$.
Now we need to choose $A'$ such that $A'\cE(I)=\cE(A)$. (Two notes: First, $\cE(I)\neq I$ necessarily, as $P$ and $Q$ need not sum to identity. Second, setting $A'=\cE(A)$ is not necessarily desirable, as $P$ and $Q$ may be non-local projectors.) For example if $A=B_i \otimes I$, let $A'=V_i(B_i \otimes P_{E_i} + \overline{B_i} \otimes Q_{E_i})V_i^{\dagger}\otimes I$. We note that the locality of $A'$ depends on the number of qudits which $V_i$ maps to, which is $O(1)$ by the definition of efficient simulation.
Then
\[\tr(A\rho)=\tr\left(\widetilde{\cE}(A) \widetilde{\rho}\right)=\tr\left(\cE(A) U \widetilde{\rho} U^{\dagger} \right)=\tr(A'\cE(I)U \widetilde{\rho} U^{\dagger})=\tr(A'U \widetilde{\rho} U^{\dagger})\]
and therefore
\begin{align*}
|\tr(A' \rho')-\tr(A\rho_{\delta})| & \le |\tr(A' \rho')-\tr(A'U \widetilde{\rho} U^{\dagger})|+|\tr(A \rho)-\tr(A\rho_{\delta})|\\
&\le \|A'\| \left(\|\rho'-\widetilde\rho\|_1 + \|\widetilde{\rho}-U \widetilde{\rho} U^{\dagger} \|_1\right) + \|A\| \| \rho- \rho_{\delta}\|_1\\
&\le \|A\| \left( 2 \sqrt{\frac{\delta'}{ \Delta-\lambda(H')}}+2\eta +2\sqrt{\frac{\delta'+2\epsilon}{\delta}} \right),
\end{align*}

We note that $\|\widetilde{\rho}-U\widetilde{\rho}U^{\dagger}\|_1 \le 2 \eta$ follows from $\|U-\Vt\Vt^{\dagger}\|\le \eta$, and that $\Vt\Vt^{\dagger} \widetilde{\rho} = P_{\le \Delta} \widetilde{\rho}= \widetilde{\rho}$. Therefore we just need to choose $\Delta, \epsilon, \eta, \delta'$ such that this is less than $(b-a)/3$ and then set $a'=a+(b-a)/3 $ and $b'=b-(b-a)/3$.

\end{proof}

\section{Proof of commutative quantum union bound}\label{scn:union}
\begin{replemma}{lem:unionbound}[Commutative Quantum Union Bound]
    Let $\set{P_i}_{i=1}^m$ be a set of pairwise commuting projectors, each satisfying $0\preceq P_i\preceq I$. Then for any quantum state $\rho$,
    \[
        1-\trace(P_m\cdots P_1\rho P_1\cdots P_m)\leq\sum_{i=1}^m\trace((I-P_i)\rho).
    \]
\end{replemma}
\begin{proof}
    We proceed by induction on $m$. The case of $m=1$ is trivial. Consider $m>1$. Since the $P_i$ pairwise commute, $\trace(P_m\cdots P_1\rho P_1\cdots P_m)=\trace(P_m\cdots P_1\rho):=\trace(P_m M\rho)$ for brevity, and $M$ is a projector. Then,
    \begin{eqnarray*}
        1-\trace(P_m M\rho)&=&\trace((I-P_m)M\rho)+\trace(P_m(I-M)\rho)+\trace((I-P_m)(I- M)\rho)\\
        &=&\trace((I-P_m)\rho)+\trace((I-M)\rho)-\trace((I-P_m)(I-M)\rho)\\
        &\leq &\trace((I-P_m)\rho)+\trace((I-M)\rho),
    \end{eqnarray*}
where the second equality holds since $ \trace((I-P_m)(I- M)\rho)$ equals
\[
   \trace((I-P_m)\rho)+\trace((I-M)\rho) - \left(\trace((I-P_m)M\rho)+\trace(P_m(I-M)\rho)+\trace((I-P_m) (I-M)\rho)\right).
\]
Applying the induction hypothesis completes the proof.
\end{proof}
\end{document}